\documentclass[a4paper,11pt]{article}
\pdfoutput=1

\usepackage{jheppub}
\usepackage[T1]{fontenc}
\usepackage{amsmath,amssymb,amsthm,mathtools,xcolor}
\usepackage{tikz}
\usepackage{needspace}
\usetikzlibrary{arrows.meta,calc}
\newtheorem{colorprop}{Proposition}[section]
\newtheorem{colorlemma}{Lemma}[section]
\allowdisplaybreaks
\let\OriginalTableOfContents\tableofcontents
\renewcommand{\tableofcontents}{%
 \begingroup
 \renewcommand{\baselinestretch}{1}\small
 \OriginalTableOfContents
 \endgroup
}

\def\e{{\rm e}}
\def\R{{\cal R}}
\newcommand{\N}{{\cal N}}
\newcommand{\zh}[1]{\zeta_{q^2}(#1)}
\newcommand{\Ezero}{\widehat E_0}
\newcommand{\Ehat}{\widehat E}
\newcommand{\Pq}{\mathcal{P}_q}
\newcommand{\wt}{\operatorname{wt}}

\title{A discrete analogue of $\mathcal N=2$ super Weil--Petersson volumes}

\author[a]{Wasif Ahmed}
\author[b,c]{Joydeep Naskar}

\affiliation[a]{Physics Department, Broida Hall, University of California, Santa Barbara, CA 93106, USA}
\affiliation[b]{Beijing Institute of Mathematical Sciences and Applications, Huaibei Town, Huairou District, Beijing 101408, China}
\affiliation[c]{Kavli Institute for Theoretical Physics, Santa Barbara, CA 93106, USA}

\emailAdd{wasif@ucsb.edu}
\emailAdd{joydeepnaskar@bimsa.cn}

\abstract{
We introduce a discrete spectral curve from $\mathcal N=2$ double-scaled SYK whose topological recursion defines quasi-polynomial volumes in integer boundary lengths. The curve has two edge theories: a local $\mathcal N=2$ theory at the lower edge and a bosonic theory at the upper edge. Focusing on the local $\mathcal N=2$ theory, we compute its finite-$q$ volumes and independently check the first few using the string equation. In the $q\to1$ continuum limit, these recover the super Weil--Petersson volumes of Turiaci and Witten~\cite{Turiaci:2023jfa}. Finally, we give an all-genus intersection-theoretic interpretation of these local volumes on the moduli space of stable marked curves, in terms of a deformed spin class with a gap-dependent $\kappa$-class dressing.
}
\begin{document}
\maketitle
\clearpage
\flushbottom

%-------------------------------------------------
\section{Introduction}
\label{sec:intro}

Jackiw--Teitelboim (JT) gravity is one of the simplest models of quantum gravity in which the gravitational amplitudes can be computed using the geometry of moduli space~\cite{Jackiw:1984je,Teitelboim:1983ux,Saad:2019lba,Witten:2020volumes}. At fixed topology, the calculation involves the Weil--Petersson (WP) volumes of hyperbolic surfaces with prescribed boundary lengths. These volumes are polynomials in the squared lengths, and they satisfy Mirzakhani's recursion~\cite{Mirzakhani:2006fta,Mirzakhani:2006inter}. Their coefficients can also be written as intersection numbers on moduli space~\cite{Witten:1990hr,Kontsevich:1992ti}, so the same quantities can be studied using gravitational path integrals, recursion relations or intersection theory.

The matrix-integral description of JT gravity relates these volumes to a spectral problem. The disk density of states is encoded in a spectral curve, and once the two-boundary input is specified, Eynard--Orantin recursion computes all the higher amplitudes from it~\cite{Eynard:2007fi,EynardOrantin:2007wp}. This prescription does not require the density to come from JT gravity, so we can equally well use it for a density obtained from a microscopic model. In particular, it is interesting to understand the volume theory associated with the full microscopic density, before taking the limit that gives JT gravity.

For the Sachdev--Ye--Kitaev (SYK) model in its double-scaled limit (DSSYK), such a construction was carried out by Okuyama~\cite{Okuyama:2023byh}. There, the exact bosonic density gives a discrete spectral curve whose volumes are quasi-polynomials in integer boundary lengths, i.e., there is one polynomial for each choice of boundary parities. Discrete volumes of this type also appear in lattice-point counts on moduli space~\cite{Norbury:2008,DoNorbury:2011}. If we keep only the leading local terms of the SYK construction, we obtain the $q$-deformed volumes of Do and Norbury~\cite{DoNorbury:qrecursion,DoNorbury:DSSYK}, which reduce to the ordinary WP volumes after a further $q\to1$ scaling. At finite $q$, however, the global curve retains both endpoints of the energy band, and it defines a discrete theory that goes beyond this local limit.

Supersymmetric JT gravity has its own volume theories. For $\mathcal N=1$, the recursion and its relation to spin geometry have been studied extensively~\cite{Stanford:2019vob,Johnson:2020heh,Norbury:2017class,Norbury:2020enumerative,Norbury:2023super,AlexandrovNorbury:2024}. The $\mathcal N=2$ case is richer, since the positive-energy states come in supermultiplets labeled by a charge, with thresholds that depend on the charge, and there are also BPS states at zero energy. Turiaci and Witten derived the corresponding recursion for the super WP volumes~\cite{Turiaci:2023jfa}, and matrix-model and string-equation approaches to these volumes were developed in~\cite{Johnson:2023ofr,Johnson:2025power,Ahmed:2025extended}. Since the exact density of $\mathcal N=2$ DSSYK is known~\cite{Berkooz:2020uly}, we can now ask for a finite-$q$ extension of this construction.

In this work, we construct an $\mathcal N=2$ discrete spectral curve and compute its quasi-polynomial volumes. We start from the SYK Hamiltonian and use the averaged moments to obtain the density, and then its resolvent. A sector of fixed state charge receives positive-energy contributions from two adjacent values of the average multiplet charge, so we separate these contributions and work at one fixed average charge. The resulting density leads us to the global curve~\eqref{eq:global-curve}, which, for generic charge, has a pole at zero energy, away from the continuum band. In this way, the global curve keeps the charge-dependent threshold as well as both ends of the band.

Applying topological recursion to this curve, we obtain the discrete volumes~\eqref{eq:discrete-def}, and we prove that each parity polynomial has total degree at most $3g-3+n$ in the squared boundary lengths, where $g$ is the genus and $n$ is the number of boundaries. We also reconstruct a matrix potential for the disk density, and relate the residue of the zero-energy pole at generic charge to the weight of BPS states at a shifted charge, in the range of charges where this interpretation applies. We emphasize that only the disk input is fixed by the microscopic density. The two-boundary input and the recursion define the matrix-model extension that we consider here, and its agreement with the microscopic SYK correlators beyond the disk is an open question, which we discuss in Section~\ref{sec:discussion}.

Next, we study the two edge theories of the global curve. The upper edge gives a bosonic local theory, while the lower edge retains the $\mathcal N=2$ pole. We will focus on the latter, and compute its finite-$q$ local volumes together with their dependence on the gap. In a correlated $q\to1$ limit, in which the gap scales as $(1-q)^2$, these become the $\mathcal N=2$ super WP volumes of Turiaci and Witten~\cite{Turiaci:2023jfa}. We check the first few volumes independently using the string equation, and give an all-genus intersection-theoretic interpretation of them on the moduli space of stable marked curves. When the gap closes, the density develops a hard edge, and the intersection formula is then written in terms of Norbury's $\Theta$-class. Whether the finite-$q$ volumes can also be understood as a measure on a space of $\mathcal N=2$ super Riemann surfaces is an interesting open question.

The continuum limit of the global curve needs some care. The $\mathcal N=2$ and bosonic JT disk curves both follow directly by scaling near the respective endpoints. For the volumes, however, we must also take every recursion residue at the chosen endpoint, and only then do the lower volumes give the Turiaci--Witten theory~\cite{Turiaci:2023jfa} and the upper volumes give the ordinary WP volumes. The full amplitudes can contain mixed-endpoint terms that survive this limit, so in general they cannot be identified with either local theory. We also take $q\to0$ in the two edge theories at fixed gap. In this limit, the lower edge becomes an Airy--Bessel curve, whose intersection class is the deformed negative $2$-spin class, and the upper edge becomes the Airy curve. We summarize these relations in Figure~\ref{fig:curve-limits}.

\begin{figure}[htbp]
\centering
\begin{tikzpicture}[
  font=\small,
  line/.style={-{Stealth[length=2mm]},semithick,draw=black!70},
  block/.style={draw=black!60,rounded corners=2pt,align=center,inner sep=6pt},
  edge/.style={block,fill=black!3,text width=5.45cm,minimum height=1.1cm},
  result/.style={block,text width=2.75cm,minimum height=1.85cm,font=\footnotesize},
  tag/.style={font=\footnotesize,fill=white,inner sep=2pt}
]
\node[block,fill=black!7,text width=4.6cm,minimum height=1.35cm] (full) at (0,0)
  {\textbf{Global curve},\\$\mathcal N=2$ SYK multiplet,\\Discrete volumes $N^q_{g,n}$};

\node[edge] (lower) at (-3.5,-2.45)
  {\textbf{Lower edge: $z=+1$},\\Finite-$q$ $\mathcal N=2$ local theory};
\node[edge] (upper) at (3.5,-2.45)
  {\textbf{Upper edge: $z=-1$},\\Finite-$q$ bosonic local theory};
\coordinate (fork) at (0,-1.38);
\draw[semithick,draw=black!70] (full.south) -- node[tag,right] {edge expansion} (fork);
\draw[line] (fork) -| (lower.north);
\draw[line] (fork) -| (upper.north);

\node[result] (lowerone) at (-5.2,-5.0)
  {$\mathcal N=2$ JT disk,\\Turiaci--Witten\\super WP volumes};
\node[result] (lowerzero) at (-1.8,-5.0)
  {Airy--Bessel curve,\\Deformed negative\\$2$-spin class,\\Bessel at zero gap};
\node[result] (upperone) at (1.8,-5.0)
  {Bosonic JT disk,\\Ordinary WP volumes};
\node[result] (upperzero) at (5.2,-5.0)
  {Airy curve,\\$\psi$-class\\intersections};
\draw[line] (lower.south) -- ++(0,-0.42) -| node[tag,pos=0.8,right] {$q\to1$} (lowerone.north);
\draw[line] (lower.south) -- ++(0,-0.42) -| node[tag,pos=0.8,right] {$q\to0$} (lowerzero.north);
\draw[line] (upper.south) -- ++(0,-0.42) -| node[tag,pos=0.8,right] {$q\to1$} (upperone.north);
\draw[line] (upper.south) -- ++(0,-0.42) -| node[tag,pos=0.8,right] {$q\to0$} (upperzero.north);
\end{tikzpicture}
\caption{The global curve~\eqref{eq:global-curve}, its discrete volumes~\eqref{eq:discrete-def} and the two edge theories. For the volumes, each edge branch means that we retain the leading graded terms and take \emph{every} recursion residue at that endpoint. The $q\to1$ arrows use $\Ezero=(1-q)^2E_0$ and the edge normalizations~\eqref{eq:full-edge-JT}, and both JT disk limits can also be obtained directly from the global curve.}
\label{fig:curve-limits}
\end{figure}

\textbf{Organization}: The paper is organized as follows. In Section~\ref{sec:global}, we derive the global curve from SYK, and in Section~\ref{sec:globalTR} we compute its discrete volumes. We study the two edges and their limits in Sections~\ref{sec:edges} and~\ref{sec:localvol}. In Section~\ref{sec:ODE}, we compute the local volumes again using the string equation, in Section~\ref{sec:BPS} we relate the pole of the curve to the BPS states, and in Section~\ref{sec:intersection} we give the intersection formula. We conclude with a discussion of future directions in Section~\ref{sec:discussion}.

\textbf{Reader's guide}: Readers interested mainly in the physics may read Sections~\ref{sec:global}--\ref{sec:BPS} and skip Section~\ref{sec:intersection}, while readers interested mainly in the mathematics may start from the spectral data~\eqref{eq:global-curve} and read Sections~\ref{sec:globalTR}--\ref{sec:localvol} and~\ref{sec:intersection}, skipping the SYK, string-equation and BPS derivations.

%-------------------------------------------------
\section{From the SYK Hamiltonian to the spectral curve}
\label{sec:global}

In this section, we derive the spectral curve from the microscopic model. We will first compute the disorder-averaged moments of the Hamiltonian at fixed charge, then read off the density of states from these moments, and finally continue the density away from the energy band to obtain the curve.

\subsection{Hamiltonian and fixed-charge moments}
\label{subsec:SYK-main}

Let us first recall the model. The $\mathcal N=2$ SYK model~\cite{Fu:2016vas} consists of $N$ complex fermions $\chi_i$, with $\bar\chi_i=\chi_i^\dagger$ and $\{\chi_i,\bar\chi_j\}=\delta_{ij}$. For odd $p$, we write $\chi_{\mathbf j}=\chi_{j_1}\cdots\chi_{j_p}$ for a multi-index $\mathbf j=(j_1,\ldots,j_p)$ with $j_1<\cdots<j_p$. The supercharge and the Hamiltonian, together with the statistics of the independent complex Gaussian couplings $C_{\mathbf j}$, are given by
\begin{equation}
\begin{gathered}
 Q=\sum_{\mathbf j} C_{\mathbf j}\chi_{\mathbf j},\qquad H=\frac12\{Q,Q^\dagger\},\\
 \langle C_{\mathbf j}\rangle=0,\qquad
 \langle C_{\mathbf j}C_{\mathbf j'}^*\rangle=\frac{2^p}{\binom{N}{p}}\,\delta_{\mathbf j\mathbf j'}.
\end{gathered}
\label{eq:syk-model}
\end{equation}
We normalize the trace by $\operatorname{tr}1=1$, so that $\langle\operatorname{tr}H\rangle=1$. We will label the states by the eigenvalue $s$ of the charge $\gamma=p^{-1}\sum_i(\bar\chi_i\chi_i-\tfrac12)$, which obeys $[\gamma,Q^\dagger]=Q^\dagger$, and work in the double-scaled limit
\begin{equation}
 N,p\longrightarrow\infty,\qquad
 \lambda=\frac{2p^2}{N}\quad\text{fixed},\qquad q=\e^{-\lambda}.
 \label{eq:DSSYK-limit}
\end{equation}

Since $Q^2=(Q^\dagger)^2=0$, only alternating words appear in the positive powers of the Hamiltonian,
\begin{equation}
 H^k=2^{-k}\big[(QQ^\dagger)^k+(Q^\dagger Q)^k\big].
\label{eq:syk-alternating-words}
\end{equation}
The Gaussian average pairs each $Q$ with a $Q^\dagger$, which gives the oriented chord diagrams of~\cite{Berkooz:2020uly}. After the charge projection, the sum over chord diagrams becomes a transfer-matrix problem, whose eigenfunctions are continuous $q$-Hermite polynomials with base $q^2$ and angular measure
\begin{equation}
 \mathfrak w_q(\varphi)=(q^2,\e^{2i\varphi},\e^{-2i\varphi};q^2)_\infty,
 \qquad \int_0^\pi\frac{d\varphi}{2\pi}\,\mathfrak w_q(\varphi)=1.
 \label{eq:main-chord-measure}
\end{equation}
Here $(v_1,\ldots,v_m;q)_\infty=\prod_{j=1}^{m}\prod_{n\geq0}(1-v_jq^n)$ is the $q$-Pochhammer symbol. There are two branches of positive-energy states, with average multiplet charges $r=s\pm\tfrac12$, and their dispersion relation is
\begin{equation}
 E_r(\varphi)=q^{-1/4}\big[\cosh(\lambda r)-\cos\varphi\big],
 \qquad 0<\varphi<\pi.
 \label{eq:dispersion}
\end{equation}
Let $\rho(E;s)$ be the disorder-averaged density of states per unit energy and charge, in the normalized trace. Its moments $m_k(s)=\int dE\,E^k\rho(E;s)$ for $k\geq1$ are given by
\begin{equation}
 m_k(s)=\frac{\sqrt\lambda}{4\pi^{3/2}}
 \sum_{\pm}q^{(s\pm1/2)^2}
 \int_0^\pi d\varphi\,\mathfrak w_q(\varphi)E_{s\pm1/2}(\varphi)^{k-1}.
\label{eq:syk-energy-moments}
\end{equation}
Note that the energy appears with the power $E_r^{k-1}$, rather than $E_r^k$. This is because the squared norm of a supercharge-exact transfer-matrix eigenstate is proportional to its eigenvalue, so the spectral resolution supplies an inverse power of the energy.\footnote{Here $\mathsf Q$ and $\mathsf Q^\dagger$ are the chord operators representing $Q$ and $Q^\dagger$ on open-chord states, and $\mathsf T=\{\mathsf Q,\mathsf Q^\dagger\}$ is the transfer matrix. For an eigenvector with $\mathsf T\Psi=\mathcal E\Psi$ and $\mathsf Q^\dagger\Psi=0$, we have $\|\mathsf Q\Psi\|^2=\mathcal E\|\Psi\|^2$. The transfer matrix, its $q$-Hermite measure and the vacuum overlap are derived in Appendix~\ref{app:SYK-transfer}.} As we will see, this is the origin of the $1/E$ factor that distinguishes the supersymmetric curve from the bosonic one. The chord calculation and the normalization checks can be found in Appendix~\ref{app:SYK}.

\subsection{The density of one supermultiplet}

We can now read off the continuum density from the moments, by moving the extra factor of $1/E$ into the measure. Since a possible contribution at $E=0$ is invisible to all the positive moments, we have to allow for it separately, and write
\begin{equation}
\begin{aligned}
 \rho(E;s)&=\rho_{s-1/2}(E)+\rho_{s+1/2}(E)+\rho_{\rm BPS}(s)\delta(E),\\
 \rho_r(E)&=\frac{\sqrt\lambda}{4\pi^{3/2}}q^{r^2}
 \int_0^\pi d\varphi\,\mathfrak w_q(\varphi)
 \frac{\delta(E-E_r(\varphi))}{E}.
\end{aligned}
\label{eq:syk-sector-density}
\end{equation}
The BPS weight $\rho_{\rm BPS}(s)$ is fixed by the zeroth moment, $\int dE\,\rho(E;s)=\sqrt{\lambda/\pi}\,q^{s^2}$, and we will discuss it in Section~\ref{sec:BPS}. Using $dE_r/d\varphi=q^{-1/4}\sin\varphi$, the density of one branch becomes
\begin{equation}
 \rho_r(E)=\frac{\sqrt\lambda\,q^{1/4}}{4\pi^{3/2}}
 \frac{q^{r^2}}{E}\frac{\mathfrak w_q(\varphi)}{\sin\varphi}.
\label{eq:syk-multiplet-density}
\end{equation}
This density counts one component of a positive-energy multiplet in each of the adjacent sectors $r\pm\tfrac12$, and the two components have the same density, as required by supersymmetry. So a generic sector of fixed $s$ contains two bands with four distinct endpoints, and the two bands coincide only at $s=0$. Note, however, that this decomposition of the one-point function does not by itself imply that the microscopic charge spectra are statistically independent (see also Section~\ref{sec:discussion}).

It is useful to note that the Gaussian factor $q^{r^2}$ and the threshold in~\eqref{eq:syk-energy-moments} carry the same charge shift. In Appendix~\ref{app:SYK-density}, we check this pairing and the prefactor against the first two moments of the Hamiltonian, and in doing so we correct the crossed pairing in Eq.~(5.18) of~\cite{Berkooz:2020uly}.

\subsection{From the density to the global curve}
\label{subsec:global-curve}

We now fix the multiplet charge $r$ and define
\begin{equation}
 c=\cosh(\lambda r),\qquad
 \Ezero=\frac{c-1}{2}=\sinh^2\!\frac{\lambda r}{2},
 \label{eq:gap-from-charge}
\end{equation}
and introduce the variable $z=\e^{i\varphi}$ together with the affine energy coordinate
\begin{equation}
 x(z)=z+z^{-1},\qquad
 \Ehat=\frac{q^{1/4}E}{2}=\frac{2c-x}{4}.
 \label{eq:global-x}
\end{equation}
In these variables the cut is at $x\in[-2,2]$, or equivalently $\Ehat\in[\Ezero,1+\Ezero]$, so that $\Ezero$ is the threshold. Splitting off the zero-index Pochhammer factors and using the identity
\begin{equation}
 (1-q^{2n}z^2)(1-q^{2n}z^{-2})
 =(1-q^{2n})^2-q^{2n}(z-z^{-1})^2
 \label{eq:product-factorization}
\end{equation}
gives
\begin{equation}
 \frac{\mathfrak w_q(\varphi)}{\sin\varphi}
 =4A_q^3\sin\varphi\,\Pq(z),\qquad A_q=(q^2;q^2)_\infty,
 \label{eq:poch-product}
\end{equation}
where
\begin{equation}
 \Pq(z)=\prod_{n\geq1}
 \left[1-\frac{q^{2n}(z-z^{-1})^2}{(1-q^{2n})^2}\right].
 \label{eq:Pq-global}
\end{equation}
Since $\sin\varphi=2\sqrt{(\Ehat-\Ezero)(1+\Ezero-\Ehat)}$, the density~\eqref{eq:syk-multiplet-density} becomes
\begin{equation}
\begin{aligned}
 \rho_r(E)\,dE&=\mathcal W_r\,\rho_0^{\rm glob}(\Ehat)\,d\Ehat,\\
 \mathcal W_r&=4\hbar\sqrt{\frac\lambda\pi}\,
 q^{r^2+1/4}A_q^3,\\
 \rho_0^{\rm glob}(\Ehat)
 &=\frac{\sqrt{(\Ehat-\Ezero)(1+\Ezero-\Ehat)}}{2\pi\hbar\Ehat}
   \Pq(\e^{i\varphi}).
\end{aligned}
\label{eq:syk-global-density}
\end{equation}
Here $\hbar$ is the genus-counting parameter of the matrix integral~\eqref{eq:pot-matrix-integral}, so that an amplitude of genus $g$ with $n$ boundaries is of order $\hbar^{2g-2+n}$. In~\eqref{eq:syk-global-density}, we have separated the charge-dependent weight $\mathcal W_r$ from the shape of the density, and removing this weight fixes the disk normalization that we use below. The factors of $\hbar$ in $\mathcal W_r$ and $\rho_0^{\rm glob}$ cancel in the microscopic measure $\rho_r(E)\,dE$.

The last step is to encode the density in the discontinuity of a resolvent, following~\cite[Sections 2--3]{Okuyama:2023byh}. In the matrix coordinate, the density is $\varrho_{q,c}(x)=\hbar\rho_0^{\rm glob}((2c-x)/4)/4$. Let $\mathcal V_{q,c}$ be the matrix potential, and define
\begin{equation}
 R_{q,c}(x)=\int_{-2}^{2}\frac{\varrho_{q,c}(t)}{x-t}\,dt,
 \qquad
 y(x)=\frac12\mathcal V'_{q,c}(x)-R_{q,c}(x).
 \label{eq:main-resolvent-y}
\end{equation}
The saddle-point equation $\mathcal V'_{q,c}=R_{q,c,+}+R_{q,c,-}$, where $R_{q,c,\pm}(x)=R_{q,c}(x\pm i0)$, then gives $y(x+i0)=i\pi\varrho_{q,c}(x)$. We reconstruct the potential $\mathcal V_{q,c}$ itself in Appendix~\ref{app:potential}. If we continue the square root with $\sqrt{x^2-4}\sim x$ at infinity, the Joukowsky map~\eqref{eq:global-x} gives $\sqrt{x^2-4}=z-z^{-1}$, and we arrive at
\begin{equation}
 \boxed{x(z)=z+z^{-1},\qquad
 y(z)=\frac18\frac{z-z^{-1}}{2c-x(z)}\Pq(z),\qquad
 B(z_1,z_2)=\frac{dz_1dz_2}{(z_1-z_2)^2}}.
 \label{eq:global-curve}
\end{equation}
We will refer to~\eqref{eq:global-curve} as the global curve. It keeps both ends of the energy band, and it is the input for the recursion in the next section, with the Bergman kernel $B$ supplying the two-boundary input. The coefficient $1/8$ is fixed by the energy Jacobian.

The map $z\mapsto z^{-1}$ exchanges the two sheets. Since the energy decreases as $x$ increases, $z=+1$ is the lower endpoint of the energy band and $z=-1$ is the upper one. Compared with the bosonic curve, we have replaced $q$ by $q^2$ and changed the disk normalization, and in addition we have acquired the factor $(2c-x)^{-1}$, which is nothing but the original $1/E$. For $c>1$, its pole lies at zero physical energy,
\begin{equation}
 x=2c,\qquad z_\ast^\pm=c\pm\sqrt{c^2-1}.
 \label{eq:global-pole-location}
\end{equation}
For generic charge, this remains a pole of $y$. There are two exceptions. At nonzero integer $|r|=n$, the $n$th factor of $\Pq$ cancels the pole on both sheets, while at $r=0$ the pole meets the lower endpoint and produces a hard edge. We treat these exceptions, as well as the BPS residue, in Appendix~\ref{app:SYK-BPS}.

For the recursion near either endpoint, it will be convenient to write
\begin{equation}
 \Pq(z)=\e^{-\Phi_q(z)},\qquad
 \Phi_q(z)=\sum_{m\geq1}\frac{Z_m}{m}(z-z^{-1})^{2m},
 \label{eq:Phi-global}
\end{equation}
with
\begin{equation}
 Z_m=\zh{2m}=\sum_{n\geq1}\frac{q^{2nm}}{(1-q^{2n})^{2m}}.
 \label{eq:qzeta-intro}
\end{equation}
The series converges for $|z-z^{-1}|<2\sinh\lambda$, and the product gives its continuation outside this region. This is enough for our purposes, since at each topology the recursion uses only finitely many of these coefficients. In the next section, we construct the discrete volumes from these global spectral data.

%-------------------------------------------------
\section{From the curve to discrete volumes}
\label{sec:globalTR}
%-------------------------------------------------

\subsection{The recursion and its length expansion}

With the disk and two-boundary data in hand, we can now define the higher amplitudes by Eynard--Orantin recursion~\cite{Eynard:2007fi,Eynard:2014zxa}, where each step takes residues at the two endpoints of~\eqref{eq:global-curve}. For $n\geq1$ and $2g-2+n>0$, the recursion in our sign convention reads
\begin{equation}
\begin{split}
 \omega_{g,n}(z_1,J)=\sum_{\pm}\mathop{\rm Res}_{z\to\pm1}
 \big(-K_{\rm glob}(z_1,z)\big)
 \Big[&\omega_{g-1,n+1}(z,z^{-1},J)\\
 &+\sum_{g_1+g_2=g}^{\prime}
 \sum_{I\sqcup I^c=J}
 \omega_{g_1,|I|+1}(z,I)
 \omega_{g_2,|I^c|+1}(z^{-1},I^c)
 \Big],
\end{split}
\label{eq:global-TR}
\end{equation}
where $J=(z_2,\ldots,z_n)$ and $\omega_{0,2}=B$, and the prime means that we exclude the $(0,1)$ factors but keep the Bergman-kernel contributions $(0,2)$. The recursion kernel is given by
\begin{equation*}
 K_{\rm glob}(z_1,z)=\frac{\int_{z^{-1}}^{z}B(z_1,\cdot)}{2\,[y(z)-y(z^{-1})]\,dx(z)}.
\end{equation*}
Since $y(z^{-1})=-y(z)$, we can write it more explicitly as
\begin{equation}
 K_{\rm glob}(z_1,z)=
 \frac{2\,[2c-x(z)]\,z^3\e^{\Phi_q(z)}}
 {(z^2-1)(z_1-z)(z_1z-1)}\frac{dz_1}{dz}.
\label{eq:global-kernel}
\end{equation}
The kernel vanishes at a genuine pole of $y$ away from the cut. Using this, one can show by induction on $2g-2+n$, starting from the Bergman kernel, that every stable correlator is rational in each external variable, regular at the origin, and has poles only at $z_i=\pm1$.

These two poles are responsible for the parity dependence of the volumes, since the pole at $-1$ contributes an alternating sign to the Taylor coefficients. More precisely, inversion symmetry gives a basis of differentials $\sum_{b\geq1}(\pm1)^b b^{2k+1}z^{b-1}dz$ with $k\geq0$, as we show in Appendix~\ref{app:coloring-parity}. The integer $b$ will become a boundary length $b_i$ below, and after dividing out one factor of $b_i$ for each boundary, the remaining dependence is polynomial in the $b_i^2$.

We therefore define the discrete volumes $N^q_{g,n}$ by
\begin{equation}
 \omega_{g,n}(z_1,\ldots,z_n)
 =\sum_{b_1,\ldots,b_n\geq1}
 N^q_{g,n}(b_1,\ldots,b_n)
 \prod_{i=1}^{n}b_i z_i^{b_i-1}dz_i.
\label{eq:discrete-def}
\end{equation}
This is Okuyama's discrete-length expansion~\cite{Okuyama:2023byh}, which we take around $z_i=0$; it is related by inversion to the physical sheet $|z|>1$ of Appendix~\ref{app:potential-Joukowsky}. For each fixed parity pattern, $N^q_{g,n}$ is a polynomial of total degree at most $3g-3+n$ in the $b_i^2$, with coefficients in
\begin{equation}
 \mathbb{Q}\big[c,Z_1,Z_2,\ldots,Z_{3g-3+n}\big].
\label{eq:ring}
\end{equation}
The full quasi-polynomial is invariant under permutations of the boundaries, while a polynomial for a fixed parity pattern is symmetric only under the permutations that preserve that pattern. We prove the total-degree bound, and the finite range of $Z_m$ in~\eqref{eq:ring}, in Appendix~\ref{app:global-degree}.

The discrete volumes are called quasi-polynomials for this reason, i.e., one polynomial suffices for each parity pattern of the boundaries, but not for all patterns at once. Such objects also appear in the lattice-point models of moduli space~\cite{Norbury:2008,DoNorbury:2011} and in bosonic DSSYK~\cite{Okuyama:2023byh,DoNorbury:DSSYK}. As the first examples will show, the supersymmetric pole changes which parity sectors contribute.

\subsection{The first examples and the two endpoints}
\label{subsec:edge-split}

At $(g,n)=(1,1)$, we find
\begin{equation}
 N^q_{1,1}(b)=
 \begin{cases}
 4c\left[\dfrac{b^2-4}{24}+Z_1\right],& b\ \text{even},\\[6pt]
 -4\left[\dfrac{b^2-1}{24}+Z_1\right],& b\ \text{odd}.
 \end{cases}
\label{eq:N11}
\end{equation}
Here both parities contribute. The reason can be seen from the kernel~\eqref{eq:global-kernel}, since $2c-x(z)=2c-z-z^{-1}$ has a part that is even under $z\mapsto-z$ and a part that is odd under it. The even part preserves the parity selection of the bosonic curve, while the odd part flips it. For the bosonic curve, the two edges are equivalent, and only one of these sectors is allowed~\cite{Okuyama:2023byh,DoNorbury:DSSYK}. In our case, however, the two edges carry different weights, and both sectors survive.

For three boundaries, the result takes a particularly simple form,
\begin{equation}
 N^q_{0,3}(b_1,b_2,b_3)=
 \begin{cases}
 8c,& b_1+b_2+b_3\ \text{even},\\
 -8,& b_1+b_2+b_3\ \text{odd}.
 \end{cases}
\label{eq:N03}
\end{equation}
For four boundaries, let $|\mathbf b|^2=\sum_{i=1}^{4}b_i^2$ and let $n_{\rm even}$ be the number of even boundary lengths. Then we find
\begin{equation}
 N^q_{0,4}=
 \begin{cases}
 16[(c^2+1)(|\mathbf b|^2+24Z_1)-4c^2+2],&n_{\rm even}=0,4,\\
 -32c(|\mathbf b|^2+24Z_1-1),&n_{\rm even}=1,3,\\
 16[(c^2+1)(|\mathbf b|^2+24Z_1)-2c^2],&n_{\rm even}=2.
 \end{cases}
\label{eq:N04}
\end{equation}
The $(1,2)$ result, which we give in Appendix~\ref{app:globaldata}, contains all four parity sectors and depends on $Z_1$ and $Z_2$. The two parity polynomials at $(2,1)$, which are quartic in $b^2$ with coefficients in $\mathbb{Q}[c,Z_1,Z_2,Z_3,Z_4]$, are also collected there.

In order to see the two local theories separately, we will keep track of the endpoint used at each step of the recursion, labeling a residue at $+1$ by $+$ and a residue at $-1$ by $-$. This is the colored-recursion description of~\cite{Eynard:2007fi,Eynard:2014zxa}, where the colors simply record the choices of endpoint. The full answer is a sum over all choices, and if we use one endpoint throughout, we get a \emph{monochromatic} contribution.

Already at $(1,1)$, we can keep the two residues separate. Writing $N^{(+)}$ for the contribution from $z=+1$ and $N^{(-)}$ for that from $z=-1$, we find
\begin{equation}
 N^{(+)}_{1,1}(b)=-\frac14+\Ezero
 \left[\frac{b^2-4}{6}+4Z_1\right],
\label{eq:N11plus}
\end{equation}
\begin{equation}
 N^{(-)}_{1,1}(b)=2(-1)^b\left[
 \frac{b^2}{12}+2Z_1+\Ezero\left(\frac{b^2-4}{12}+2Z_1\right)-\frac{5}{24}
 \right].
\label{eq:N11minus}
\end{equation}
Their sum gives back~\eqref{eq:N11}. As we will see in the next section, the two pieces have very different edge limits.

%-------------------------------------------------
\section{Two inequivalent edge theories}
\label{sec:edges}
%-------------------------------------------------

The factor of $1/E$ already suggests that the two edges are different, since the upper edge stays away from zero energy, while the lower edge can approach the pole. To make this precise, we will keep only the leading terms in an expansion near each edge. For this purpose, we first treat the $q$-zeta values $Z_m$ as independent parameters and assign the weights

\begin{equation}
 \wt(b_i)=1,\qquad \wt(Z_m)=2m,\qquad \wt(\Ezero)=-2,
\label{eq:grading}
\end{equation}
which match the scalings $b_i\sim(1-q)^{-1}$, $Z_m\sim(1-q)^{-2m}$ and $\Ezero\sim(1-q)^2$ in the limit of Section~\ref{sec:localvol}. By the \emph{top} part of a quantity, we mean its part of highest total weight in all three types of variables, which is more than just the highest power of the boundary lengths.

\subsection{The upper edge: a bosonic theory}

Let us start with the upper edge. Near $z=-1$, we define
\begin{equation}
 \tilde w^2=-\frac{(z+1)^2}{4z}.
\label{eq:upper-u}
\end{equation}
In terms of $\tilde w$, we have the exact relations
\begin{equation}
 x=-2-4\tilde w^2,
 \qquad
 (z-z^{-1})^2=16\tilde w^2(1+\tilde w^2).
\label{eq:upper-exact}
\end{equation}
The physical energy is then
\begin{equation}
 \Ehat=1+\Ezero+\tilde w^2.
\end{equation}
At $\tilde w=0$, the pole factor is $2c-x=4(1+\Ezero)$, which is regular. If we expand its inverse, we get a constant followed by powers of $\tilde w^2$ and $\Ezero$, and every nonconstant term has a lower weight. Similarly,
\begin{equation}
 (z-z^{-1})^{2m}=(4\tilde w)^{2m}(1+\tilde w^2)^m=(4\tilde w)^{2m}+\text{lower graded terms}.
\end{equation}
So if we keep only the terms of highest weight, we get a quadratic Airy coordinate and
\begin{equation}
 y^{\rm top}\propto y_{\rm up}(\tilde w)=
 \tilde w\exp\!\left[-\sum_{m\geq1}\frac{Z_m}{m}(4\tilde w)^{2m}\right].
\label{eq:upper-top}
\end{equation}
After a constant rescaling, this is the bosonic $q$-deformed top curve of Do and Norbury~\cite{DoNorbury:qrecursion,DoNorbury:DSSYK}, with the replacement
\begin{equation}
 \zeta_q(2m)\longrightarrow4^mZ_m.
\label{eq:upper-substitution}
\end{equation}
Here $\zeta_q(2m)=\sum_{k\geq1}q^{km}/(1-q^k)^{2m}$ are the bosonic $q$-zeta values. We write $V^{q,\mathrm{bos}}_{g,n}$ for the bosonic local volumes of Do and Norbury in these variables, and define their upper-edge specialization by
\begin{equation*}
 V^{q,\mathrm{up}}_{g,n}(\mathbf b)
 :=\left.V^{q,\mathrm{bos}}_{g,n}(\mathbf b)
   \right|_{\zeta_q(2m)\mapsto4^mZ_m}.
\end{equation*}
The all-upper recursion gives $V^{q,\mathrm{up}}_{g,n}$. We fix its normalization and prove the corresponding edge identity in Appendix~\ref{app:coloring-upper}.

\subsection{The lower edge: the pole survives}

Now let us turn to the lower edge. Near $z=+1$, we define
\begin{equation}
 w^2=\frac{(z-1)^2}{4z}.
\label{eq:lower-w}
\end{equation}
In terms of $w$, we then have
\begin{equation}
 x=2+4w^2,
 \qquad
 (z-z^{-1})^2=16w^2(1+w^2),
\label{eq:lower-exact}
\end{equation}
and the physical energy becomes
\begin{equation}
 \Ehat=\Ezero-w^2.
\label{eq:lower-energy}
\end{equation}
Here the denominator is $2c-x=4(\Ezero-w^2)$, and in the graded scaling the gap shrinks at the same rate as $w^2$, so we have to keep both terms. As a result, the supersymmetric pole survives in the lower local curve. Moreover,
\begin{equation}
 (z-z^{-1})^{2m}=(4w)^{2m}(1+w^2)^m=(4w)^{2m}+\text{lower graded terms}.
\end{equation}
We will use $w$ itself as the local coordinate, with $x_{\rm loc}=w^2/2$. Matching $y\,dx$ as in Appendix~\ref{app:SYK-curve}, we then find the lower local curve,
\begin{equation}
 x_{\rm loc}(w)=\frac12w^2,\qquad
 y_{\rm loc}(w)=\frac{w}{\Ezero-w^2}
 \exp\!\left[-\sum_{m=1}^{\infty}\frac{Z_m}{m}(4w)^{2m}\right].
\label{eq:local-curve}
\end{equation}
For $\Ezero>0$, the density has a soft edge and vanishes as $\sqrt{\Ehat-\Ezero}$.

When the gap closes, i.e., at $\Ezero=0$, the curve degenerates to
\begin{equation}
 x_{\rm loc}(w)=\frac12w^2,
 \qquad
 y_{\rm loc}(w)=-\frac{1}{w}
 \exp\!\left[-\sum_{m\geq1}\frac{Z_m}{m}(4w)^{2m}\right].
\label{eq:gapless-curve}
\end{equation}
Now $y_{\rm loc}\sim-1/w$, and the density diverges as $\Ehat^{-1/2}$. This is a \emph{hard edge}, which is the familiar behavior in supersymmetric matrix ensembles~\cite{Altland:1997,Stanford:2019vob,Turiaci:2023jfa}.

\subsection{Why parity averaging is not enough}
\label{subsec:coloring}

One might try to isolate the lower theory by averaging over the boundary parities. This works for the external variables, but it does not control the choices of endpoint inside the recursion. To see this, let $\varsigma_i=(-1)^{b_i}$ and decompose the discrete volumes as
\begin{equation}
 N^q_{g,n}(b_1,\ldots,b_n)
 =\sum_{\mathcal I\subseteq\{1,\ldots,n\}}
 \left(\prod_{i\in\mathcal I}\varsigma_i\right)N_{\mathcal I}(b_1^2,\ldots,b_n^2).
\label{eq:parity-character}
\end{equation}
Averaging over the $2^n$ parity polynomials leaves $N_\varnothing$, which selects the poles at $+1$ in the external variables. However, an internal residue can still come from $-1$. At $(1,2)$, this gives an extra $-2\Ezero Z_1$ in the top part expressed in local lengths, and we derive this term, together with the next example at $(2,1)$, in Appendix~\ref{app:coloring-parity}.

Instead, we have to restrict the recursion itself. Let $N^{(+\cdots+)}_{g,n}$ denote the contribution to~\eqref{eq:discrete-def} in which every recursion vertex uses the residue at $z=+1$. Then the volumes $V^q_{g,n}$ of the lower local curve~\eqref{eq:local-curve}, which we define in~\eqref{eq:local-volume-def} below, are given by the graded lower-edge limit of this contribution,
\begin{equation}
 V^q_{g,n}(b_1,\ldots,b_n)=
 2^{-n}
 \left[N^{(+\cdots+)}_{g,n}\right]_{2(g-1)}
 \left(\frac{b_1}{2},\ldots,\frac{b_n}{2}\right).
\label{eq:colored-identity}
\end{equation}
Here the subscript selects the terms of weight $2(g-1)$, and the rescaled lengths and the factor $2^{-n}$ come from the local coordinate. This identity relates the lower local theory to the discrete volumes, and we prove it for every stable marked topology in Appendix~\ref{app:coloring-proof}, where we also give its upper-edge counterpart and compute the mixed-endpoint corrections.

%-------------------------------------------------
\section{Local volumes and limits of the global curve}
\label{sec:localvol}
%-------------------------------------------------

\subsection{A first example and the general pattern}

From now on, $V^q_{g,n}$ denotes the lower local volumes obtained from~\eqref{eq:local-curve}. Their correlators $\omega^{\rm loc}_{g,n}$ are generated by the recursion~\eqref{eq:global-TR}, now with a single residue at $w=0$, the involution $w\mapsto-w$, $B=dw_1dw_2/(w_1-w_2)^2$ and the kernel~\eqref{eq:local-kernel} given below. Let $\mathcal L$ be the Laplace transform $f(b)\mapsto\int_0^\infty\e^{-bw}f(b)\,db$ in each variable. The inverse transform then turns these correlators into polynomials in the boundary lengths,
\begin{equation}
 V^q_{g,n}(b_1,\ldots,b_n)=
 \frac{1}{b_1\cdots b_n}
 \mathcal{L}^{-1}_{w_1,\ldots,w_n}
 \left[
 \frac{\omega^{\rm loc}_{g,n}(w_1,\ldots,w_n)}{dw_1\cdots dw_n}
 \right].
\label{eq:local-volume-def}
\end{equation}

The simplest example is the genus-one volume with one boundary,
\begin{equation}
 V^q_{1,1}(b)=-\frac18+\frac{\Ezero}{48}(b^2+96Z_1).
\label{eq:V11-local}
\end{equation}
The constant $-1/8$ survives when the gap closes, while the remaining term is linear in the gap, with all its $q$-dependence carried by $Z_1$. Both features can be understood from the recursion kernel.

With the local involution $w\mapsto-w$, the recursion kernel is
\begin{equation}
 K_{\rm loc}(w_1,w)=
 \frac{\Ezero-w^2}{2w(w_1^2-w^2)}
 \exp\!\left[\sum_{m\geq1}\frac{Z_m}{m}(4w)^{2m}\right]\frac{dw_1}{dw}.
\label{eq:local-kernel}
\end{equation}
Each stable correlator contains $2g-2+n$ kernels, and each of them is linear in $\Ezero$. Hence the local volumes are polynomials in the gap,
\begin{equation}
 V^q_{g,n}=\sum_{j=0}^{2g-2+n}\Ezero^j V^{q,[j]}_{g,n},
\label{eq:gap-polynomial}
\end{equation}
which extends the threshold-energy polynomial of $\mathcal N=2$ JT supergravity to finite $q$~\cite{Turiaci:2023jfa,Ahmed:2025extended}. Moreover, the grading~\eqref{eq:grading} gives
\begin{equation}
 \wt(V^q_{g,n})=2(g-1).
\label{eq:homogeneity}
\end{equation}
At the highest power of the gap, the $-w^2$ term in the numerator of~\eqref{eq:local-kernel} is subleading, so the coefficient of $\Ezero^{2g-2+n}$, which we denote by $[\Ezero^{2g-2+n}]V^q_{g,n}$, is given by the upper-edge volumes $V^{q,\mathrm{up}}_{g,n}$ defined in Section~\ref{sec:edges},
\begin{equation}
 [\Ezero^{2g-2+n}]V^q_{g,n}=V^{q,\mathrm{up}}_{g,n}.
\label{eq:top-gap}
\end{equation}
The coefficient of the highest power of the gap is therefore bosonic, with the same substitution as in~\eqref{eq:upper-substitution}, which is also the case for the continuum $\mathcal N=2$ volumes~\cite{Turiaci:2023jfa,Ahmed:2025extended}.

When the gap closes, the stable volumes at genus zero vanish, since their recursion requires a factor of $\Ezero$. The genus-one volumes, on the other hand, survive and are independent of $q$. Writing $\mathbf b=(b_1,\ldots,b_n)$ and showing the gap as the last argument, we find
\begin{equation}
 V^q_{1,n}(\mathbf b;0)=\frac{(-1)^n(n-1)!}{8}.
\label{eq:gapless-genus1}
\end{equation}
The first $q$-dependence of the gapless theory appears at genus two. We list the gapped and gapless local volumes, including the complete $(2,1)$ polynomial, in Appendix~\ref{app:localvolumes}.

We also record the disk density of the local curve, which on the physical cut $\Ehat>\Ezero$ is given by
\begin{equation}
 \rho^q_0(\Ehat)=\frac{1}{2\pi\hbar}
 \frac{\sqrt{\Ehat-\Ezero}}{\Ehat}
 \exp\!\left[-\sum_{m\geq1}\frac{(-16)^mZ_m}{m}(\Ehat-\Ezero)^m\right].
\label{eq:local-density}
\end{equation}
Outside the disk of convergence of the logarithmic series, the density is defined by the entire product~\eqref{eq:app-ODE-density}.

\subsection{The global curve as $q\to1$}
\label{subsec:full-q-one}

The JT disk curves can be obtained directly from the global curve~\eqref{eq:global-curve}, without extracting the top part, by zooming into an edge and rescaling $y\,dx$. If we instead send $q\to1$ at fixed $x$ inside the cut, $\Pq$ diverges, so there is no finite unscaled interior curve in this normalization.

We set $\epsilon=1-q$ and scale the gap and the boundary lengths as
\begin{equation}
 \Ezero=\epsilon^2E_0,\qquad b_i\longmapsto\frac{b_i}{\epsilon}.
\label{eq:local-scaling}
\end{equation}
This is the scaling at fixed multiplet charge, since $\Ezero/\epsilon^2\to r^2/4$ at fixed $r$. For later use, we also note that, with $\zeta$ the Riemann zeta function,
\begin{equation}
 \lim_{q\to1^-}\epsilon^{2m}Z_m=\frac{\zeta(2m)}{4^m},\qquad m\geq1.
\label{eq:qzeta-limit}
\end{equation}
To see this, note that $\epsilon^2q^{2n}/(1-q^{2n})^2\leq1/(4n^2)$, and that each summand tends to this bound. Dominated convergence then proves~\eqref{eq:qzeta-limit}, and it also controls the tails of the products.

The exact coordinates
\begin{equation}
 z_\epsilon^\pm(\bar w)=\pm\big(\sqrt{1+\epsilon^2\bar w^2}+\epsilon\bar w\big)^2,
 \qquad x(z_\epsilon^\pm)=\pm(2+4\epsilon^2\bar w^2)
 \label{eq:full-edge-coordinates}
\end{equation}
resolve the lower and upper energy edges, respectively, and if we pull back the full one-form, we find
\begin{equation}
 \begin{aligned}
 \epsilon^{-1}(z_\epsilon^+)^*(y\,dx)&\longrightarrow
       \frac{\bar w\sin(2\pi\bar w)}{2\pi(E_0-\bar w^2)}\,d\bar w,\\
 \epsilon^{-3}(z_\epsilon^-)^*(y\,dx)&\longrightarrow
       \frac{\bar w\sin(2\pi\bar w)}{2\pi}\,d\bar w.
 \end{aligned}
 \label{eq:full-edge-JT}
\end{equation}
Here $\bar w=w/\epsilon$ at the lower edge and $\bar w=\tilde w/\epsilon$ at the upper edge. With $\bar x=\bar w^2/2$, these are the $\mathcal N=2$ and bosonic JT disk curves, and within either edge the Bergman kernel tends to $d\bar w_1\,d\bar w_2/(\bar w_1-\bar w_2)^2$. We give the exact pullbacks and the limit of the product in Appendix~\ref{app:SYK-continuum}. The lower limit also holds at $E_0=0$, where the curve is irregular. On the other hand, if we hold the unscaled gap $\Ezero>0$ fixed, the pole stays away from both endpoints, and after constant normalizations both edge limits are bosonic.

In particular, if we write $\Ehat=\epsilon^2E_c$, where $E_c$ is the continuum energy, and show the gap as a second argument, the \emph{global} density satisfies
\begin{equation}
 \lim_{q\to1^-}\epsilon\,\rho^{\rm glob}_0(\epsilon^2E_c;\epsilon^2E_0)
 =\frac{\sinh(2\pi\sqrt{E_c-E_0})}{4\pi^2\hbar E_c},
 \qquad E_c>E_0.
 \label{eq:full-density-limit}
\end{equation}
The local density has the same limit, which at zero gap reduces to
\begin{equation}
 \lim_{q\to1^-}(1-q)\rho^q_0((1-q)^2E_c)
 =\frac{\sinh(2\pi\sqrt{E_c})}{4\pi^2\hbar E_c}.
\label{eq:density-limit}
\end{equation}
These conventions include a rescaling of the disk normalization as well as of the energy, and we restore the microscopic SYK weight in Appendix~\ref{app:SYK-continuum}.

\subsection{Limits of the volume polynomials}
\label{subsec:volume-limits}
\label{subsec:q-zero}

For the volumes, we must also specify which residues are included. If we take every residue at the lower endpoint, we get the Turiaci--Witten limit. More precisely, the homogeneity~\eqref{eq:homogeneity} and the limit~\eqref{eq:qzeta-limit} give the Turiaci--Witten volumes $V^{\N=2}_{g,n}$ as the limit
\begin{equation}
 V^{\N=2}_{g,n}(\mathbf b;E_0)=
 \lim_{q\to1^-}\epsilon^{2(g-1)}
 V^q_{g,n}(\mathbf b/\epsilon;\epsilon^2E_0).
\label{eq:volume-limit}
\end{equation}
The limiting local curve is the $\mathcal N=2$ JT curve, and the coefficients that we have computed agree with~\cite{Turiaci:2023jfa,Ahmed:2025extended}. For example, at genus one we find
\begin{equation}
 V_{1,1}^{\N=2}(b)=-\frac18+\frac{E_0}{48}(b^2+4\pi^2).
\label{eq:V11-cont}
\end{equation}
At $(2,1)$, the comparison checks all four powers of $E_0$. Similarly, the all-upper local volumes, normalized as in~\eqref{eq:app-color-upper-identity}, reduce to the Weil--Petersson volumes $V^{\rm WP}_{g,n}$ in the bosonic limit,
\begin{equation}
 \lim_{q\to1^-}\epsilon^{6g-6+2n}
 V^{q,\mathrm{up}}_{g,n}(\mathbf b/\epsilon)
 =V^{\rm WP}_{g,n}(\mathbf b).
 \label{eq:upper-volume-limit}
\end{equation}
This is the bosonic scaling of~\cite{Okuyama:2023byh}, with the deformation parameters related by~\eqref{eq:upper-substitution}.

However, the full recursion remembers both endpoints. The upper-edge one-form vanishes two powers of $\epsilon$ faster than the lower one, but its inverse enters the kernel, and this can compensate for the suppression of the terms connecting the two edges. We can see this effect already at $(1,2)$, where
\begin{equation}
 \lim_{q\to1^-}\frac14
 N^q_{\varnothing;1,2}\!
 \left(\frac{b_1}{2\epsilon},\frac{b_2}{2\epsilon};\epsilon^2E_0\right)
 =V^{\N=2}_{1,2}(\mathbf b;E_0)-\frac{\pi^2E_0}{12}.
 \label{eq:full-mixed-continuum}
\end{equation}
Here the parity sectors are evaluated as polynomials in the same formal boundary lengths, and the extra term follows from~\eqref{eq:app-color-failure12} and $\epsilon^2Z_1\to\pi^2/24$. So the full amplitudes cannot be identified with the super WP volumes, even though their lower-edge disk curves agree.

The opposite limit is simpler. At fixed gap, $q\to0$ sets every $Z_m$ to zero in the local curves. The upper curve then becomes the Airy curve, whose volumes are $\psi$-class intersection numbers, while the lower curve becomes $x_{\rm loc}=w^2/2$, $y_{\rm loc}=w/(\Ezero-w^2)$, which gives the deformed negative $2$-spin class of Section~\ref{subsec:it-spin}. At zero gap, the latter reduces to the signed Bessel curve $y_{\rm loc}=-1/w$, with $\Theta$-class intersections.

%-------------------------------------------------
\section{The string equation and the ODE method}
\label{sec:ODE}

The local volumes can also be computed in a different way, using the string equation of the matrix model, an approach known as the ordinary differential equation (ODE) method~\cite{Banks:1990df,Johnson:2024bue,Ahmed:2025extended,JohnsonRodrigues:2026}. In this section, we follow the presentation of~\cite[Section V.1]{Johnson:2026boundary}, specify the finite-$q$ data of the string equation, and work out one example. The derivation can be found in Appendix~\ref{app:ODEdata}.

Consider the auxiliary Schr\"odinger operator $\mathcal H=-\hbar^2\partial_\xi^2+u(\xi)$, where $\xi$ is its coordinate and $u(\xi)$ is the potential, and primes denote $\xi$-derivatives. Let $R_k[u]$ be the Gel'fand--Dikii differential polynomials in $u$~\cite{Gelfand:1975rn}, normalized as $R_k[u]=u^k+O(\hbar^2)$. The string equation is then given by
\begin{equation}
 u\R^2-\frac{\hbar^2}{2}\R\R''+\frac{\hbar^2}{4}(\R')^2
 =(\widetilde\Gamma_q^{\rm str})^2,
 \qquad
 \R=\sum_{k\geq1}t_kR_k[u]+\xi.
\label{eq:string-equation}
\end{equation}
The times $t_k$, the Fermi level $\mu$ (the endpoint of the $\xi$-integration in~\eqref{eq:WKB-density}) and the signed coupling $\widetilde\Gamma_q^{\rm str}=\hbar\Gamma_q$ are all fixed by the lower local density~\eqref{eq:local-density}. In order to write them down, let us define
\begin{equation}
 \begin{aligned}
 \mathcal F_q(v)&=\prod_{j\geq1}\left(1+\frac{16q^{2j}v}{(1-q^{2j})^2}\right)
                 =\sum_{n\geq0}\mathfrak f_n v^n,\\
 \mathcal T_q(v)&=\sum_{n\geq0}\frac{(2n+1)!!}{2^n n!}\,\mathfrak f_n(-v)^n.
 \end{aligned}
 \label{eq:string-moment-data}
\end{equation}
For $\Ezero>0$, Abel inversion then gives the specialization
\begin{equation}
 \begin{aligned}
 t_k(\Ezero,q)&=\left.\frac{(-1)^k}{(2k+1)k!}
                         \frac{d^k\mathcal T_q(v)}{dv^k}\right|_{v=\Ezero},\qquad k\geq1,\\
 \mu&=\mathcal T_q(\Ezero),\qquad
 \widetilde\Gamma_q^{\rm str}=\sqrt{\Ezero}\,\mathcal F_q(-\Ezero).
 \end{aligned}
 \label{eq:string-specialization}
\end{equation}
Here the derivatives are taken at fixed $q$, and we hold the signed coupling fixed as $\hbar\to0$. We relate this density convention to the continuum normalization of~\cite{Johnson:2026boundary} in Appendix~\ref{subsec:ODE-continuum-data}.

The classical potential $u_0$, i.e., the $\hbar\to0$ limit of $u$, lies on the branch
\begin{equation}
 \xi=-\mathcal G(u_0),\qquad
 \mathcal G(u)=\sum_{k\geq1}t_ku^k-\frac{\widetilde\Gamma_q^{\rm str}}{\sqrt u},
 \qquad u_0(\mu)=\Ezero.
 \label{eq:string-classical-data}
\end{equation}
Differentiating this inverse relation with the background parameters held fixed, we obtain the endpoint data that we need for the volumes, e.g., $u_0'(\mu)=-2\Ezero$ and $u_0''(\mu)=6\Ezero-96Z_1\Ezero^2$. The universal genus-one formula, which follows from the resolvent of Appendix~\ref{subsec:ODE-resolvent}, then gives
\begin{equation}
 V^q_{1,1}(b;\Ezero)
 =\left.\left(\frac{u_0''}{24u_0'}-\frac{b^2u_0'}{96}\right)\right|_{\xi=\mu}
 =-\frac18+\frac{\Ezero}{48}(b^2+96Z_1),
 \label{eq:string-V11-check}
\end{equation}
which reproduces~\eqref{eq:V11-local}. Volumes with more boundaries are obtained by inserting the additional boundaries before evaluating at $\xi=\mu$. We give the resolvent prescription, the normalized loop operator and further checks, including the finite hard-edge limits, in Appendices~\ref{subsec:ODE-resolvent} and~\ref{sec:ODE-adding-boundaries}.

%-------------------------------------------------
\section{The BPS information in the pole}
\label{sec:BPS}
%-------------------------------------------------

In this section, we show that the global curve connects the continuum and BPS parts of the SYK spectrum. Recall that the continuum density came from the positive moments, which cannot see the states at zero energy. Nevertheless, at generic charge, its analytic continuation has a pole whose residue gives a BPS weight. As we will see, this relation involves a charge shift, i.e., the BPS states belong to an adjacent charge sector. Throughout this section, $E$ is the microscopic SYK energy, and the derivation can be found in Appendix~\ref{app:SYK-BPS}.

Let $r$ be the average charge of a non-BPS multiplet. BPS states occur only at state charges $|s|<\tfrac12$, and we will call this range the \emph{BPS window} (see Appendix~\ref{app:SYK-BPS}). For $0<|r|<1$, exactly one of the two component charges lies in this window, namely
\begin{equation}
 s_r=r-\tfrac12\operatorname{sgn}r.
\label{eq:kr}
\end{equation}
The BPS states at $s_r$ are not part of the positive-energy multiplet. Their weight was computed in~\cite{Berkooz:2020uly}, and in terms of the Jacobi theta function $\vartheta_1$ of~\eqref{eq:syk-theta-definition}, whose second argument is the nome, it is given by
\begin{equation}
 \rho_{\rm BPS}(s_r)=\vartheta_1\!\left(\pi|r|,\e^{-\pi^2/\lambda}\right).
\label{eq:BPS-deg}
\end{equation}
If we continue the microscopic density~\eqref{eq:syk-multiplet-density} to $\varphi=i\lambda|r|$ and apply the Jacobi imaginary transformation, we find
\begin{equation}
 \mathop{\rm Res}_{E=0}\rho_r(E)=\frac{i}{2\pi}\rho_{\rm BPS}(s_r),
\label{eq:BPS-residue}
\end{equation}
with the opposite sign on the other sheet. This is the finite-$q$ counterpart of the corresponding relation in $\mathcal N=2$ JT supergravity~\cite{Turiaci:2023jfa,Johnson:2023ofr,JohnsonUsatyuk:2024}. At nonzero integer $|r|$, the residue vanishes and the pole cancels, while at $r=0$ zero energy becomes a hard edge.

After dividing out $\mathcal W_r$ and adopting the continuum normalization of Appendix~\ref{app:SYK-continuum}, we obtain the global coupling
\begin{equation}
 \widetilde\Gamma_q^{\rm glob}(\Ezero)=
 \frac{\sqrt{\Ezero(1+\Ezero)}}{2\pi(1-q)}
 \prod_{n=1}^{\infty}
 \left[1-\frac{16q^{2n}\Ezero(1+\Ezero)}{(1-q^{2n})^2}\right].
\label{eq:Gamma-global}
\end{equation}
The lower local curve replaces the factor $1+\Ezero$, which is inherited from the exact global coordinate, by $1$, and instead has
\begin{equation}
 \widetilde\Gamma_q^{\rm loc}(\Ezero)=
 \frac{\sqrt{\Ezero}}{2\pi(1-q)}
 \prod_{n=1}^{\infty}
 \left[1-\frac{16q^{2n}\Ezero}{(1-q^{2n})^2}\right].
\label{eq:Gamma-top}
\end{equation}
Here the label $\mathrm{loc}$ refers to the lower local curve, as in $x_{\rm loc}$ and $y_{\rm loc}$. Both couplings use the same density convention, and we relate them to the normalization of the string equation in Appendix~\ref{subsec:ODE-continuum-data}, especially in~\eqref{eq:ODE-Gamma-normalization}. Although the two couplings are different at finite $q$, they have the same limit,
\begin{equation}
 \lim_{q\to1^-}
 \widetilde\Gamma_q^{\rm glob}((1-q)^2E_0)
 =\lim_{q\to1^-}
 \widetilde\Gamma_q^{\rm loc}((1-q)^2E_0)
 =\frac{\sin(2\pi\sqrt{E_0})}{4\pi^2}.
\label{eq:Gamma-limit}
\end{equation}
This is the $\mathcal N=2$ JT coupling~\cite{Johnson:2023ofr,Ahmed:2025extended,Johnson:2025power}. The global coupling also fixes the coefficient $\Lambda_{q,c}$ of the logarithmic term in the matrix potential, $\Lambda_{q,c}=2\pi(1-q)\widetilde\Gamma_q^{\rm glob}$, as we show in Appendix~\ref{app:potential-limits}. Both couplings can also be written in terms of theta functions, and we give these forms in Appendix~\ref{app:theta}.

%-------------------------------------------------
\section{An intersection formula for the local volumes}
\label{sec:intersection}
%-------------------------------------------------

We now turn to an intersection-theoretic interpretation of the local volumes. Following the description of the $\mathcal N=2$ super WP volumes in~\cite[Section V.3]{Johnson:2026boundary}, we regard the local theory as an Airy--Bessel spin class with a gravitational dressing of $\kappa$-class insertions. Recall that for ordinary WP volumes this dressing is $\exp(2\pi^2\kappa_1)$, which we will call the Wolpert dressing, while the $\mathcal N=2$ theory requires higher $\kappa$-classes as well. In this section, we determine the finite-$q$ dressing from the lower local curve, and explain its role in the volumes and in boundary addition.

Let us first fix our conventions. We take $n\geq1$ and $d=2g-2+n>0$, and write $D=3g-3+n$ for the complex dimension of $\overline{\mathcal M}_{g,n}$, the moduli space of stable marked curves. The class $\psi_i$ is the first Chern class of the cotangent line at the $i$th marking, and a $\kappa$-class is obtained by inserting a power of $\psi$ at one extra marking and then forgetting that marking,
\begin{equation}
 \kappa_m=\mathsf f_*(\psi_{n+1}^{m+1}),\qquad
 \mathsf f:\overline{\mathcal M}_{g,n+1}\longrightarrow
       \overline{\mathcal M}_{g,n},\qquad
 \kappa_0=d.
 \label{eq:it-kappa-convention}
\end{equation}
We use complex cohomological degree, so that $\deg\psi_i=1$ and $\deg\kappa_m=m$, and all the integrals below select the part of degree $D$.

\subsection{Spectral moments and the intersection class}
\label{subsec:it-class}

Using $\mathcal F_q$ from~\eqref{eq:string-moment-data}, we can write the lower local curve~\eqref{eq:local-curve} as
\begin{equation}
 y_{\rm loc}(w)=\frac{w}{\Ezero}\,\frac{\mathcal F_q(-w^2)}{1-w^2/\Ezero},\qquad
 \mathcal F_q(-v)=\exp\!\left[-\sum_{m\geq1}\frac{16^mZ_m}{m}v^m\right]
       =\sum_{j\geq0}(-1)^j\mathfrak f_jv^j.
 \label{eq:it-spectral-factor}
\end{equation}
For $\Ezero\neq0$, we define the normalized moments $a_m$ and their logarithm, with coefficients $\sigma_m$, by
\begin{align}
 a_m(\Ezero,q)&=(2m+1)!!\sum_{j=0}^m (-1)^j\mathfrak f_j\Ezero^{-(m-j)},\qquad a_0=1,
 \label{eq:it-moments}\\
 \sum_{m\geq0}a_m(\Ezero,q)v^m&=\exp\!\left[-\sum_{m\geq1}\sigma_m(\Ezero,q)v^m\right].
 \label{eq:it-log-moments}
\end{align}
The moments $a_m$ encode the Taylor coefficients of the normalized curve, and their logarithm gives the $\kappa$-couplings $\sigma_m$, as in~\cite{Johnson:2026boundary}. Since only degree $D$ contributes at each topology, we need only finitely many terms of these formal series. The volume formula is then given by
\begin{equation}
 \begin{aligned}
 V^q_{g,n}(\mathbf b;\Ezero)&=
 \int_{\overline{\mathcal M}_{g,n}}
 \Omega^q_{g,n}(\Ezero)
 \exp\!\left(\frac12\sum_{i=1}^n b_i^2\psi_i\right),\\
 \Omega^q_{g,n}(\Ezero)&=\Ezero^d
 \exp\!\left(\sum_{m\geq1}\sigma_m(\Ezero,q)\kappa_m\right).
 \end{aligned}
 \label{eq:it-volume-class}
\end{equation}
This follows from the one-branch intersection theorem~\cite{Eynard:2011intersection,DuninBarkowski:2012givental,ChekhovNorbury:2019}, and we give the conversion from descendants to $\kappa$-classes, together with the normalization, in Appendix~\ref{app:intersection-translation}. The couplings do not depend on the genus, and it is the dimension of moduli space that determines which combinations of them contribute. The prefactor $\Ezero^d$ fixes both $V^q_{0,3}=\Ezero$ and the gap factor in boundary addition.

In practice, the couplings can be computed recursively from the moments,
\begin{equation}
 \sigma_m=-a_m-\frac1m\sum_{j=1}^{m-1}j\,\sigma_j a_{m-j}.
 \label{eq:it-sigma-recurrence}
\end{equation}
For example, using $\mathfrak f_1=16Z_1$ and $\mathfrak f_2=128(Z_1^2-Z_2)$, we find
\begin{align}
 \sigma_1&=48Z_1-\frac{3}{\Ezero},
 \nonumber\\
 \sigma_2&=1920Z_2-768Z_1^2+\frac{96Z_1}{\Ezero}
                         -\frac{21}{2\Ezero^2}.
 \label{eq:it-first-couplings}
\end{align}
As a first check, using $\int_{\overline{\mathcal M}_{1,1}}\psi_1=\int_{\overline{\mathcal M}_{1,1}}\kappa_1=1/24$, we get
\begin{equation}
 V^q_{1,1}(b;\Ezero)=\frac{\Ezero}{24}\left(\sigma_1+\frac{b^2}{2}\right)
              =-\frac18+\frac{\Ezero}{48}(b^2+96Z_1).
 \label{eq:it-check11}
\end{equation}
So the constant $-1/8$ in our first example~\eqref{eq:V11-local} comes from the $-3/\Ezero$ term in $\sigma_1$. To test $\sigma_2$ as well, let us look at one more example. On $\overline{\mathcal M}_{1,2}$, we have
\[
 \int\psi_1^2=\int\psi_1\psi_2=\int\kappa_2=\frac1{24},
 \qquad
 \int\kappa_1\psi_i=\frac1{12},\qquad
 \int\kappa_1^2=\frac18.
\]
Using these, and writing $|\mathbf b|^2=b_1^2+b_2^2$, we find
\begin{align}
 V^q_{1,2}
 &=\Ezero^2\left[
       \frac{|\mathbf b|^4}{192}+\frac{\sigma_1|\mathbf b|^2}{24}
       +\frac{\sigma_1^2}{16}+\frac{\sigma_2}{24}\right]
 \nonumber\\
 &=\frac18-\frac{\Ezero}{8}(|\mathbf b|^2+112Z_1)
   +\frac{\Ezero^2}{192}
      \left[|\mathbf b|^4+384Z_1|\mathbf b|^2+21504Z_1^2+15360Z_2\right],
 \label{eq:it-check12}
\end{align}
in agreement with Appendix~\ref{app:localvolumes}. As in~\cite{Johnson:2026boundary}, the genus-zero volumes with all boundaries shrunk to cusps determine the couplings one by one. Indeed, since $\int_{\overline{\mathcal M}_{0,m+3}}\kappa_m=1$ for $m\geq1$, the volume $V^q_{0,m+3}(\mathbf0)$ is equal to $\Ezero^{m+1}\sigma_m$ plus terms involving the lower couplings. For example, $\int_{\overline{\mathcal M}_{0,5}}\kappa_1^2=5$ gives the independent check
\begin{equation}
 V^q_{0,5}(\mathbf0;\Ezero)
 =\Ezero^3\left(\sigma_2+\frac52\sigma_1^2\right)
 =12\Ezero-624Z_1\Ezero^2+(4992Z_1^2+1920Z_2)\Ezero^3.
 \label{eq:it-check05}
\end{equation}

\subsection{The Airy--Bessel class and its dressing}
\label{subsec:it-spin}

To understand the structure of this class, we start at $q=0$, where $\mathcal F_q=1$ and the local curve becomes the Airy--Bessel curve, which is the starting point of the continuum construction in~\cite[Section V.3]{Johnson:2026boundary}. The corresponding class is the deformed negative $2$-spin class, which is constructed from square roots of the anticanonical bundle with prescribed twists at the markings~\cite{Chidambaram:2022cqc}. In order to write it in terms of $\kappa$-classes, we define $\mathbb K_j=\mathbb K_j(g,n)$ by
\begin{equation}
 \begin{aligned}
 \mathbb K(v)&=\sum_{j\geq0}\mathbb K_jv^j
       =\exp\!\left(\sum_{m\geq1}k_m\kappa_m v^m\right),\\
 \e^{-\sum_{m\geq1}k_mv^m}
       &=\sum_{m\geq0}(-1)^m(2m+1)!!\,v^m.
 \end{aligned}
 \label{eq:it-K-class}
\end{equation}
The negative $2$-spin reconstruction of~\cite{Chidambaram:2022cqc} shows that, for $n\geq1$,
\begin{equation}
 \mathbb K_j=0\quad(j>d),\qquad \mathbb K_d=\Theta_{g,n},
 \label{eq:it-K-relations}
\end{equation}
where $\Theta_{g,n}$ is Norbury's spin-derived class, normalized by $\int_{\overline{\mathcal M}_{1,1}}\Theta_{1,1}=1/8$. The first few classes are $\mathbb K_0=1$, $\mathbb K_1=3\kappa_1$ and $\mathbb K_2=\tfrac92\kappa_1^2-\tfrac{21}{2}\kappa_2$. At $q=0$, the couplings are $\sigma_m=k_m(-1/\Ezero)^m$, and the resulting Airy--Bessel class is the polynomial
\begin{equation}
 \Omega^{\rm AB}_{g,n}(\Ezero)
 =\Ezero^d\mathbb K(-1/\Ezero)
 =\sum_{j=0}^{d}(-1)^j\Ezero^{d-j}\mathbb K_j.
 \label{eq:it-AB-class}
\end{equation}
In the normalization of~\cite{Chidambaram:2022cqc}, this is $(-1)^d\Theta^{2,\sqrt{\Ezero}}_{g,n}$, where $\Theta^{2,\sqrt{\Ezero}}_{g,n}$ is their deformed negative $2$-spin class with deformation parameter $\sqrt{\Ezero}$.

When we restore $q$, the moments change, and the Airy--Bessel class gets dressed,
\begin{equation}
 \Omega^q_{g,n}(\Ezero)
 =\Omega^{\rm AB}_{g,n}(\Ezero)
   \exp\!\left(\sum_{m\geq1}\tilde\sigma_m(\Ezero,q)\kappa_m\right),
 \qquad
 \tilde\sigma_m=\sigma_m-k_m(-1/\Ezero)^m.
 \label{eq:it-dressing}
\end{equation}
In particular, the first two dressing couplings are
\begin{equation}
 \tilde\sigma_1=48Z_1,\qquad
 \tilde\sigma_2=1920Z_2-768Z_1^2+\frac{96Z_1}{\Ezero}.
 \label{eq:it-mixed-dressing}
\end{equation}
The term $96Z_1/\Ezero$ shows that this dressing depends on the gap. In particular, it cannot be obtained by simply multiplying the Airy--Bessel class by the bosonic factor of the upper edge, which would be a natural first guess. To see this explicitly, let $\sigma^{\rm up}_m(q)=\lim_{\Ezero\to\infty}\sigma_m(\Ezero,q)$ be the upper-edge couplings, which are determined by
\[
 \e^{-\sum_{m\geq1}\sigma^{\rm up}_m(q)v^m}
     =\sum_{m\geq0}(2m+1)!!\,(-1)^m\mathfrak f_m v^m.
\]
If we replace $\tilde\sigma_m(\Ezero,q)$ by $\sigma^{\rm up}_m(q)$, we find a mismatch already at $(1,2)$,
\begin{equation}
 V^q_{1,2}-V^{\rm AB\times up}_{1,2}
 =\Ezero^2\frac{96Z_1/\Ezero}{24}=4\Ezero Z_1.
 \label{eq:it-factorization-obstruction}
\end{equation}
This is the finite-$q$ version of the failure of the simple Wolpert dressing found in~\cite[Section V.3]{Johnson:2026boundary}. The $(1,1)$ volume cannot detect the missing $\kappa_2$ coupling, whereas the $(1,2)$ volume can. However, the bosonic factor does give the coefficient of the highest power of the gap, in agreement with~\eqref{eq:top-gap},
\begin{equation}
 [\Ezero^d]\,V^q_{g,n}(\mathbf b;\Ezero)=
 \int_{\overline{\mathcal M}_{g,n}}
 \e^{\sum_{m\geq1}\sigma^{\rm up}_m(q)\kappa_m+\frac12\sum_i b_i^2\psi_i}.
 \label{eq:it-top-coefficient}
\end{equation}

Since $\tilde\sigma_m$ contains negative powers of the gap, the limit $\Ezero=0$ in~\eqref{eq:it-dressing} has to be taken after the terms have been combined and integrated. By~\eqref{eq:gap-polynomial}, the amplitudes are polynomial in the gap, so the apparent poles cancel in this process. In the next subsection, we instead obtain the geometric class at zero gap directly from the irregular local curve.

\subsection{Closing the gap}
\label{subsec:it-zero-gap}

At zero gap, the leading behavior of the lower curve changes, and the curve becomes $y_{\rm loc}(w)=-\mathcal F_q(-w^2)/w$. Because of the $1/w$ pole, we now need the Bessel version of the intersection theorem, which inserts Norbury's $\Theta$-class in place of the ordinary descendant intersections~\cite{ChekhovNorbury:2019,Norbury:2020enumerative}. The remaining $\kappa$-insertions are of the type studied in~\cite{HuangLiuXu:2025}, and their coefficients are determined by
\begin{equation}
 \e^{-\sum_{m\geq1}\tau_m(q)v^m}
 =\sum_{m\geq0}(2m-1)!!\,(-1)^m\mathfrak f_mv^m,\qquad (-1)!!=1.
 \label{eq:it-tau}
\end{equation}
So the hard-edge measure consists of the $\Theta$-class together with its own $q$-dependent $\kappa$-dressing,
\begin{equation}
 V^q_{g,n}(\mathbf b;0)
 =(-1)^n\int_{\overline{\mathcal M}_{g,n}}\Theta_{g,n}
       \e^{\sum_{m\geq1}\tau_m(q)\kappa_m
          +\frac12\sum_i b_i^2\psi_i}.
 \label{eq:it-theta-volume}
\end{equation}
The shift from $(2m+1)!!$ to $(2m-1)!!$ has a geometric origin. Since $\Theta_{g,n+1}=\psi_{n+1}\mathsf f^*\Theta_{g,n}$, we have $\mathsf f_*(\Theta_{g,n+1}\psi_{n+1}^{m}) =\Theta_{g,n}\kappa_m$, so we need one fewer power of the auxiliary cotangent class than in the regular translation. The sign $(-1)^d=(-1)^n$ comes from the leading coefficient $-1$ of the irregular $y$. The first two couplings are
\begin{equation}
 \tau_1=16Z_1,\qquad \tau_2=384Z_2-256Z_1^2.
 \label{eq:it-tau-values}
\end{equation}
Counting degrees also explains why the gapless answers are so simple. The class $\Theta_{g,n}$ already has degree $d$, which leaves only $D-d=g-1$ for all the boundary and $\kappa$-insertions. So the genus-zero volumes vanish, and at genus one there is no room for any dependence on the lengths or on $q$. More generally, this gives the weight $2(g-1)$ in~\eqref{eq:homogeneity}. For example, $\int\Theta_{2,1}\psi_1=3/128$ and $\int\Theta_{2,1}\kappa_1=9/128$ give
\begin{equation}
 V^q_{2,1}(b;0)
 =-\frac{3b^2}{256}-\frac{9\tau_1}{128}
 =-\frac3{256}(b^2+96Z_1).
 \label{eq:it-theta-check}
\end{equation}
Here the second intersection number follows by forgetting a marked point from $\Theta_{2,2}$, where the dilaton relation gives $\int\Theta_{2,2}\psi_2=3\int\Theta_{2,1}\psi_1$.

\subsection{Boundary addition and the continuum limit}
\label{subsec:it-forgetful}

Boundary addition, which is the guiding operation in~\cite{Johnson:2026boundary}, now has a geometric description in terms of the forgetful map. It corresponds to the loop insertion of the string equation, which we discuss in Appendix~\ref{sec:ODE-adding-boundaries}. At nonzero gap, let us define
\begin{equation}
 \mathcal Q_b(\kappa)=\sum_{m\geq0}\left([v^{m+1}]\exp\!\left[
                 \frac{b^2v}{2}+\sum_{\ell\geq1}\sigma_\ell(\Ezero,q)v^\ell
                          \right]\right)\kappa_m .
 \label{eq:it-boundary-insertion}
\end{equation}
The sum truncates under integration. By adding a marking and applying the forgetful map, we find the exact identity
\begin{align}
 V^q_{g,n+1}(\mathbf b,b;\Ezero)
 ={}&\Ezero\sum_{i=1}^n\int_0^{b_i}\ell\,
       V^q_{g,n}(b_1,\ldots,\ell,\ldots,b_n;\Ezero)\,d\ell
 \nonumber\\
 &+\Ezero\int_{\overline{\mathcal M}_{g,n}}
       \Omega^q_{g,n}(\Ezero)
       \e^{\frac12\sum_i b_i^2\psi_i}\mathcal Q_b(\kappa).
 \label{eq:it-boundary-addition}
\end{align}
The second line consists of the $\kappa$-decorated volumes of~\cite{Johnson:2026boundary}, i.e., the original intersection integral with an extra $\kappa_m$ insertion, weighted by $\Ezero$ times its coefficient in~\eqref{eq:it-boundary-insertion}.

The first line comes from contracting a three-pointed sphere that carries the new marking and one of the old ones, and the remaining contribution gives the second line. Since adding a marking raises $d$ by one, both lines come with a factor of $\Ezero$. We derive both terms in Appendix~\ref{app:intersection-boundary}. In this way, boundary addition probes the $\kappa$-couplings through these decorated volumes.

Finally, let us compare with the continuum, which is straightforward at the level of the moments. With $\epsilon=1-q$ and $\Ezero=\epsilon^2E_0$, equations~\eqref{eq:local-scaling} and~\eqref{eq:qzeta-limit} give
\begin{equation}
 \lim_{q\to1^-}\epsilon^{2m}a_m(\epsilon^2E_0,q)
 =a_m^{\rm cl}(E_0)
 =(2m+1)!!\sum_{j=0}^m
       \frac{(-1)^j(2\pi)^{2j}}{(2j+1)!E_0^{m-j}}.
 \label{eq:it-classical-moments}
\end{equation}
Together with $\e^{-\sum_m\sigma_m^{\rm cl}v^m}=\sum_m a_m^{\rm cl}v^m$, this is the continuum prescription of~\cite[Eqs.\ (310)--(313)]{Johnson:2026boundary}. For example, the first two couplings are
\[
 \sigma_1^{\rm cl}=2\pi^2-\frac3{E_0},\qquad
 \sigma_2^{\rm cl}=\frac{4\pi^2}{E_0}-\frac{21}{2E_0^2}.
\]
The discrepancy~\eqref{eq:it-factorization-obstruction} then becomes $\pi^2E_0/6$, in agreement with Eq.\ (296) of that reference. For the zero-gap class, the corresponding moments instead sum to
\[
 \sum_{m\geq0}(2m-1)!!\,
       \frac{(-1)^m(2\pi)^{2m}}{(2m+1)!}\,v^m
 =\int_0^1\e^{-2\pi^2vt^2}\,dt.
\]
This gives $\tau_1^{\rm cl}=2\pi^2/3$ and $\tau_2^{\rm cl}=-8\pi^4/45$, so the continuum measure at zero gap also retains higher $\kappa$-insertions.

We also note that, by~\eqref{eq:colored-identity}, the same intersection formula describes the graded discrete volumes obtained by taking every residue at the lower endpoint.

%-------------------------------------------------
\section{Discussion}
\label{sec:discussion}
%-------------------------------------------------

We have constructed an $\mathcal N=2$ discrete spectral curve from the density of DSSYK, and obtained its quasi-polynomial volumes by topological recursion. The curve has two edge theories, and we focused on the lower one, which gives the local $\mathcal N=2$ volumes. We computed these volumes at finite $q$, showed that they reduce to the Turiaci--Witten volumes~\cite{Turiaci:2023jfa} in the continuum limit, and gave them an intersection-theoretic interpretation. The full amplitudes, on the other hand, contain mixed-endpoint terms that survive the continuum limit, so they are not simply the super WP volumes. We conclude with some open questions that follow from this construction.

The most immediate question is whether the higher amplitudes of our curve agree with the connected correlators of microscopic SYK. The density fixes only the disk, so it cannot settle this question by itself. Moreover, the blocks of different charge depend on the same random couplings, and resolving the density into multiplet contributions does not imply that they are statistically independent. A useful first calculation would be the connected two-point function of resolvents at fixed average multiplet charge, followed by the same calculation between different charges, with the normalization and the double-scaled limit specified carefully. This would test our choice of two-boundary input, and it would also tell us whether conditioning or unfolding is needed to relate the microscopic ensemble to our recursive construction.

In Appendix~\ref{app:potential}, we reconstructed a matrix potential with the required planar density. It is interesting to ask whether this potential admits a nonperturbative definition with the same spectral properties. This requires a convergent integration contour and an analysis of the competing saddles, including the limit in which the zero-energy pole reaches the lower edge. Such a definition would also allow us to examine the BPS interpretation beyond the disk. The residue relation suggests a connection with the difference of matrix dimensions in a supersymmetric ensemble, but to make this identification one has to specify that ensemble and its charge sectors. One could then compare its edge limit with a nonperturbative solution of the local string equation, keeping in mind that perturbative agreement alone would not imply that the two completions coincide.

The intersection formula raises a related geometric question. We have identified classes on the ordinary moduli space of stable marked curves whose integrals give the local $\mathcal N=2$ volumes, and one may ask whether these classes arise from a measure on an independently defined space of finite-$q$ super Riemann surfaces. This would require a geometric meaning for both $q$ and the charge-dependent gap, together with a prescription for gluing boundaries, which should reproduce the two limits that we know, namely the Turiaci--Witten limit and the deformed negative $2$-spin class at $q=0$. Moreover, the mixed term in~\eqref{eq:it-mixed-dressing} shows that the finite-$q$ correction depends on the gap, so a universal gap-independent factor multiplying the undeformed spin class is not enough.

For the global curve, general topological recursion gives an intersection description with an endpoint label on each component of a stable curve~\cite{Eynard:2011intersection}, but the physical meaning of these labels is not yet clear. Recall that the mixed term~\eqref{eq:full-mixed-continuum} survives the continuum scaling, even though the two disk limits are the familiar JT curves. It would be interesting to understand whether the full amplitudes can be described by gluing the two local theories with an additional interaction, which would then be constrained by the mixed terms that we have computed. This may also clarify whether the global theory has a spacetime interpretation that retains the information lost when only one edge is considered.

It would also be useful to have a direct counting interpretation of the discrete volumes. At present, their discreteness refers only to the integer boundary lengths and the associated parity dependence, and we have not given a cell decomposition or a weighted counting problem that produces the quasi-polynomials. The bosonic lattice-point construction suggests looking for a surface or ribbon-graph model, with weights that account for the signs of the supersymmetric answers. A simpler first case may be the lower local theory at $q=0$, whose intersection class is the deformed negative $2$-spin class, and one can then ask how the counting weights should change at finite $q$.

In a companion work with Saraswat~\cite{Ahmed:2026qdeformed}, we study the $\mathcal N=1$ $q$-deformed theory. Unlike the bosonic and $\mathcal{N}=2$ cases, the $\mathcal{N}=1$ theory does not have a \emph{nice} discrete quasi-polynomial. However, it does have a local edge expansion and we use that to define the $q$-deformed super WP volumes. It is unclear whether it is a fundamental property of $\mathcal{N}=1$ theories or simply an artifact of our naive generalization of the bosonic case. We hope to address some of these questions in future work. We also hope to report progress on the $\mathcal{N}=4$ case in a future work, although the double-scaling limit of the microscopic SYK is unclear at the moment.

\Needspace{12\baselineskip}
\section*{Acknowledgments}
We thank Krishan Saraswat and Clifford Johnson for discussions. W.A. is supported by the U.S. Department of Energy under grant DE-SC 0011702. J.N. is supported by the Beijing Institute of Mathematical Sciences and Applications. This research was supported in part by grant NSF PHY-2309135 to the Kavli Institute for Theoretical Physics (KITP). J.N. would like to thank the KITP for hospitality during the Graduate Fellowship program, during which part of this research was conducted. Claude and ChatGPT have been helpful in carrying out several computations\footnote{Astonishingly, with appropriate prompting, the calculations that took us days to do could be independently computed by these AI models in minutes.}.

%-------------------------------------------------
\appendix
\addtocontents{toc}{\protect\setcounter{tocdepth}{1}}

\section{Derivation of the DSSYK spectral data}
\label{app:SYK}
%-------------------------------------------------

In this appendix, we derive the spectral input used in Sections~\ref{sec:global} and~\ref{sec:BPS}, keeping track of the normalization that relates the continuum to the BPS states. We follow the oriented-chord construction of Berkooz, Brukner, Narovlansky and Raz~\cite{Berkooz:2020uly}. We will first perform the charge projection, then resolve the chord transfer matrix spectrally, and finally separate the result into individual supermultiplets. The product representation then determines both the global curve and its zero-energy residue.

Throughout this appendix, $E$ denotes the microscopic SYK energy, $s$ the charge of a state, and $r$ the average charge of a non-BPS supermultiplet. These should be distinguished from the affine energy $\Ehat=q^{1/4}E/2$ and, in the continuum limit, from $E_c=\Ehat/(1-q)^2$. For the Pochhammer symbols, we use the convention $(v;q)_n=\prod_{j=0}^{n-1}(1-vq^j)$, where products over several arguments are understood multiplicatively.

\subsection{The microscopic normalization and charge projection}
\label{app:SYK-model}

We use the Hamiltonian, couplings and charge convention of Section~\ref{subsec:SYK-main}, with $\operatorname{tr}1=1$. Since $\operatorname{tr}(\chi_{\mathbf j}\chi_{\mathbf j}^\dagger)=2^{-p}$, this convention fixes $\langle\operatorname{tr}H\rangle=1$. The alternating-word identity~\eqref{eq:syk-alternating-words} follows from $Q^2=(Q^\dagger)^2=0$, which holds because disjoint odd fermion strings anticommute, while overlapping strings have vanishing product.

Recall that the charge operator is $\gamma=p^{-1}\sum_i(\bar\chi_i\chi_i-\tfrac12)$, so that $[\gamma,Q^\dagger]=Q^\dagger$, and a state with $N_{\rm occ}$ occupied sites has charge $s=(N_{\rm occ}-N/2)/p$. In the limit~\eqref{eq:DSSYK-limit}, Stirling's formula then gives
\begin{equation}
 \frac{1}{2^N}\binom{N}{N/2+sp}
 =ds\,\sqrt{\frac{\lambda}{\pi}}\,q^{s^2}\,[1+o(1)],
 \qquad ds=\frac1p.
\label{eq:syk-charge-measure}
\end{equation}
So $\rho(E;s)$ measures the fraction of the full Hilbert space per unit energy and per unit charge, and we keep this full-space normalization for its moments,
\begin{equation}
 m_k(s)=\int dE\,E^k\rho(E;s),\qquad
 m_0(s)=\sqrt{\frac{\lambda}{\pi}}\,q^{s^2}.
\label{eq:syk-moments-definition}
\end{equation}
In order to project onto a fixed charge, let us introduce the moments $\widetilde m_k(\theta)=\langle\operatorname{tr}(H^k\e^{-\theta\gamma})\rangle$ at chemical potential $\theta$, in terms of which
\begin{equation}
 m_k(s)=\int_{-\infty}^{\infty}\frac{d\theta}{2\pi}\,
 \e^{i\theta s}\widetilde m_k(i\theta).
\label{eq:syk-charge-projection}
\end{equation}

The Gaussian average pairs each $Q$ insertion in~\eqref{eq:syk-alternating-words} with a $Q^\dagger$ insertion. We denote these endpoints by O and X, so that a pairing is an oriented chord diagram $P$. At fixed $k$, the overlap $N_{\rm sh}$ of two random index sets tends to a Poisson variable of mean $p^2/N=\lambda/2$, while triple overlaps vanish in the limit. A pair of chords whose endpoints alternate as OXOX around the circle permits an overlap, and each shared index removes one factor $1/2$ from the normalized trace, giving a relative factor $2^{N_{\rm sh}}$. For the other cyclic order, OOXX, an overlap forces two identical fermion operators together, and the trace vanishes. We call these two kinds of pairs friends and enemies, respectively, and their averaged weights are
\begin{equation}
 \langle 2^{N_{\rm sh}}\rangle=\e^{\lambda/2}=q^{-1/2},
 \qquad \Pr(N_{\rm sh}=0)=\e^{-\lambda/2}=q^{1/2}.
\label{eq:syk-pair-weights}
\end{equation}
Note that crossing chords are enemies, and their disjoint strings acquire a minus sign on interchange because $p$ is odd.

We now cut the trace at the insertion of the chemical potential, and let $n_R$ be the number of chords whose O endpoint precedes X, and $n_L$ the number of chords with the opposite orientation, with $n_{\rm net}=n_R-n_L$. An unused fermion site contributes a factor $\cosh(\theta/2p)$, whereas a right- or left-pointing chord contributes $\e^{\pm\theta/2}$ at leading order, so that
\begin{equation*}
 [\cosh(\theta/2p)]^{N-kp}\,\e^{\theta n_{\rm net}/2}
 \longrightarrow \e^{\theta^2/(4\lambda)+\theta n_{\rm net}/2}.
\end{equation*}
The corrections from shared indices vanish, since their number is of order one at fixed $k$ while $p\to\infty$. If we let $n_{\rm fr}(P)$, $n_{\rm en}(P)$ and $n_{\rm cr}(P)$ be the numbers of friends, enemies and crossings, respectively, the resulting moments are
\begin{equation}
 \widetilde m_k(\theta)=\frac{\e^{\theta^2/(4\lambda)}}{2^k}
 \sum_{P}(-1)^{n_{\rm cr}(P)}
 q^{[n_{\rm en}(P)-n_{\rm fr}(P)]/2}\,\e^{\theta n_{\rm net}(P)/2}.
\label{eq:syk-chord-moments}
\end{equation}
Here the sum runs over the pairings of both alternating words in~\eqref{eq:syk-alternating-words}. The Gaussian integral in~\eqref{eq:syk-charge-projection} is elementary,
\begin{equation*}
 \int\frac{d\theta}{2\pi}\,
 \e^{-\theta^2/(4\lambda)+i\theta(s+n_{\rm net}/2)}
 =\sqrt{\frac{\lambda}{\pi}}\,q^{(s+n_{\rm net}/2)^2}.
\end{equation*}
It follows that
\begin{equation}
 m_k(s)=\frac{1}{2^k}\sqrt{\frac{\lambda}{\pi}}
 \sum_{P}(-1)^{n_{\rm cr}(P)}
 q^{[n_{\rm en}(P)-n_{\rm fr}(P)]/2+(s+n_{\rm net}(P)/2)^2}.
\label{eq:syk-charge-chords}
\end{equation}

\subsection{The transfer matrix and its spectral measure}
\label{app:SYK-transfer}

After the charge projection, the chord weights can be expressed locally. In the linear ordering, friends are either disjoint chords of the same orientation or nested chords of opposite orientations. Using $n_{\rm en}+n_{\rm fr}=\binom{k}{2}$ and $n_R+n_L=k$, we find
\begin{equation}
 \frac{n_{\rm en}-n_{\rm fr}}{2}=\frac{k-n_{\rm net}^2}{4}+N_{\rm loc},\qquad
 N_{\rm loc}=N_{\rm nest}^{\rm same}+N_{\rm cross}^{\rm same}
      -N_{\rm nest}^{\rm opp}.
\label{eq:syk-local-weights}
\end{equation}
To see this, note that the number of disjoint pairs with the same orientation is $\binom{n_R}{2}+\binom{n_L}{2}-N_{\rm nest}^{\rm same}-N_{\rm cross}^{\rm same}$, and substituting this gives~\eqref{eq:syk-local-weights}. The nonlocal square then cancels,
\begin{equation}
 q^{(n_{\rm en}-n_{\rm fr})/2+(s+n_{\rm net}/2)^2}=q^{s^2+k/4}\,q^{N_{\rm loc}+sn_{\rm net}}.
\label{eq:syk-local-cancellation}
\end{equation}
As a result, only chords that are open at the same time interact through $N_{\rm loc}$.

The chord operators $\mathsf Q$ and $\mathsf Q^\dagger$ act on the ordered list of open endpoints. The first one either appends an O with weight $q^s$, or removes an X endpoint with weight
\begin{equation}
 (-1)^{\#\text{ endpoints below it}}\,
 q^{\#\mathrm X-1-\#\mathrm O\text{ above it}}.
\label{eq:syk-closing-weight}
\end{equation}
Here $\#\mathrm X$ is the total number of open X endpoints, including the one being removed, while $\#\mathrm O\text{ above it}$ counts only the O endpoints above it. The endpoints below the removed one were opened later, so they cross the closing chord, while those above it contain that chord, and this accounts for the sign and for the two terms in the exponent. The rules for $\mathsf Q^\dagger$ are obtained by exchanging O and X and replacing $s$ by $-s$. In terms of the transfer matrix $\mathsf T_s=\{\mathsf Q,\mathsf Q^\dagger\}$, equation~\eqref{eq:syk-charge-chords} then becomes
\begin{equation}
 m_k(s)=2^{-k}\sqrt{\frac{\lambda}{\pi}}\,
 q^{s^2+k/4}\langle\emptyset|\mathsf T_s^k|\emptyset\rangle.
\label{eq:syk-transfer-moments}
\end{equation}

Lists with two adjacent identical endpoints are null, since exchanging which of those endpoints closes first preserves every weight except the crossing sign. After quotienting by the null subspace, we are left with alternating lists~\cite{Berkooz:2020uly}. For the odd-length states $|n\rangle=|\mathrm{X}(\mathrm{OX})^n\rangle$ with $n\geq0$, the rules above give $\mathsf Q^\dagger|n\rangle=0$ and
\begin{equation}
 \mathsf T_s|n\rangle
 =|n+1\rangle+(q^{-s}+q^{s-1})|n\rangle
  +q^{-1}(1-q^{2n})|n-1\rangle.
\label{eq:syk-tridiagonal}
\end{equation}
Here the last term is absent for $n=0$. To see where its coefficient comes from, consider $n\geq1$, for which the first action is
\begin{equation*}
 \mathsf Q|n\rangle
 =q^n|(\mathrm{OX})^n\rangle+|(\mathrm{XO})^n\rangle
  +q^s|(\mathrm{XO})^{n+1}\rangle.
\end{equation*}
Acting with $\mathsf Q^\dagger$ on the first two terms produces $-q^{2n-1}|n-1\rangle$ and $q^{-1}|n-1\rangle$, respectively, and their sum gives the last coefficient in~\eqref{eq:syk-tridiagonal}. At the boundary, we have instead $\mathsf Q|0\rangle=|\emptyset\rangle+q^s|\mathrm{XO}\rangle$.

The inner product is fixed by adjointness, and states of different lengths are orthogonal to one another. Writing $\|n\|^2=\langle n|n\rangle$, equation~\eqref{eq:syk-tridiagonal} implies $\|n\|^2=q^{-1}(1-q^{2n})\|n-1\|^2$, while the boundary relation $\mathsf Q^\dagger|\emptyset\rangle=q^{-s}|0\rangle$ fixes $\|0\|^2=q^s$. Therefore,
\begin{equation}
 \|n\|^2=q^{s-n}(q^2;q^2)_n.
\label{eq:syk-chord-norm}
\end{equation}

To diagonalize the transfer matrix, we consider a state $|\Psi_s(\varphi)\rangle=\sum_{n\geq0}\upsilon_n(\varphi)|n\rangle$, for which the eigenvalue equation reads
\begin{equation*}
 [\mathcal E-(q^{-s}+q^{s-1})]\upsilon_n
 =\upsilon_{n-1}+q^{-1}(1-q^{2n+2})\upsilon_{n+1}.
\end{equation*}
If we set $\mathcal E=q^{-s}+q^{s-1}-2q^{-1/2}\cos\varphi$ and $\upsilon_n=(-1)^nq^{n/2}h_n/(q^2;q^2)_n$, this equation reduces to
\begin{equation}
 2\cos\varphi\,h_n=h_{n+1}+(1-q^{2n})h_{n-1},
 \qquad h_{-1}=0,\quad h_0=1.
\label{eq:syk-hermite}
\end{equation}
These are the continuous $q$-Hermite polynomials in $\cos\varphi$ with base $q^2$, in the conventions of Appendix~A of~\cite{Berkooz:2020uly}, and their orthogonality measure is
\begin{equation}
\begin{gathered}
 \mathfrak w_q(\varphi)=(q^2,\e^{2i\varphi},\e^{-2i\varphi};q^2)_\infty,\\
 \int_0^\pi\frac{d\varphi}{2\pi}\,\mathfrak w_q(\varphi)
 h_n(\varphi)h_m(\varphi)
 =(q^2;q^2)_n\delta_{nm}.
\end{gathered}
\label{eq:syk-hermite-measure}
\end{equation}
In particular, $\int_0^\pi \mathfrak w_q\,d\varphi=2\pi$. Using completeness together with~\eqref{eq:syk-chord-norm}, we then find the spectrum and the normalization of the eigenstates,
\begin{equation}
\begin{aligned}
 \mathcal E_s(\varphi)
 &=2q^{-1/2}\big[\cosh\lambda(s-\tfrac12)-\cos\varphi\big],\\
 \langle\Psi_s(\varphi)|\Psi_s(\varphi')\rangle
 &=q^s\frac{2\pi\delta(\varphi-\varphi')}{\mathfrak w_q(\varphi)}.
\end{aligned}
\label{eq:syk-spectrum}
\end{equation}

The even-length eigenstates that the empty diagram sees are $|\widetilde\Psi_s\rangle=\mathsf Q|\Psi_s\rangle$, together with the analogous $\mathsf Q^\dagger$-exact states obtained by $s\to-s$. By adjointness, their squared norm acquires an extra factor of the eigenvalue,
\begin{equation}
 \langle\widetilde\Psi_s(\varphi)|\widetilde\Psi_s(\varphi')\rangle
 =q^s\mathcal E_s(\varphi)
   \frac{2\pi\delta(\varphi-\varphi')}{\mathfrak w_q(\varphi)}.
\label{eq:syk-exact-norm}
\end{equation}
The overlap with the empty diagram is one, because $\upsilon_0=1$ and $q^{-s}\|0\|^2=1$, and the two exact subspaces are orthogonal by nilpotency. Inserting their spectral resolutions, we therefore find, for $k\geq1$,
\begin{equation}
 \langle\emptyset|\mathsf T_s^k|\emptyset\rangle
 =\int_0^\pi\frac{d\varphi}{2\pi}\,\mathfrak w_q(\varphi)
 \big[q^{-s}\mathcal E_s^{k-1}+q^s\mathcal E_{-s}^{k-1}\big].
\label{eq:syk-spectral-resolution}
\end{equation}
The contribution of the zero eigenvalue is determined separately by $m_0(s)$. The power in the moments has been reduced from $k$ to $k-1$, and this is what produces the supersymmetric $1/E$ factor.

\subsection{The density of a single supermultiplet}
\label{app:SYK-density}

If we combine~\eqref{eq:syk-transfer-moments} and~\eqref{eq:syk-spectral-resolution}, the transfer eigenvalue is converted into the physical energy,
\begin{equation}
 E_r(\varphi)=\frac{q^{1/4}}{2}\mathcal E_s(\varphi)
 =q^{-1/4}\big[\cosh(\lambda r)-\cos\varphi\big],
 \qquad r=s-\tfrac12.
\label{eq:syk-energy}
\end{equation}
The other branch has $r=s+\tfrac12$, and the prefactors combine according to
\begin{equation*}
 2^{-k}q^{s^2+k/4}q^{-s}\mathcal E_s^{k-1}
 =\frac12q^{(s-1/2)^2}E_{s-1/2}^{k-1},
\end{equation*}
which gives the fixed-charge moments~\eqref{eq:syk-energy-moments}. So the compactly supported positive-energy measure is~\eqref{eq:syk-sector-density}, and its mass at zero is fixed by $m_0(s)$. Integrating the delta function with $dE_r/d\varphi=q^{-1/4}\sin\varphi$ then gives~\eqref{eq:syk-multiplet-density}. The density vanishes outside the band $q^{-1/4}[\cosh(\lambda r)-1]\leq E\leq q^{-1/4}[\cosh(\lambda r)+1]$, and both of its edges are soft for $r\ne0$. At $r=0$, on the other hand, $E\propto\varphi^2$ and $\mathfrak w_q\propto\varphi^2$, so the lower edge behaves as $E^{-1/2}$.

The multiplet interpretation follows from the supersymmetry algebra. At $E>0$, let us choose a lower component $|-\rangle\in\ker Q$ and define $|+\rangle=(2E)^{-1/2}Q^\dagger|-\rangle$. Then $Q|+\rangle=\sqrt{2E}|-\rangle$, and the two states have charges $r-\tfrac12$ and $r+\tfrac12$, so the supersymmetric pairing acts on these complementary subspaces within the adjacent charge sectors. Equation~\eqref{eq:syk-sector-density} counts one component in each sector, and summing the two components of a fixed multiplet would give $2\rho_r$.

Note that in~\eqref{eq:syk-energy-moments} the Gaussian weight and the threshold necessarily carry the same shift. This corrects the crossed pairing in equation~(5.18) of~\cite{Berkooz:2020uly}, whose prefactor should also be $\sqrt\lambda/(4\pi^{3/2})$ with the angular measure written there. Both points can be checked before using any BPS information. Since the first two angular moments in the measure $\mathfrak w_q\,d\varphi/(2\pi)$ are $1$ and $0$, we have
\begin{equation}
\begin{aligned}
 m_1(s)&=\frac12\sqrt{\frac\lambda\pi}
 \big[q^{(s-1/2)^2}+q^{(s+1/2)^2}\big],\\
 m_2(s)&=\frac{q^{-1/2}}4\sqrt{\frac\lambda\pi}
 \big[q^{(s-1)^2}+2q^{s^2}+q^{(s+1)^2}\big].
\end{aligned}
\label{eq:syk-low-moments}
\end{equation}
These agree with a direct enumeration of the one- and two-chord diagrams in~\eqref{eq:syk-charge-chords}, and they give $\int m_1(s)\,ds=1$ and $\int m_2(s)\,ds=q^{-1/2}$. For generic $s$, the second moment distinguishes our pairing from the crossed one in the published formula.

\subsection{The product, the global curve and the lower edge}
\label{app:SYK-curve}

With $z=\e^{i\varphi}$ and $A_q=(q^2;q^2)_\infty$, removing the zero-index factors from the two Pochhammer symbols gives
\begin{equation*}
 \mathfrak w_q(\varphi)=4\sin^2\varphi\,A_q
 \prod_{n\geq1}(1-q^{2n}z^2)(1-q^{2n}z^{-2}).
\end{equation*}
Each pair of factors is equal to $(1-q^{2n})^2-q^{2n}(z-z^{-1})^2$, and factoring out the squares, we get
\begin{equation}
 \frac{\mathfrak w_q(\varphi)}{\sin\varphi}
 =4A_q^3\sin\varphi\,\Pq(z),\qquad
 \log\Pq(z)=-\sum_{m\geq1}\frac{Z_m}{m}(z-z^{-1})^{2m}.
\label{eq:syk-product}
\end{equation}
The logarithmic series follows by expanding each $\log(1-X)$ and summing over the product index, and its coefficients are the $q$-zeta values $Z_m=\zh{2m}$. The series converges for $|z-z^{-1}|<2\sinh\lambda$, and outside this neighborhood the infinite product~\eqref{eq:Pq-global} supplies the continuation.

With $c=\cosh(\lambda r)$, the affine energy coordinate and the gap satisfy
\begin{equation}
\begin{gathered}
 \Ehat=\frac{q^{1/4}E}{2}=\frac{2c-x}{4},\qquad
 \Ezero=\frac{c-1}{2}=\sinh^2\!\frac{\lambda r}{2},\\
 \Ehat-\Ezero=\sin^2\!\frac\varphi2,\qquad
 1+\Ezero-\Ehat=\cos^2\!\frac\varphi2.
\end{gathered}
\label{eq:syk-energy-coordinates}
\end{equation}
Using the energy Jacobian and~\eqref{eq:syk-product}, we then obtain the normalized global density~\eqref{eq:syk-global-density} used in Section~\ref{subsec:global-curve}.

We can also check the same normalization directly at the lower edge. With $w$ as in~\eqref{eq:lower-w}, we have $z-z^{-1}=4w\sqrt{1+w^2}$ on the local sheet, and the exact relations
\begin{equation}
\begin{aligned}
 x&=2+4w^2,\qquad \Ehat=\Ezero-w^2,\\
 y\,dx&=\frac{w^2\sqrt{1+w^2}}{\Ezero-w^2}
 \prod_{n\geq1}\left[1-\frac{16q^{2n}w^2(1+w^2)}{(1-q^{2n})^2}\right]dw.
\end{aligned}
\label{eq:syk-exact-edge}
\end{equation}
If we keep only the terms of highest weight, $1+w^2$ is replaced by $1$, while $\Ezero-w^2$ is retained in full, so that
\begin{equation}
 y\,dx\longrightarrow
 \frac{w^2}{\Ezero-w^2}
 \exp\!\left[-\sum_{m\geq1}\frac{Z_m}{m}(4w)^{2m}\right]dw
 =y_{\rm loc}\,dx_{\rm loc}.
\label{eq:syk-edge-normalization}
\end{equation}
This derives the lower local curve~\eqref{eq:local-curve} with $x_{\rm loc}=w^2/2$, including its normalization. On the cut, we have $w=i\sqrt{\Ehat-\Ezero}$, and the same one-form convention then gives the local density~\eqref{eq:local-density}, or equivalently
\begin{equation}
 \rho_0^q(\Ehat)=\frac{\sqrt{\Ehat-\Ezero}}{2\pi\hbar\Ehat}
 \prod_{n\geq1}\left[1+\frac{16q^{2n}(\Ehat-\Ezero)}{(1-q^{2n})^2}\right].
\label{eq:syk-local-product}
\end{equation}

\subsection{The BPS contribution and the zero-energy residue}
\label{app:SYK-BPS}

Let us first recall the Jacobi triple product and the Fourier series of $\vartheta_1$, with $q$ as the theta nome,
\begin{equation}
\begin{aligned}
 \vartheta_1(v,q)
 &=2q^{1/4}\sin v\,(q^2,q^2\e^{2iv},q^2\e^{-2iv};q^2)_\infty\\
 &=2\sum_{n\geq0}(-1)^n q^{(n+1/2)^2}\sin((2n+1)v).
\end{aligned}
\label{eq:syk-theta-definition}
\end{equation}
Including the zero-index factors in $\mathfrak w_q$, it follows that
\begin{equation}
\begin{aligned}
 \mathfrak w_q(\varphi)&=2q^{-1/4}\sin\varphi\,\vartheta_1(\varphi,q),\\
 \rho_r(E)&=\frac{\sqrt\lambda}{2\pi^{3/2}}
 q^{r^2}\frac{\vartheta_1(\varphi,q)}{E}.
\end{aligned}
\label{eq:syk-theta-density}
\end{equation}
This representation allows us to determine the mass at $E=0$, and hence the normalized BPS weight, directly from the normalization.

Let $m_{\rm br}(r)=\int dE\,\rho_r(E)$ be the mass of the branch $r$. Changing variables from $E$ to $\varphi$ in~\eqref{eq:syk-multiplet-density}, we have
\begin{equation}
 m_{\rm br}(r)=\frac{\sqrt\lambda\, q^{1/4}}{4\pi^{3/2}}\,q^{r^2}
 \int_0^\pi\frac{\mathfrak w_q(\varphi)}{\cosh(\lambda r)-\cos\varphi}\,d\varphi.
\label{eq:syk-mass-integral}
\end{equation}
To evaluate this integral, we expand $\sinh v/(\cosh v-\cos\varphi)=1+2\sum_{j\geq1}\e^{-jv}\cos(j\varphi)$ for $v>0$. Together with $2\sin\varphi\sin((2n+1)\varphi)=\cos(2n\varphi)-\cos((2n+2)\varphi)$, this gives
\begin{equation*}
 \int_0^\pi\frac{\sin\varphi\sin((2n+1)\varphi)}{\cosh v-\cos\varphi}\,d\varphi
 =\frac{\pi(\e^{-2nv}-\e^{-(2n+2)v})}{2\sinh v}
 =\pi\e^{-(2n+1)v}.
\end{equation*}
Setting $v=\lambda|r|$ in this integral and inserting the Fourier series~\eqref{eq:syk-theta-definition} into~\eqref{eq:syk-mass-integral}, we complete the square in the exponent and find
\begin{equation}
 m_{\rm br}(r)=\sqrt{\frac\lambda\pi}
 \sum_{n\geq0}(-1)^nq^{(|r|+n+1/2)^2}.
\label{eq:syk-branch-mass}
\end{equation}
The value at $r=0$ follows by continuity, since the hard-edge singularity is integrable.

The BPS weight is then fixed by the normalization condition $\rho_{\rm BPS}(s)=m_0(s)-m_{\rm br}(s-\tfrac12)-m_{\rm br}(s+\tfrac12)$. For $s\geq\tfrac12$, the two series telescope to $m_0(s)$, and by reflection the same holds for $s\leq-\tfrac12$, so there are no BPS states in these sectors. For $|s|<\tfrac12$, on the other hand, shifting the summation indices gives
\begin{equation}
\begin{aligned}
 \rho_{\rm BPS}(s)&=\sqrt{\frac\lambda\pi}
 \left[q^{s^2}-\sum_{n\geq0}(-1)^n
 \big(q^{(n+1-s)^2}+q^{(n+1+s)^2}\big)\right]\\
 &=\sqrt{\frac\lambda\pi}\sum_{j\in\mathbb Z}(-1)^j\e^{-\lambda(j+s)^2}.
\end{aligned}
\label{eq:syk-bps-sum}
\end{equation}
Poisson summation converts the second line into $\sum_{j\in\mathbb Z}\e^{-\pi^2(j+1/2)^2/\lambda}\e^{2\pi i(j+1/2)s}$. So, if we write $\tilde q=\e^{-\pi^2/\lambda}$ for the dual nome and $\vartheta_2(v,\tilde q)=\sum_{j\in\mathbb Z}\tilde q^{(j+1/2)^2}\e^{(2j+1)iv}$, we get
\begin{equation}
 \rho_{\rm BPS}(s)=
 \begin{cases}
 \vartheta_2(\pi s,\tilde q),&|s|<\tfrac12,\\
 0,&|s|\geq\tfrac12.
 \end{cases}
\label{eq:syk-bps-density}
\end{equation}
The values at the endpoints vanish as well. This is the BPS weight per unit charge, in the trace normalization~\eqref{eq:syk-charge-measure}.

For the residue, we use the Jacobi imaginary transformation
\begin{equation}
 \vartheta_1(v,\e^{-\lambda})
 =-i\sqrt{\frac\pi\lambda}\,\e^{-v^2/\lambda}
 \vartheta_1\!\left(\frac{i\pi v}{\lambda},\tilde q\right).
\label{eq:syk-modular-transform}
\end{equation}
Substituting this into~\eqref{eq:syk-theta-density}, we find
\begin{equation}
 \rho_r(E)=\frac{q^{r^2}\e^{-\varphi^2/\lambda}}{2\pi E}
 \left[-i\vartheta_1\!\left(\frac{i\pi\varphi}{\lambda},\tilde q\right)\right].
\label{eq:syk-modular-density}
\end{equation}
For $r\neq0$, the point $E=0$ is reached on the sheet $\varphi=i\lambda|r|$. There the Gaussian factors cancel exactly, $q^{r^2}\e^{-\varphi^2/\lambda}=1$, and the oddness of $\vartheta_1$ fixes the sign, so that
\begin{equation}
 \mathop{\rm Res}_{E=0}\rho_r(E)
 =\frac{i}{2\pi}\vartheta_1(\pi|r|,\tilde q).
\label{eq:syk-residue}
\end{equation}
On the other sheet, the sign is reversed. When $0<|r|<1$, the unique component in the BPS window is $s_r=r-\tfrac12\operatorname{sgn}r$, and using the evenness of $\vartheta_2$ together with $\vartheta_2(v-\pi/2,\tilde q)=\vartheta_1(v,\tilde q)$, we find
\begin{equation}
 \rho_{\rm BPS}(s_r)=\vartheta_1(\pi|r|,\tilde q),\qquad
 \mathop{\rm Res}_{E=0}\rho_r(E)=\frac{i}{2\pi}\rho_{\rm BPS}(s_r).
\label{eq:syk-bps-residue}
\end{equation}
This proves~\eqref{eq:BPS-residue} in the stated range of charges. Equation~\eqref{eq:syk-residue} extends analytically beyond that range, but it can be interpreted as a BPS degeneracy only within the window just described. At nonzero integer $|r|$, the residue vanishes because the factor with $n=|r|$ in the product below cancels the pole. At $r=0$, the zero-energy point is the hard edge and has a branch singularity, so we treat it directly through the hard-edge density.

\subsection{Pole couplings and the continuum limit}
\label{app:SYK-continuum}

In order to obtain the coupling in the normalization of the volume problem, we continue~\eqref{eq:syk-global-density} to the sheet with $\sqrt{\Ehat-\Ezero}=i\sqrt{\Ezero}$ at zero energy, where equations~\eqref{eq:syk-energy-coordinates} imply
\begin{equation*}
 (z-z^{-1})^2=-16(\Ehat-\Ezero)(1+\Ezero-\Ehat)
 \;\longrightarrow\;16\Ezero(1+\Ezero).
\end{equation*}
The residue of the global density is therefore
\begin{equation}
 \mathop{\rm Res}_{\Ehat=0}\rho_0^{\rm glob}(\Ehat)
 =\frac{i\sqrt{\Ezero(1+\Ezero)}}{2\pi\hbar}
 \prod_{n\geq1}\left[1-\frac{16q^{2n}\Ezero(1+\Ezero)}{(1-q^{2n})^2}\right].
\label{eq:syk-global-residue}
\end{equation}

With $\epsilon=1-q$, the continuum disk convention used in~\eqref{eq:density-limit} is $\rho^{\rm cont}_q(E_c)=\epsilon\rho_0(\epsilon^2E_c)$, where $\rho_0$ stands for either $\rho_0^{\rm glob}$ or $\rho^q_0$. This convention includes a change of normalization, since the pushforward of a measure alone would come with the Jacobian $\epsilon^2$. In particular,
\begin{equation}
 \mathop{\rm Res}_{E_c=0}\rho^{\rm cont}_q(E_c)
 =\frac1\epsilon\mathop{\rm Res}_{\Ehat=0}\rho_0(\Ehat)
 =\frac{i}{\hbar}\widetilde\Gamma_q.
\label{eq:syk-rescaled-residue}
\end{equation}
In~\eqref{eq:syk-rescaled-residue}, $\widetilde\Gamma_q$ denotes the rescaled pole coupling, and for the global density this derives~\eqref{eq:Gamma-global}. Repeating the calculation with~\eqref{eq:syk-local-product} removes the factor $1+\Ezero$ and gives~\eqref{eq:Gamma-top}. For the further density normalization $\rho^{\rm J}_{0,q}=\rho^{\rm cont}_q/(2\pi)$, the residue becomes $i\widetilde\Gamma_q/(2\pi\hbar)$, which is the WKB residue of a string equation with coefficient $\widetilde\Gamma_q$, and we relate it to the unscaled string-equation coefficient in Appendix~\ref{subsec:ODE-continuum-data}. Restoring the microscopic weight also provides a normalization check,
\begin{equation}
 \mathop{\rm Res}_{E=0}\rho_r(E)
 =\mathcal W_r\frac{i\epsilon}{\hbar}\,
 \widetilde\Gamma_q^{\rm glob}(\Ezero).
\label{eq:syk-normalization-check}
\end{equation}
These one-form residues are invariant under the energy transformation in~\eqref{eq:syk-global-density}.

The theta-function forms of both couplings follow from a single product identity. Let us define, for $v\geq0$,
\begin{equation}
 \mathfrak g_q(v)=\frac{\sinh(\lambda v)}{4\pi\epsilon}
 \frac{(q^{2+2v},q^{2-2v};q^2)_\infty}{A_q^2}.
\label{eq:syk-coupling-product}
\end{equation}
Since
\begin{equation*}
 (1-q^{2n})^2-4q^{2n}\sinh^2(\lambda v)
 =(1-q^{2n+2v})(1-q^{2n-2v}),
\end{equation*}
the two couplings are given by
\begin{equation}
 \widetilde\Gamma_q^{\rm glob}=\mathfrak g_q(|r|),\qquad
 \widetilde\Gamma_q^{\rm loc}=\mathfrak g_q\big(\lambda^{-1}\operatorname{arsinh}(2\sqrt{\Ezero})\big).
\label{eq:syk-two-couplings}
\end{equation}
For the first identity, we use $\Ezero=\sinh^2(\lambda r/2)$, so that $2\sqrt{\Ezero(1+\Ezero)}=\sinh(\lambda|r|)$. Evaluating~\eqref{eq:syk-theta-definition} at the argument $i\lambda v$ and then applying~\eqref{eq:syk-modular-transform}, we find successively
\begin{equation}
\begin{aligned}
 \mathfrak g_q(v)&=\frac{\vartheta_1(i\lambda v,q)}{8\pi i\epsilon q^{1/4}A_q^3}\\
 &=\frac{\sqrt{\pi/\lambda}\,q^{-v^2-1/4}}{8\pi\epsilon A_q^3}
   \vartheta_1(\pi v,\tilde q).
\end{aligned}
\label{eq:syk-coupling-theta}
\end{equation}
This proves the formula in Appendix~\ref{app:theta} together with its global counterpart. As a check, multiplying by $\mathcal W_r i\epsilon/\hbar$ leaves $i\vartheta_1(\pi|r|,\tilde q)/(2\pi)$, as required by~\eqref{eq:syk-residue}.

Finally, let us hold $E_0$ fixed and set $\Ezero=\epsilon^2E_0$. The elementary bound
\begin{equation}
 0\leq\frac{\epsilon^2q^{2n}}{(1-q^{2n})^2}
 =\left[\frac{1-\e^{-\lambda}}{2\sinh(n\lambda)}\right]^2
 \leq\frac{1}{4n^2}
\label{eq:syk-product-bound}
\end{equation}
controls the tails of the products uniformly, and each term tends to $1/(4n^2)$. Dominated convergence then proves~\eqref{eq:qzeta-limit}, and it justifies taking the limit directly in the products over their full domain, with the logarithmic series used only within its disk of convergence. Using Euler's sine product, we then find, for either coupling,
\begin{equation}
 \widetilde\Gamma_q(\epsilon^2E_0)\longrightarrow
 \frac{\sqrt{E_0}}{2\pi}\prod_{n\geq1}\left(1-\frac{4E_0}{n^2}\right)
 =\frac{\sin(2\pi\sqrt{E_0})}{4\pi^2}.
\label{eq:syk-continuum-coupling}
\end{equation}
This is the $\mathcal N=2$ JT pole coupling of Turiaci and Witten~\cite{Turiaci:2023jfa}, in our normalization.

In order to derive~\eqref{eq:full-edge-JT} directly from the global curve, we use the coordinates~\eqref{eq:full-edge-coordinates}, in terms of which, at either edge,
\begin{equation}
 \Pq(z_\epsilon^\pm(\bar w))=\prod_{n\geq1}
 \left[1-\frac{16\epsilon^2q^{2n}\bar w^2(1+\epsilon^2\bar w^2)}
 {(1-q^{2n})^2}\right].
 \label{eq:syk-full-edge-product}
\end{equation}
Substituting these into~\eqref{eq:global-curve}, we get the exact pullbacks
\begin{equation}
 \begin{aligned}
 (z_\epsilon^+)^*(y\,dx)
 &=\epsilon\,\frac{\bar w^2\sqrt{1+\epsilon^2\bar w^2}}{E_0-\bar w^2}
       \Pq(z_\epsilon^+(\bar w))\,d\bar w,\\
 (z_\epsilon^-)^*(y\,dx)
 &=\epsilon^3\,\frac{\bar w^2\sqrt{1+\epsilon^2\bar w^2}}
       {1+\epsilon^2(E_0+\bar w^2)}\Pq(z_\epsilon^-(\bar w))\,d\bar w.
 \end{aligned}
 \label{eq:syk-full-edge-pullbacks}
\end{equation}
The bound~\eqref{eq:syk-product-bound} shows that this product tends to $\prod_{n\geq1}(1-4\bar w^2/n^2)=\sin(2\pi\bar w)/(2\pi\bar w)$ uniformly on compact sets. The limits of the one-forms hold away from their poles, and the value at $\bar w=0$ is obtained by continuation. Since $z_\epsilon^\pm(\bar w)=\pm(1+2\epsilon\bar w+O(\epsilon^2))$, the pullback of $B$ within each edge tends to $d\bar w_1\,d\bar w_2/(\bar w_1-\bar w_2)^2$. If we instead hold $\Ezero>0$ fixed, the normalizing factors for the two one-forms are $\Ezero/\epsilon^3$ and $(1+\Ezero)/\epsilon^3$, respectively, and both limits are then the bosonic one-form in~\eqref{eq:full-edge-JT}.

In the correlated scaling of the gap, the global factor $1+\Ezero$ tends to one. Similarly, for $E_c>E_0$, the local product~\eqref{eq:syk-local-product} gives
\begin{equation}
 \epsilon\rho_0^q(\epsilon^2E_c)\longrightarrow
 \frac{\sinh(2\pi\sqrt{E_c-E_0})}{4\pi^2\hbar E_c}.
\label{eq:syk-continuum-disk}
\end{equation}
This is the $\mathcal N=2$ JT disk density of Turiaci and Witten~\cite{Turiaci:2023jfa}, in our normalization. The exact global density has the same limit, because $1+\Ezero-\Ehat\to1$, and setting $E_0=0$ recovers~\eqref{eq:density-limit}.

The microscopic normalization has to be recovered separately. At fixed $r$, we have $\Ezero/\epsilon^2\to r^2/4$ and $\varphi/\lambda\to2\sqrt{E_c-r^2/4}$, and for this edge scaling the leading Fourier mode in~\eqref{eq:syk-modular-density} gives
\begin{equation}
 \rho_r(E)\,dE\sim\e^{-\pi^2/(4\lambda)}
 \frac{\sinh(2\pi\sqrt{E_c-r^2/4})}{\pi E_c}\,dE_c.
\label{eq:syk-microscopic-limit}
\end{equation}
Correspondingly, for $0<|r|<1$, the BPS weight satisfies $\rho_{\rm BPS}(s_r)\sim2\e^{-\pi^2/(4\lambda)}\sin(\pi|r|)$. The corrections from higher theta modes are of relative order $\e^{-2\pi^2/\lambda}$ at fixed scaled energy, while the changes of variables and the Gaussian prefactors also produce ordinary corrections in $\lambda$. These differences account for the different prefactors in~\eqref{eq:syk-continuum-disk} and~\eqref{eq:syk-microscopic-limit}, and the pole residues agree once the normalization~\eqref{eq:syk-normalization-check} is restored.

\section{The matrix potential and Joukowsky uniformization}
\label{app:potential}
%-------------------------------------------------

In Section~\ref{subsec:global-curve}, we obtained the global curve from its disk density. In this appendix, we reconstruct a matrix potential that produces the same one-cut planar saddle, and then recover the curve from its resolvent. The calculation follows the density-to-potential construction reviewed in~\cite[Section 2]{Okuyama:2023byh}, which originates in the matrix-model construction of~\cite{Jafferis:2022ETH}, together with the Joukowsky method of~\cite[Section 3]{Okuyama:2023byh}. Compared with the bosonic case, the additional factor $(2c-x)^{-1}$ introduces a subtraction, fixed by regularity of the physical resolvent, and, at generic charge, a logarithmic term in the potential. We will derive both by dividing the density by $2c-x$ before reconstructing the potential.

Throughout this appendix, $x$ is the global spectral coordinate, and the matrix potential $\mathcal V_{q,c}(x)$ should not be confused with the auxiliary Schr\"odinger potential $u(\xi)$ of Section~\ref{sec:ODE}. We will first take $0<q<1$ and $c>1$, so that both endpoints of the cut are soft, and treat the hard-edge limit $c\downarrow1$ separately at the end of this appendix.

\subsection{The density and its normalization}
\label{app:potential-density}

We will use the notation
\begin{equation}
 x_\ast=2c,\qquad A_q=(q^2;q^2)_\infty,\qquad
 \alpha_q=\frac{1}{4A_q^3},\qquad
 \Pi_q(x)=\prod_{n\geq1}
 \left[1-\frac{q^{2n}(x^2-4)}{(1-q^{2n})^2}\right].
 \label{eq:pot-notation}
\end{equation}
In particular, $\Pi_q(x(z))=\Pq(z)$. The product defining $\Pi_q$ converges locally uniformly in the complex $x$-plane, so it defines an entire function. The global curve has a pole at $x=x_\ast$ unless $\Pi_q(x_\ast)=0$, and for $c>1$ this cancellation happens exactly when $|r|$ is a nonzero integer.

The bosonic density of unit mass with deformation parameter $q^2$, after rescaling its support to $[-2,2]$, is given by
\begin{equation}
 \varrho_{\rm b}(x)=\frac{A_q^3}{2\pi}
 \sqrt{4-x^2}\,\Pi_q(x)
 =\frac{\mathfrak w_q(\varphi)}{2\pi\sqrt{4-x^2}},
 \qquad x=2\cos\varphi,
 \qquad \int_{-2}^2\varrho_{\rm b}(x)\,dx=1.
 \label{eq:pot-bosonic-density}
\end{equation}
Here $\mathfrak w_q$ is the angular measure in~\eqref{eq:syk-hermite-measure}, and the normalization also follows from its constant Fourier coefficient. If we transform the supersymmetric density~\eqref{eq:syk-global-density} using $\Ehat=(x_\ast-x)/4$, we get
\begin{equation}
 \varrho_{q,c}(x):=\frac{\hbar}{4}
 \rho_0^{\rm glob}\!\left(\frac{x_\ast-x}{4}\right)
 =\frac{\sqrt{4-x^2}\,\Pi_q(x)}{8\pi(x_\ast-x)}
 =\alpha_q\frac{\varrho_{\rm b}(x)}{x_\ast-x}.
 \label{eq:pot-super-density}
\end{equation}
Here the factor $1/4$ is the absolute value of the energy Jacobian, and removing the disk factor $\hbar^{-1}$ preserves the normalization of $y$ in~\eqref{eq:global-curve}. The resulting continuum measure has the filling fraction defined below, which we will keep explicit.

We define the filling fraction and the resolvent by
\begin{equation}
 \nu_{q,c}=\int_{-2}^2\varrho_{q,c}(t)\,dt,\qquad
 R_{q,c}(x)=\int_{-2}^2\frac{\varrho_{q,c}(t)}{x-t}\,dt
 =\frac{\nu_{q,c}}{x}+O(x^{-2}).
 \label{eq:pot-resolvent-convention}
\end{equation}
These are the conventions of a formal $L\times L$ Hermitian matrix integral
\begin{equation}
 \mathcal Z_{q,c}=\int dM\,
  \exp[-\hbar^{-1}\operatorname{Tr}\mathcal V_{q,c}(M)],
 \qquad L\hbar=\nu_{q,c}.
 \label{eq:pot-matrix-integral}
\end{equation}
Varying the eigenvalue action at fixed filling, we obtain the saddle-point equation
\begin{equation}
 \frac12\mathcal V'_{q,c}(x)
 =\operatorname{PV}\int_{-2}^2
       \frac{\varrho_{q,c}(t)}{x-t}\,dt,
 \qquad -2<x<2.
 \label{eq:pot-saddle}
\end{equation}
Equivalently, one could use a unit-mass convention with $\mathcal V_{\rm unit}=\mathcal V_{q,c}/\nu_{q,c}$, $\varrho_{\rm unit}=\varrho_{q,c}/\nu_{q,c}$ and weight $\exp[-L\operatorname{Tr}\mathcal V_{\rm unit}(M)]$, in which case $y$ is also divided by $\nu_{q,c}$. We will keep the original normalization, in order to match the global curve used in the main text.

\subsection{The bosonic potential as a Chebyshev series}
\label{app:potential-bosonic}

The Jacobi triple product gives a rapidly convergent Chebyshev expansion of the bosonic density. To write it, let us define
\begin{equation}
 \mathfrak d_n(q)=(-1)^{n-1}
       \big[q^{n(n-1)}+q^{n(n+1)}\big],\qquad n\geq1.
 \label{eq:pot-Chebyshev-coefficients}
\end{equation}
Then, with $T_n$ and $U_n$ the Chebyshev polynomials of the first and second kinds, we have
\begin{equation}
 \varrho_{\rm b}(x)
 =\frac{1}{\pi\sqrt{4-x^2}}
  \left[1-\sum_{n\geq1}\mathfrak d_n(q)T_{2n}(x/2)\right].
 \label{eq:pot-density-Chebyshev}
\end{equation}
Here we have combined the two terms with Fourier indices $n$ and $-n$, but kept the constant term separate, since this makes the principal-value calculation straightforward. For $-2<x<2$, we have
\begin{equation}
 \begin{aligned}
 \operatorname{PV}\int_{-2}^2
     \frac{dt}{(x-t)\sqrt{4-t^2}}&=0,\\
 \operatorname{PV}\int_{-2}^2
     \frac{T_m(t/2)\,dt}{(x-t)\sqrt{4-t^2}}
     &=-\frac{\pi}{2}U_{m-1}(x/2),\qquad m\geq1.
 \end{aligned}
 \label{eq:pot-Hilbert-Chebyshev}
\end{equation}
These identities follow by writing $t$ as twice a cosine and using the Chebyshev recurrence. The constant integral vanishes in the principal-value prescription, the $m=1$ integral is $-\pi/2$, and the recurrence generates the remaining $U_{m-1}$.

The identity $dT_{2n}(x/2)/dx=nU_{2n-1}(x/2)$ supplies the antiderivative, so substituting into the bosonic saddle equation and integrating, we find
\begin{equation}
 \mathcal V'_{\rm b}(x)=\sum_{n\geq1}\mathfrak d_n(q)U_{2n-1}(x/2),\qquad
 \mathcal V_{\rm b}(x)=\sum_{n\geq1}\frac{\mathfrak d_n(q)}{n}T_{2n}(x/2)
              +\mathrm{const}.
 \label{eq:pot-bosonic-potential}
\end{equation}
This is Okuyama's potential~\cite{Okuyama:2023byh}, with $q$ replaced by $q^2$ and the support rescaled to $[-2,2]$. The coefficients decay as $q^{n^2}$ up to factors exponential in $n$, so the potential and all its derivatives converge locally uniformly. In the algebraic limit $q\to0$, we have $\mathfrak d_1\to1$ while all the other $\mathfrak d_n\to0$, which gives the Gaussian potential $\mathcal V_{\rm b}=x^2/2+\mathrm{const}$.

\subsection{The Joukowsky map and physical-sheet regularity}
\label{app:potential-Joukowsky}

The square root on the cut plane is uniformized by
\begin{equation}
 x=z+z^{-1},\qquad \sqrt{x^2-4}=z-z^{-1},\qquad
 z=\frac{x+\sqrt{x^2-4}}2,
 \label{eq:pot-Joukowsky}
\end{equation}
where $\sqrt{x^2-4}\sim x$ at infinity selects the physical sheet $|z|>1$. The other sheet is obtained by $z\mapsto z^{-1}$, and the two sides of the cut lift to the unit circle. The expansion at $z=0$ which defines the discrete volumes in Section~\ref{sec:globalTR} lies on this other sheet, and the two conventions are related by inversion. Since $dx=(1-z^{-2})dz$, the ramification points are at $z=\pm1$. Near $z=1$, for example, we have $x-2=(z-1)^2/z$, so the apparent square-root branching in $x$ becomes a regular dependence on the local coordinate $z$. In this way, the coordinate $z$ resolves the branch cut while preserving the genuine poles of $y$.

The Chebyshev identity
\begin{equation}
 U_{2n-1}\!\left(\frac{z+z^{-1}}2\right)
 =\sum_{j=0}^{n-1}\big(z^{2j+1}+z^{-2j-1}\big)
 \label{eq:pot-Chebyshev-Laurent}
\end{equation}
turns the potential derivative into a Laurent expansion, whose coefficients telescope,
\begin{equation}
 \sum_{n=j+1}^N \mathfrak d_n(q)
 =(-1)^j q^{j(j+1)}-(-1)^Nq^{N(N+1)}.
 \label{eq:pot-telescoping}
\end{equation}
Consequently, in terms of the entire odd function
\begin{equation}
 S_q(v)=\sum_{j\geq0}(-1)^j q^{j(j+1)}v^{2j+1},
 \label{eq:pot-partial-theta}
\end{equation}
we obtain
\begin{equation}
 \mathcal V'_{\rm b}(x(z))=S_q(z)+S_q(z^{-1}),\qquad
 R_{\rm b}(x(z))=S_q(z^{-1}).
 \label{eq:pot-bosonic-resolvent}
\end{equation}
The second expression is the part that is regular on the physical sheet and vanishes at infinity, and its leading term is $1/x$, as required by unit mass. We can also check this separation directly from the Cauchy transform of~\eqref{eq:pot-density-Chebyshev},
\begin{equation}
 R_{\rm b}(x)=
 \frac{1-\sum_{n\geq1}\mathfrak d_n(q)z^{-2n}}{\sqrt{x^2-4}}
 =S_q(z^{-1}),\qquad |z|>1.
 \label{eq:pot-bosonic-Cauchy}
\end{equation}
So the Cauchy transform and the regularity on the physical sheet fix the resolvent away from the cut.

It follows that
\begin{equation}
 y_{\rm b}(x(z))=\frac12\mathcal V'_{\rm b}(x(z))-R_{\rm b}(x(z))
 =\frac12\big[S_q(z)-S_q(z^{-1})\big].
 \label{eq:pot-bosonic-y}
\end{equation}
Applying the Jacobi triple product to this antisymmetric Laurent series, we find
\begin{equation}
 \begin{aligned}
 S_q(z)-S_q(z^{-1})
 &=(z-z^{-1})\prod_{n\geq1}
       (1-q^{2n})(1-q^{2n}z^2)(1-q^{2n}z^{-2})\\
 &=A_q^3(z-z^{-1})\Pi_q(x(z)).
 \end{aligned}
 \label{eq:pot-triple-product}
\end{equation}
In particular, $y_{\rm b}(x)=A_q^3\sqrt{x^2-4}\,\Pi_q(x)/2$ on the physical sheet, with the opposite sign on the other sheet.

\subsection{The supersymmetric potential}
\label{app:potential-super}

Let us now turn to the supersymmetric density. The rational reweighting in~\eqref{eq:pot-super-density} can be handled exactly, and to do so we define
\begin{equation}
 \eta=\frac{x_\ast-\sqrt{x_\ast^2-4}}2=c-\sqrt{c^2-1},\qquad
 0<\eta<1,\qquad R_{\rm b}(x_\ast)=S_q(\eta).
 \label{eq:pot-pole-data}
\end{equation}
The partial fraction identity
\begin{equation}
 \frac{1}{(x_\ast-t)(x-t)}
 =\frac{1}{x_\ast-x}\left(\frac1{x-t}-\frac1{x_\ast-t}\right)
 \label{eq:pot-partial-fractions}
\end{equation}
then gives both the new resolvent and the filling,
\begin{equation}
 R_{q,c}(x)=\alpha_q\frac{R_{\rm b}(x)-R_{\rm b}(x_\ast)}{x_\ast-x},\qquad
 \nu_{q,c}=\alpha_qR_{\rm b}(x_\ast).
 \label{eq:pot-super-resolvent}
\end{equation}
This gives a useful check of the microscopic normalization. For $c=\cosh(\lambda r)$, we have $\eta=q^{|r|}$, and restoring the multiplet weight in~\eqref{eq:syk-global-density}, we find
\begin{equation}
 \frac{\mathcal W_r}{\hbar}\nu_{q,c}
 =\sqrt{\frac{\lambda}{\pi}}
   \sum_{j\geq0}(-1)^j q^{(|r|+j+1/2)^2},
 \label{eq:pot-microscopic-mass}
\end{equation}
which is the non-BPS continuum mass in~\eqref{eq:syk-branch-mass}. So the eigenvalue density that we use to reconstruct the potential accounts for the non-BPS continuum, while the BPS delta function is treated separately.

Since $x_\ast$ lies outside the cut, the apparent pole of this resolvent is removable,
\begin{equation}
 R_{q,c}(x_\ast)=-\alpha_q R'_{\rm b}(x_\ast)
 =\alpha_q\int_{-2}^2\frac{\varrho_{\rm b}(t)}{(x_\ast-t)^2}\,dt.
 \label{eq:pot-removable-pole}
\end{equation}
Taking the sum of its upper and lower boundary values in~\eqref{eq:pot-saddle} then determines the derivative of the potential,
\begin{equation}
 \mathcal V'_{q,c}(x)
 =\frac{\alpha_q}{x_\ast-x}\big[\mathcal V'_{\rm b}(x)-2R_{\rm b}(x_\ast)\big]
 =\frac{\alpha_q}{x_\ast-x}
    \left[\sum_{n\geq1}\mathfrak d_n(q)U_{2n-1}(x/2)-2S_q(\eta)\right].
 \label{eq:pot-super-force}
\end{equation}
Here the subtraction $2R_{\rm b}(x_\ast)$ is fixed by the physical Cauchy transform, which completes the expression for the derivative of the potential.

In order to integrate this expression, we separate the removable divided difference from the actual pole, and define
\begin{equation}
 \Lambda_{q,c}:=\alpha_q\big[\mathcal V'_{\rm b}(x_\ast)-2R_{\rm b}(x_\ast)\big]
 =2\alpha_q y_{\rm b}(x_\ast)
 =\frac{\sqrt{x_\ast^2-4}}4\Pi_q(x_\ast).
 \label{eq:pot-log-coefficient}
\end{equation}
Then
\begin{equation}
 \mathcal V'_{q,c}(x)
 =\alpha_q\frac{\mathcal V'_{\rm b}(x)-\mathcal V'_{\rm b}(x_\ast)}{x_\ast-x}
       +\frac{\Lambda_{q,c}}{x_\ast-x}.
 \label{eq:pot-force-split}
\end{equation}
The first term is entire, with value $-\alpha_q\mathcal V''_{\rm b}(x_\ast)$ at $x=x_\ast$. Choosing $\mathcal V_{q,c}(0)=0$, we then find
\begin{equation}
 \mathcal V_{q,c}(x)
 =\alpha_q\int_0^x
       \frac{\mathcal V'_{\rm b}(v)-\mathcal V'_{\rm b}(x_\ast)}{x_\ast-v}\,dv
       -\Lambda_{q,c}\log\!\left(\frac{x_\ast-x}{x_\ast}\right).
 \label{eq:pot-integrated}
\end{equation}
For real $x<x_\ast$, the logarithm is real. Its analytic continuation has a branch cut starting at $x_\ast$, and a change of branch of the logarithm changes the potential by an additive constant.

To evaluate the integral as an explicit polynomial series, we use
\begin{equation}
 U_m(x/2)=\sum_{j=0}^{\lfloor m/2\rfloor}
                 (-1)^j\binom{m-j}{j}x^{m-2j}
 \label{eq:pot-U-polynomial}
\end{equation}
and define, for $m\geq1$,
\begin{equation}
 \mathcal J_m(x;x_\ast)=
 -\sum_{j=0}^{\lfloor(m-1)/2\rfloor}(-1)^j\binom{m-j}{j}
   \sum_{\ell=0}^{m-2j-1}
       \frac{x_\ast^\ell x^{m-2j-\ell}}{m-2j-\ell}.
 \label{eq:pot-polynomial-primitive}
\end{equation}
These are the required primitives, since $\mathcal J_m(0;x_\ast)=0$ and, by the elementary difference-of-powers identity, $\partial_x\mathcal J_m=[U_m(x/2)-U_m(x_\ast/2)]/(x_\ast-x)$. Therefore,
\begin{equation}
 \mathcal V_{q,c}(x)
 =\alpha_q\sum_{n\geq1}\mathfrak d_n(q)\mathcal J_{2n-1}(x;x_\ast)
       -\Lambda_{q,c}\log\!\left(\frac{x_\ast-x}{x_\ast}\right).
 \label{eq:pot-explicit-potential}
\end{equation}
For example,
\begin{equation}
 \mathcal J_1=-x,\qquad
 \mathcal J_3=-\frac{x^3}{3}-\frac{x_\ast x^2}{2}+(2-x_\ast^2)x.
 \label{eq:pot-first-primitives}
\end{equation}
The first two terms of the entire part are therefore $\alpha_q[(1+q^2)\mathcal J_1-(q^2+q^6)\mathcal J_3]$. The termwise integration of the Chebyshev series is justified by its local uniform convergence, including at the removable point $x=x_\ast$.

\subsection{Recovering the global curve from the potential}
\label{app:potential-recovery}

We can now recover the curve from the potential~\eqref{eq:pot-super-force}. Its physical resolvent must have the discontinuity~\eqref{eq:pot-super-density} and the asymptotic behavior $\nu_{q,c}/x$, and it must be regular at $x=x_\ast$. The resolvent~\eqref{eq:pot-super-resolvent} satisfies these conditions, and it is the unique such solution. Indeed, the difference of two solutions extends across the cut and is regular elsewhere, including at $x=\pm2$ thanks to the soft-edge behavior, so the decay at infinity forces it to vanish.

Consequently, the spectral function that follows from this potential is
\begin{equation}
 \begin{aligned}
 y_{q,c}(x)
 &=\frac12\mathcal V'_{q,c}(x)-R_{q,c}(x)\\
 &=\frac{\alpha_q}{x_\ast-x}
       \left[\frac12\mathcal V'_{\rm b}(x)-R_{\rm b}(x)\right]
 =\frac{\sqrt{x^2-4}\,\Pi_q(x)}{8(x_\ast-x)}.
 \end{aligned}
 \label{eq:pot-y-from-potential}
\end{equation}
Applying the Joukowsky map in the same normalization, we then get
\begin{equation}
 x(z)=z+z^{-1},\qquad
 y_{q,c}(z)=\frac18\frac{z-z^{-1}}{2c-z-z^{-1}}\Pq(z).
 \label{eq:pot-recovered-curve}
\end{equation}
This is the global curve~\eqref{eq:global-curve}. On the upper side of the physical cut, we have $\sqrt{x^2-4}=i\sqrt{4-x^2}$ and $y_{q,c,+}(x)=i\pi\varrho_{q,c}(x)$, which checks both the sign and the normalization against the original density.

The regularity subtraction takes a simple form on the uniformizing plane,
\begin{equation}
 R_{q,c}(x(z))=
 \alpha_q\frac{S_q(z^{-1})-S_q(\eta)}{x_\ast-z-z^{-1}}.
 \label{eq:pot-uniformized-resolvent}
\end{equation}
The denominator vanishes at $z=\eta$ and at $z=\eta^{-1}$. At the physical preimage $z=\eta^{-1}$, the numerator vanishes as well, so the resolvent is regular there, while its continuation to the other sheet generally has a pole at $z=\eta$. So, at generic charge, $y$ has an off-cut pole on both sheets. If $|r|$ is a positive integer, however, we have $\Pi_q(x_\ast)=0$, so both poles of $y$ are removable and the logarithmic coefficient $\Lambda_{q,c}$ vanishes, which is the exception described in Appendix~\ref{app:SYK-BPS}. For this meromorphic potential derivative, the regularity on the physical sheet supplements the selection of negative Laurent powers near the unit circle, and it fixes the required subtraction.

For completeness, let us also evaluate the planar loop equation explicitly. With the loop source
\begin{equation}
 \mathcal S_{q,c}(x)=\int_{-2}^2\varrho_{q,c}(t)
       \frac{\mathcal V'_{q,c}(x)-\mathcal V'_{q,c}(t)}{x-t}\,dt,
 \label{eq:pot-loop-source}
\end{equation}
symmetrizing the two density integrals using the saddle equation gives $R_{q,c}^2-\mathcal V'_{q,c}R_{q,c}+\mathcal S_{q,c}=0$. Writing $\mathcal S_{\rm b}=\mathcal V'_{\rm b}R_{\rm b}-R_{\rm b}^2$ for the bosonic loop source, a direct substitution gives
\begin{equation}
 \mathcal S_{q,c}(x)
 =\frac{\alpha_q^2}{(x_\ast-x)^2}
       \big[\mathcal S_{\rm b}(x)-R_{\rm b}(x_\ast)\mathcal V'_{\rm b}(x)+R_{\rm b}(x_\ast)^2\big].
 \label{eq:pot-loop-source-explicit}
\end{equation}
It follows that
\begin{equation}
 y_{q,c}^2=\frac14(\mathcal V'_{q,c})^2-\mathcal S_{q,c}
 =\frac{(x^2-4)\Pi_q(x)^2}{64(x_\ast-x)^2}.
 \label{eq:pot-loop-curve}
\end{equation}
Here the meromorphic potential leads to a correspondingly general loop source $\mathcal S_{q,c}$, which extends the polynomial case. The density construction fixes the one-cut saddle, and the genus-zero Bergman kernel is the one used in~\eqref{eq:global-curve}, so this construction supplies the planar matrix-model input to the recursion of Section~\ref{sec:globalTR}.

\subsection{The logarithmic coefficient, the BPS pole and limiting checks}
\label{app:potential-limits}

The coefficient of the logarithm can be written in the variables of Section~\ref{sec:BPS}. Since $x_\ast=2+4\Ezero$, we have
\begin{equation}
 \begin{aligned}
 \Lambda_{q,c}
 &=\sqrt{\Ezero(1+\Ezero)}
   \prod_{n\geq1}\left[
    1-\frac{16q^{2n}\Ezero(1+\Ezero)}{(1-q^{2n})^2}\right]\\
 &=2\pi(1-q)\widetilde\Gamma_q^{\rm glob}(\Ezero).
 \end{aligned}
 \label{eq:pot-log-BPS}
\end{equation}
In particular, $\operatorname{Res}_{x=x_\ast}y_{q,c}=-\Lambda_{q,c}/2$ and $\operatorname{Res}_{x=x_\ast}\mathcal V'_{q,c}=-\Lambda_{q,c}$, while the physical resolvent is regular there. The microscopic BPS interpretation uses the charge range and the multiplet weight derived in Appendices~\ref{app:SYK-BPS} and~\ref{app:SYK-continuum}, and~\eqref{eq:pot-log-BPS} is the signed analytic continuation of this relation, which can be interpreted as a positive count of states only in the corresponding physical range of $c$. The theta-function representation is collected in Appendix~\ref{app:theta}.

As a first check, let us take $q\to0$ at fixed $c>1$ in this family of spectral curves. Then $A_q=1$ and $S_q(v)=v$, and we find
\begin{equation}
 \mathcal V_{0,c}(x)
 =-\frac{x}{4}-\frac{\sqrt{x_\ast^2-4}}4
                       \log\!\left(\frac{x_\ast-x}{x_\ast}\right),\qquad
 \nu_{0,c}=\frac{\eta}{4}.
 \label{eq:pot-q-zero}
\end{equation}
In terms of the positive energy coordinate $\Ehat=(x_\ast-x)/4$, this becomes
\begin{equation}
 \mathcal V_{0,c}
 =\Ehat-\sqrt{\Ezero(1+\Ezero)}\log\Ehat+\mathrm{const}.
 \label{eq:pot-Laguerre}
\end{equation}
So the supersymmetric rational factor turns the Gaussian potential into a linear-plus-logarithmic potential of Laguerre type. Note that in this check we hold $c$ fixed as a parameter of the curve, whereas the microscopic $q\to0$ limit at fixed nonzero multiplet charge $r$ would use a different scaling of the parameters.

Next, let us take $c\downarrow1$ at fixed $q$. Although $\Lambda_{q,c}\to0$ in this limit, the density develops a hard edge at $x=2$, which corresponds to $\Ehat=0$, and the force becomes the removable divided difference
\begin{equation}
 \mathcal V'_{q,1}(x)
 =\alpha_q\frac{\mathcal V'_{\rm b}(x)-\mathcal V'_{\rm b}(2)}{2-x},
 \qquad
 \varrho_{q,1}(x)
 =\frac{\Pi_q(x)}{8\pi}\sqrt{\frac{2+x}{2-x}}.
 \label{eq:pot-hard-edge}
\end{equation}
The real domain now has a wall at $x=2$ (equivalently, $\Ehat\geq0$), and varying the matrix integral with this constraint gives a wall term in the hard-edge loop equation. This term survives even though the logarithmic coefficient tends to zero, as we can see directly from the limit. Indeed, the logarithmic part of~\eqref{eq:pot-loop-source} is $\Lambda_{q,c}R_{q,c}(x_\ast)/(x_\ast-x)$, and
\begin{equation}
 \lim_{x_\ast\downarrow2}\Lambda_{q,c}R_{q,c}(x_\ast)=\frac1{16}.
 \label{eq:pot-wall-coefficient}
\end{equation}
This follows from~\eqref{eq:pot-removable-pole}, since $y_{\rm b}(x_\ast)\sim A_q^3\sqrt{x_\ast^2-4}/2$ and $R'_{\rm b}(x_\ast)\sim-A_q^3/\sqrt{x_\ast^2-4}$. If we let $\mathcal S^{\rm reg}_{q,1}$ denote the loop source~\eqref{eq:pot-loop-source} formed using only the limiting regular force, the limiting loop equation is
\begin{equation}
 R_{q,1}^2-\mathcal V'_{q,1}R_{q,1}
       +\mathcal S^{\rm reg}_{q,1}(x)+\frac{1}{16(2-x)}=0.
 \label{eq:pot-hard-wall-loop}
\end{equation}
For $q=0$, this reduces to the linear potential in $\Ehat$, with its hard-edge density on $[0,1]$.

To summarize, we have obtained an explicit potential, filling fraction and one-cut planar solution for the global curve, which supply the planar data for the matrix-model construction. A nonperturbative definition would also require a convergent integration contour and an analysis of the global stability of the saddle across the parameter range. As in the bosonic construction~\cite{Okuyama:2023byh}, relating the higher-genus expansion to microscopic DSSYK correlators is a separate question. Note that these questions concern the matrix potential studied here, and not the auxiliary string-equation solution of Appendix~\ref{app:ODEdata}, whose nonperturbative completion would be a separate construction.

%-------------------------------------------------
\section{Proofs of the edge identities and the discrete degree bound}
\label{app:coloring-proof}

%-------------------------------------------------
\begingroup
\newcommand{\Res}{\mathop{\rm Res}} \newcommand{\dd}{\,d} \newcommand{\loc}{\mathrm{loc}} \newcommand{\Nplus}{N^{(+\cdots+)}} \newcommand{\wplus}{\omega^{+}}

In this appendix, we prove the lower-edge identity~\eqref{eq:colored-identity} for every stable marked topology, compute the mixed-color contributions retained by the external parity projection, and derive the upper-edge counterpart. In the final subsection, we prove the total-degree bound for the full discrete volumes. The proofs use the colored recursion graphs obtained by iterating~\eqref{eq:global-TR}. We work with \(n\geq1\) and \(2g-2+n>0\), so the unstable disk and cylinder, as well as the unmarked free energies, are treated separately from the stable marked amplitudes.

\subsection{Colored recursion and the all-topology statement}

As in the main text, \(\Ezero\) is the unscaled gap, so that \(c=1+2\Ezero\). To begin with, we regard the \(Z_m\) of~\eqref{eq:qzeta-intro} as independent formal variables, and write the global data as
\begin{align}
 x(z)&=z+z^{-1},&
 y(z)&=\frac{z-z^{-1}}{8(2+4\Ezero-x(z))}
       \exp\!\left[-\sum_{m\geq1}\frac{Z_m}{m}(z-z^{-1})^{2m}\right],
 \label{eq:app-color-global}\\
 B(z_1,z_2)&=\frac{dz_1\,dz_2}{(z_1-z_2)^2},&
 \iota(z)&=z^{-1}.
 \nonumber
\end{align}
Let us first expand the two-residue recursion~\eqref{eq:global-TR} and label each recursive residue by its ramification point. In the resulting colored recursion graphs, a vertex represents an individual kernel insertion, and the external marked points are treated separately. A term of type \((g,n)\) has \(d=2g-2+n\) such vertices, since the complexities of the children sum to \(d-1\) and \(B\) has complexity zero. If we assign the label \(+\) to every vertex, we obtain the monochromatic sector directly at the level of the recursion, without using an intersection-theoretic representation in terms of stable graphs.

This restriction only uses the local expansion at \(z=1\). We set \(\wplus_{0,2}=B\) and, for \(d=2g-2+n>0\), define
\begin{equation}
\begin{split}
\wplus_{g,n}(z_1,J)=\Res_{z=1}(-K(z_1,z))
\bigg[&\wplus_{g-1,n+1}(z,\iota z,J)\\
&+\sum_{\substack{g_1+g_2=g\\I\sqcup I^c=J}}'
\wplus_{g_1,|I|+1}(z,I)
\wplus_{g_2,|I^c|+1}(\iota z,I^c)\bigg].
\end{split}
\label{eq:app-color-recursion}
\end{equation}
Here the first term is omitted for \(g=0\), and the prime excludes the \((0,1)\) factors but retains \((0,2)\). Every child is again computed with the single residue at \(+1\), and the differential pullbacks, including \(d(\iota z)=-dz/z^2\), are part of this formula. The kernel is given by
\begin{equation}
 K(z_1,z)=
 \frac{2(2+4\Ezero-z-z^{-1})z^3
       \exp[\sum_{m\geq1}Z_m(z-z^{-1})^{2m}/m]}
 {(z^2-1)(z_1-z)(z_1z-1)}\,\frac{dz_1}{dz}.
 \label{eq:app-color-kernel}
\end{equation}
This is the same sign convention as in~\eqref{eq:global-TR}.

The corresponding local data are
\begin{equation}
 \begin{gathered}
 x_\loc(w)=\frac{w^2}{2},\qquad
 y_\loc(w)=\frac{w}{\Ezero-w^2}
 \exp\!\left[-\sum_{m\geq1}\frac{Z_m}{m}(4w)^{2m}\right],\\
 B_\loc(w_1,w_2)=\frac{dw_1\,dw_2}{(w_1-w_2)^2}.
 \end{gathered}
 \label{eq:app-color-local}
\end{equation}
For the local recursion, we use the same sign, the involution \(w\mapsto-w\) and the residue at \(w=0\). We then define the monochromatic discrete volumes and the local volumes by
\begin{align}
\wplus_{g,n}(\mathbf z)
 &=\sum_{b_1,\ldots,b_n\geq1}\Nplus_{g,n}(\mathbf b)
       \prod_i b_i z_i^{b_i-1}dz_i,
 \label{eq:app-color-discrete}\\
\frac{\omega^\loc_{g,n}(\mathbf w)}{\prod_i dw_i}
 &=\int_{(0,\infty)^n} V^q_{g,n}(\mathbf b)
       \prod_i b_i \e^{-w_i b_i}\dd b_i .
 \label{eq:app-color-laplace}
\end{align}
The second identity is understood term by term for polynomials.

\begin{colorprop}\label{prop:app-color-main}
For \(g\geq0\), \(n\geq1\) and \(2g-2+n>0\), the function \(\Nplus_{g,n}\) is a polynomial in the \(b_i^2\), \(\Ezero\) and finitely many \(Z_m\). With the weights
\begin{equation}
 \wt(b_i)=1,\qquad \wt(Z_m)=2m,\qquad \wt(\Ezero)=-2,
 \label{eq:app-color-weight}
\end{equation}
all its monomials have weight at most \(2g-2\), and
\begin{equation}
 \displaystyle
 V^q_{g,n}(b_1,\ldots,b_n)=
 2^{-n}\,[\Nplus_{g,n}]_{2g-2}
       \left(\frac{b_1}{2},\ldots,\frac{b_n}{2}\right).
 \label{eq:app-color-identity}
\end{equation}
This is an identity over \(\mathbb Q[\Ezero,Z_1,Z_2,\ldots]\), so in particular it holds after specializing \(Z_m=\zeta_{q^2}(2m)\), and also at \(\Ezero=0\).
\end{colorprop}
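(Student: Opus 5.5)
I will argue by induction on \(d=2g-2+n\), working directly with the correlators rather than their coefficients. The central object is the pullback of \(\wplus_{g,n}\) to the local coordinate \(w\) of~\eqref{eq:lower-w}, on the sheet where \(z=(\sqrt{1+w^2}+w)^2\), so that \(\iota z\) corresponds to \(w\mapsto -w\).

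The first step is to establish the structure of \(\wplus_{g,n}\). Since every vertex uses the residue at \(+1\), induction shows that \(\wplus_{g,n}\) is rational in each \(z_i\) with poles only at \(z_i=1\). It is also odd under \(\iota\) in each variable, up to the standard exact terms that the recursion kernel kills. The inversion-symmetric basis of Appendix~\ref{app:coloring-parity}, with sign \(+\), then expands it in \(\xi_k(z)=\sum_b b^{2k+1}z^{b-1}dz\). This shows that \(\Nplus_{g,n}\) is a polynomial in the \(b_i^2\).

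Polynomiality in \(\Ezero\) and the \(Z_m\) follows because \(K\) in~\eqref{eq:app-color-kernel} is linear in \(\Ezero\). The residue at \(z=1\) uses only finitely many Taylor coefficients of \(e^{\Phi_q}\), and the pole order at \(z=1\) is bounded by the induction. The finite set of \(Z_m\) that appear is controlled the same way, by counting pole orders.

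The second step is the local comparison, which I expect to carry the main difficulty. Write \(\xi_k\) in the variable \(w\). Since \(z-1\approx 2w\) and \(b\) is dual to \(z-1\), the Laplace dictionary gives
\[
\xi_k \;=\; \frac{(2k+1)!}{(2w)^{2k+2}}\,2\,dw\,\bigl(1+O(w^2)\bigr),
\]
where the correction terms have lower weight under the grading with \(\wt(w)=-1\), \(\wt(b)=1\). The leading term reproduces the Laplace transform of \(b^{2k}\) at length \(b/2\), together with a factor \(1/2\) per boundary. This accounts for the factors \(2^{-n}\) and \(b_i/2\) in~\eqref{eq:app-color-identity}.

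Next I assign weights to all the data in the \(w\)-coordinate. In \(K\), the ingredients are \(2c-x=4(\Ezero-w^2)\), \(\Phi_q=\sum_m Z_m(4w)^{2m}(1+w^2)^m/m\), and the factors \(z^3/(z^2-1)\), \((z_1-z)(z_1z-1)\) and \(dz_1/dz\). Each of these is its local counterpart in~\eqref{eq:local-kernel} multiplied by \(1+(\text{strictly lower weight})\). The same holds for \(B\) pulled back to \(B_\loc\), up to lower-weight corrections; these are regular terms such as \(dw_1dw_2\,f(w_1,w_2)\) that have weight at least \(2\) lower. One also has to track the \(1/8\) normalization matching \(y\,dx\) to \(y_\loc dx_\loc\), as in~\eqref{eq:syk-edge-normalization}.

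With these in hand, the induction hypothesis reads \(\wplus_{g',n'}=\omega^\loc_{g',n'}+(\text{lower weight})\) for all children, where \(\omega^{\loc}\) is homogeneous of weight \(2g'-2\) in the sense of~\eqref{eq:homogeneity} once \(dw\) is counted. A residue at \(w=0\) of a product of Laurent series in \(w\) preserves weight, because it simply extracts the \(w^{-1}\) coefficient and \(w\) carries weight \(-1\). Therefore the top-weight part of the recursion for \(\wplus_{g,n}\) is exactly the local recursion, and everything else is strictly lower. This yields both the bound "weight at most \(2g-2\)" and the identification of the top part.

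The delicate point is the weight bookkeeping for the genuinely lower terms. One must confirm that terms such as \(w^2\) multiplying \(Z_m\) in \(\Phi_q\), or \(\Ezero w^2\), really lower the weight. One must also check that the \(O(w^2)\) correction in the translation \(\xi_k\leftrightarrow b^{2k}\) never combines with negative powers of \(\Ezero\) to raise it; this is excluded because \(\Ezero\) only appears with positive powers in \(K\). Finally, since nothing in the argument uses specific values of the \(Z_m\) or requires \(\Ezero\neq0\), the identity holds over \(\mathbb Q[\Ezero,Z_1,\ldots]\) and specializes to \(Z_m=\zeta_{q^2}(2m)\) and to \(\Ezero=0\).
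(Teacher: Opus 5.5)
Your route is the paper's. The grading with $\wt(w)=-1$ on the unscaled coordinate $z=(\sqrt{1+w^2}+w)^2$ is exactly what the paper implements with a formal parameter $\beta$ ($\Ezero\mapsto\beta^2\Ezero$, $Z_m\mapsto\beta^{-2m}Z_m$, $z=z_\beta(w)$), so that weight is minus the power of $\beta$. Your factor-by-factor check of $K$ and $B$ reproduces the paper's exact pullback $K_\beta=\beta^{-1}(K_{\rm loc}+O(\beta^2))$, and after that the induction and the leading term of $\Xi_k$ are the same.

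\textbf{Gap in step one.} The basis expansion needs \emph{exact} anti-invariance $\iota_i^*\omega^{+}_{g,n}=-\omega^{+}_{g,n}$ in every external variable, and ``odd up to exact terms'' does not give it. For example, $d[(1-z)^{-2}]=\sum_{b\geq1}b(b+1)z^{b-1}dz$ is exact and has its only pole at $1$. It would contribute $b+1$ to $N^{(+\cdots+)}_{g,n}$, which is not a polynomial in $b^2$. You also cannot get the other variables from permutation symmetry for free, since the monochromatic recursion is not manifestly symmetric. The paper proves anti-invariance directly:
\begin{itemize}
\item for the first variable, from $\iota_1^*K=-K$;
\item for variables inside a stable child, by induction;
\item for a variable in a Bergman factor, the two ordered splittings combine into $[B(\iota z,z_i)-B(z,z_i)]\,\widetilde\omega(z,J')$, which is odd in $z_i$ because $(\iota\times\iota)^*B=B$;
\item the base case is $\omega^{+}_{0,3}=8\Ezero\prod_i dz_i/(1-z_i)^2$.
\end{itemize}

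\textbf{Weight bookkeeping.} Two points need fixing.
\begin{enumerate}
\item With $\wt(w_i)=\wt(dw_i)=-1$, the local correlator $\omega^{\rm loc}_{g',n'}$ has weight $2g'-2+n'$, not $2g'-2$. A monomial $c\prod_i b_i^{2k_i}$ of weight $2g'-2$ maps to $c\prod_i(2k_i+1)!\,w_i^{-2k_i-2}dw_i$. With your labelling the induction does not close: the root comes out at $2g-3$, and $B$, which has weight $0$, does not fit the rule. With the correct count, the kernel has top weight $1$ and the children total $d-1$, so every recursion term sits at weight $d$.
\item Going from ``the pulled-back correlator has weight at most $d$'' to ``$N^{(+\cdots+)}_{g,n}$ has weight at most $2g-2$'' needs a no-cancellation argument. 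The top components coming from different $\mathbf k$ have leading terms $\prod_i w_i^{-2k_i-2}$, which are distinct monomials, so they cannot cancel. Hence a monomial of weight $W$ forces a nonzero component of weight $W+n\leq d$.
\end{enumerate}
Your concern about the $O(w^2)$ correction in $\Xi_k$ combining with inverse powers of $\Ezero$ can be dropped: that correction carries no parameters at all.
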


The proposition concerns the fully monochromatic recursion on stable marked topologies defined above. The external parity projection of the full two-residue answer is a different operation, and we examine its relation to the monochromatic sector in Appendix~\ref{app:coloring-parity}.

\subsection{A polynomial basis for the monochromatic sector}
\begin{colorlemma}\label{lem:app-color-basis}
For every stable pair, \(\wplus_{g,n}\) has a unique finite expansion
\begin{equation}
 \wplus_{g,n}(\mathbf z)
   =\sum_{\mathbf k\geq0}\mathcal C_{g,n;\mathbf k}(\Ezero,\mathbf Z)
                        \prod_{i=1}^n \Xi_{k_i}(z_i),
 \qquad
 \Xi_k(z)=d_z\!\left[(z\partial_z)^{2k}\frac{z}{1-z}\right],
 \label{eq:app-color-basis}
\end{equation}
where the \(\mathcal C_{g,n;\mathbf k}\) are polynomials and \(\deg_{\Ezero} \mathcal C_{g,n;\mathbf k}\leq d\). Consequently
\begin{equation}
 \Nplus_{g,n}(\mathbf b)
 =\sum_{\mathbf k}\mathcal C_{g,n;\mathbf k}\prod_i b_i^{2k_i}.
 \label{eq:app-color-poly}
\end{equation}
\end{colorlemma}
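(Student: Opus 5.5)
We argue by induction on \(d=2g-2+n\), proving for each stable \((g,n)\) that \(\wplus_{g,n}\) lies in the \(\mathbb Q[\Ezero,\mathbf Z]\)-span of products \(\prod_i\Xi_{k_i}(z_i)\) with finitely many \(k_i\), and that the \(\Ezero\)-degree of the coefficients is at most \(d\). Uniqueness is immediate once existence is known. Expanding at \(z=0\), \(\Xi_k(z)=\sum_{b\geq1}b^{2k+1}z^{b-1}dz\), and the sequences \(b\mapsto b^{2k+1}\) are linearly independent. This also gives \eqref{eq:app-color-poly} directly from the definition \eqref{eq:app-color-discrete}: dividing by \(b_i\) turns \(\prod_i\Xi_{k_i}\) into \(\prod_i b_i^{2k_i}\).

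The key structural input is a characterization of the span. Note that \(\Xi_k\) is a meromorphic differential with a single pole, at \(z=1\). It is anti-invariant under \(\iota\), in the sense \(\iota^*\Xi_k=-\Xi_k\), because \((z\partial_z)^{2k}\) commutes with \(z\mapsto z^{-1}\) up to the sign \((-1)^{2k}=1\), and \(\tfrac{z}{1-z}+\tfrac{z^{-1}}{1-z^{-1}}=-1\). Its principal part at \(z=1\) has order \(2k+2\). The family \(\{\Xi_k\}_{k\geq0}\) therefore spans the space of differentials that are meromorphic on \(\mathbb P^1\), have poles only at \(z=1\), have zero residue, and satisfy \(\iota^*\omega=-\omega\). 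To see this, note that the principal part of such an \(\omega\) in the local coordinate \(w\) of \eqref{eq:lower-w}, with \(\iota:w\mapsto-w\), contains only even-order poles. One then subtracts suitable \(\Xi_k\) from the top order downward, and the remainder is a holomorphic differential on \(\mathbb P^1\), hence zero. The triangular change of basis has rational coefficients, so coefficients in \(\mathbb Q[\Ezero,\mathbf Z]\) are preserved.

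For the inductive step we compute \(\wplus_{g,n}\) from \eqref{eq:app-color-recursion}. In the variable \(z_1\), the kernel \eqref{eq:app-color-kernel} depends only through \(\frac{dz_1}{(z_1-z)(z_1z-1)}\). This is proportional to \(\int_{z^{-1}}^{z}B(z_1,\cdot)\), and its Taylor expansion at \(z=1\) produces derivatives in \(z\) of \(B(z_1,z)\) evaluated at \(z=1\). Each such derivative is a \(\mathbb Q\)-combination of \(\frac{dz_1}{(1-z_1)^{j}}\), which is a differential with pole only at \(z_1=1\).

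The key point is that the residue at \(z=1\) takes only finitely many Laurent terms, because the children have finite order poles by induction. The output is therefore a differential in \(z_1\) with a pole only at \(1\). Its residue vanishes, since the only other candidate pole sits at \(z_1=\infty\), where it is regular, so the residue sum forces zero. It is \(\iota\)-anti-invariant, because the kernel flips sign under \(z_1\mapsto z_1^{-1}\). It therefore lies in the span of the \(\Xi_k(z_1)\).

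Dependence on the remaining variables \(z_j\) is inherited from the children, either through the (0,2) factor \(B(z,z_j)\) or through the inductive \(\Xi\)-expansions. For \(B\) one needs \(B(z,z_j)+B(\iota z,z_j)\)-type combinations; their dependence on \(z_j\) near the residue reduces to derivatives of \(B(1,z_j)\), which is again a pole only at \(z_j=1\) and lies in the span. Coefficients stay polynomial in \(\Ezero\) and finitely many \(Z_m\), since only finitely many Taylor terms of \(\exp[\sum Z_m(z-z^{-1})^{2m}/m]\) and of \(2+4\Ezero-z-z^{-1}\) enter a given residue.

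The \(\Ezero\)-degree bound follows because each kernel is linear in \(\Ezero\), and a \((g,n)\) term has \(d\) vertices, as already counted in the colored-graph discussion. The main obstacle is the cleanest verification of \(\iota\)-anti-invariance and zero residue in the external variable \(z_1\), including the \((0,3)\) base case, where one checks directly that \(\wplus_{0,3}\) is a constant multiple of \(\Xi_0(z_1)\Xi_0(z_2)\Xi_0(z_3)\) times a coefficient linear in \(\Ezero\). I would handle anti-invariance by writing the kernel's \(z_1\)-factor as \(\frac{1}{z-z^{-1}}\big[\frac{1}{z_1-z}-\frac{1}{z_1-z^{-1}}\big]dz_1\) and checking its transformation explicitly.
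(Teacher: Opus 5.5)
Your overall architecture matches the paper's. You characterize the span of the $\Xi_k$ as the $\iota$-anti-invariant differentials with poles only at $z=1$, where the top pole order is forced to be even and is then reduced. You get anti-invariance in $z_1$ from $\iota_1^*K=-K$, check $(0,3)$ directly, and bound the $\Ezero$-degree by counting the $d$ kernels.

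The gap is in the external variables $z_j$, $j\geq2$, that are carried by a Bergman factor. There your justification fails. You claim that the $z$-derivatives of $B(1,z_j)$ lie in the span, but $\partial_z(z-z_j)^{-2}|_{z=1}$ already produces $dz_j/(1-z_j)^3$. Every nonzero element of the span has even top pole order at $1$, since $\Xi_k$ has order $2k+2$, so this term is not in it. Having poles only at $z_j=1$ is therefore not enough; what you must prove is anti-invariance in $z_j$. Note also that $B(z,z_j)+B(\iota z,z_j)$, read with a common coefficient, is $\iota_j$-\emph{invariant}: $(\iota\times\iota)^*B=B$ means that $\iota_j^*$ exchanges the two factors.

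The missing step is to pair the two ordered splittings in which $z_j$ sits on a Bergman factor next to the same stable child $\tilde\omega$. Together they contribute $B(z,z_j)\,\tilde\omega(\iota z,J')+B(\iota z,z_j)\,\tilde\omega(z,J')$. Now use the inductive anti-invariance of $\tilde\omega$ in its \emph{first} slot, which your hypothesis of span membership in every variable already provides. This turns the pair into $[B(\iota z,z_j)-B(z,z_j)]\,\tilde\omega(z,J')$, which changes sign under $\iota_j$ for every $z$. Since $\iota_j$ acts only on the parameter $z_j$, the sign change survives the residue at $z=1$.

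Variables inside stable children inherit anti-invariance directly. Only $(0,3)$, where both factors are Bergman kernels, needs the explicit computation you propose. With this added, your argument closes and is essentially the paper's. The $z_1$ step you flagged as the main obstacle is actually the easy part.
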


\begin{proof}
At a fixed recursion step, only finitely many Taylor coefficients of the exponential in~\eqref{eq:app-color-kernel} can contribute to the residue, so by induction we get rational external differentials with coefficients in \(\mathbb Q[\Ezero,\mathbf Z]\). In each external variable, their only possible pole is at \(1\), and their scalar coefficient is \(O(z_i^{-2})\) at infinity and regular at zero. These properties follow from the external denominators in \(K\) and \(B\), and they are preserved by taking the residue at \(1\).

We also need the anti-invariance in \emph{each} external variable,
\begin{equation}
 \iota_i^*\wplus_{g,n}=-\wplus_{g,n}.
 \label{eq:app-color-anti}
\end{equation}
For the first variable, this follows from \(\iota_1^*K=-K\), which can be checked directly in~\eqref{eq:app-color-kernel}. For another variable in a stable child, we use induction. If that variable lies in a \(B\) factor instead, the two ordered splittings containing that factor combine as
\begin{align*}
 B(z,z_i)\widetilde\omega(\iota z,J')+B(\iota z,z_i)\widetilde\omega(z,J')
  =\big[B(\iota z,z_i)-B(z,z_i)\big]\widetilde\omega(z,J').
\end{align*}
Here \(\widetilde\omega\) is the stable child, and we have used its anti-invariance in the first variable. Since \((\iota\times\iota)^*B=B\), the inversion of \(z_i\) exchanges the two Bergman factors, which gives the required minus sign. The only case in which both factors are unstable is \((0,3)\), whose direct residue is
\begin{equation}
 \wplus_{0,3}=8\Ezero\prod_{i=1}^3\frac{dz_i}{(1-z_i)^2},
 \label{eq:app-color-03}
\end{equation}
which has the same property, while the other initial case, \((1,1)\), has only the first external variable. This proves~\eqref{eq:app-color-anti} directly in each variable, without using the permutation symmetry of the restricted recursion.

The differentials \(\Xi_k\) are anti-invariant and satisfy
\begin{equation}
 \Xi_k(z)=\sum_{b\geq1}b^{2k+1}z^{b-1}dz,\qquad
 \Xi_k(z)\sim(2k+1)!\frac{dz}{(z-1)^{2k+2}}\quad(z\to1).
 \label{eq:app-color-xi}
\end{equation}
Conversely, for a rational one-form with the properties above, anti-invariance forces the highest pole order at \(1\) to be even, since the local derivative of \(\iota\) there is \(-1\). We can then subtract the appropriate multiple of \(\Xi_k\) to reduce that order, and repeat, until we are left with a holomorphic one-form on the sphere, which must vanish. This proves both spanning and independence. Applying the argument in each variable gives~\eqref{eq:app-color-basis}, and~\eqref{eq:app-color-xi} then gives~\eqref{eq:app-color-poly}.

Finally, every fully expanded recursion term contains exactly \(d\) kernels, since a quadratic bracket has total child complexity \(d-1\), with \(B\) assigned complexity zero. Each kernel is linear in \(\Ezero\), which proves the claimed bound on the degree in the gap.
\end{proof}

\subsection{The exact coordinate and the scaling of residues}
Let us now introduce a formal parameter \(\beta\) and make the simultaneous substitutions
\begin{equation}
 \Ezero\longmapsto\beta^2\Ezero,\qquad
 Z_m\longmapsto\beta^{-2m}Z_m,\qquad
 z=z_\beta(w):=\big(\sqrt{1+\beta^2w^2}+\beta w\big)^2.
 \label{eq:app-color-scaling}
\end{equation}
Here the square root is expanded formally with constant term \(1\). These substitutions extract the weights algebraically, and the physical \(q\) is kept separate from the independent formal rescalings of the \(Z_m\). For the calculation, it is convenient to write
\begin{equation}
 \mathfrak r_\beta(w)=\sqrt{1+\beta^2w^2},\qquad
 \Phi_\beta(w)=\sum_{m\geq1}\frac{Z_m}{m}
                   (4w)^{2m}(1+\beta^2w^2)^m.
 \label{eq:app-color-scaled-functions}
\end{equation}
At every fixed order in \(w\), the sum \(\Phi_\beta\) involves only finitely many \(Z_m\). The coordinate satisfies the exact identities
\begin{equation}
 z_\beta(-w)=z_\beta(w)^{-1},\qquad
 x(z_\beta(w))=2+4\beta^2w^2,\qquad
 z_\beta(w)-z_\beta(w)^{-1}=4\beta w\,\mathfrak r_\beta(w).
 \label{eq:app-color-coordinate}
\end{equation}
In particular, the involution becomes \(w\mapsto-w\).

Let \(Y_\beta\) be the pullback of \(y\,dx\) after the parameter substitutions in~\eqref{eq:app-color-scaling}. By direct substitution, we find
\begin{equation}
 Y_\beta(w)
 =\beta\,\frac{w^2\mathfrak r_\beta(w)}{\Ezero-w^2}\e^{-\Phi_\beta(w)}\,dw
 =\beta\big(y_\loc\,dx_\loc+O(\beta^2)\big).
 \label{eq:app-color-omega-scaling}
\end{equation}
It is useful to keep the Bergman kernel exact as well, and its pullback is
\begin{equation}
 B_\beta(w_1,w_2)
 =\frac{z_\beta'(w_1)z_\beta'(w_2)}{(z_\beta(w_1)-z_\beta(w_2))^2}\,dw_1\,dw_2
 =\frac{dw_1\,dw_2}
 {\mathfrak r_\beta(w_1)\mathfrak r_\beta(w_2)\,[w_1\mathfrak r_\beta(w_2)-w_2\mathfrak r_\beta(w_1)]^2}.
 \label{eq:app-color-B-exact}
\end{equation}
For example, expanding the exact expression gives
\begin{equation}
 B_\beta(w_1,w_2)=\left[
 \frac1{(w_1-w_2)^2}-\frac{\beta^2}{2}
 +\frac{3\beta^4}{8}(w_1^2+w_2^2)+O(\beta^6)\right]dw_1\,dw_2.
 \label{eq:app-color-B-scaling}
\end{equation}
The correction to the standard double pole is holomorphic near the diagonal, and only even powers of \(\beta\) appear, since \(z_{-\beta}=1/z_\beta\) and \(B\) is invariant under simultaneous inversion.

The coordinate-free form of the kernel~\eqref{eq:app-color-kernel} is
\begin{equation}
 K(z_1,z)=\frac{\displaystyle\int_{\iota(z)}^z B(z_1,\cdot)}
                          {4\,y(z)\,dx(z)}.
 \label{eq:app-color-intrinsic}
\end{equation}
Its numerator also has a simple exact pullback,
\begin{equation}
 \begin{aligned}
 \int_{-w}^{w}B_\beta(w_1,\cdot)
 &=\left[\frac1{z_\beta(w_1)-z_\beta(w)}
       -\frac1{z_\beta(w_1)-z_\beta(-w)}\right]z_\beta'(w_1)\,dw_1\\
 &=\frac{2w\,\mathfrak r_\beta(w)}{\mathfrak r_\beta(w_1)(w_1^2-w^2)}\,dw_1.
 \end{aligned}
 \label{eq:app-color-integrated-B}
\end{equation}
Here we have evaluated the rational primitive at its endpoints, which gives the formal integral without any choice of contour. Dividing by \(4Y_\beta\) then cancels the square root that involves the residue variable, and we get
\begin{equation}
 K_\beta^+(w_1,w)
 =\beta^{-1}
 \frac{(\Ezero-w^2)\e^{\Phi_\beta(w)}}
 {2w(w_1^2-w^2)\mathfrak r_\beta(w_1)}\,\frac{dw_1}{dw}.
 \label{eq:app-color-K-exact}
\end{equation}
We will write \(K_\beta=K_\beta^+\) for the lower-edge calculation. In terms of the local kernel, this is
\begin{align}
 K_\beta(w_1,w)&=\beta^{-1}\big(K_\loc(w_1,w)+O(\beta^2)\big),
 \label{eq:app-color-K-scaling}\\
 K_\loc(w_1,w)&=
 \frac{\Ezero-w^2}{2w(w_1^2-w^2)}
 \e^{\Phi_0(w)}\,\frac{dw_1}{dw}.
 \nonumber
\end{align}
More precisely, the factor multiplying \(\beta^{-1}K_\loc\) is the invertible power series
\begin{equation}
 \mathcal U_\beta(w_1,w)=
 \frac{\exp[\Phi_\beta(w)-\Phi_0(w)]}{\mathfrak r_\beta(w_1)}
 =1+O(\beta^2).
 \label{eq:app-color-kernel-unit}
\end{equation}
This recovers the factor \(1/2\) in the local kernel and makes its formal scaling explicit. Note that the factor in \(y_\loc\) is \((\Ezero-w^2)^{-1}\), so in the recursion kernel \(\Ezero-w^2\) appears in the \emph{numerator}. The kernel therefore remains polynomial in \(\Ezero\), including at \(\Ezero=0\).

\begin{colorlemma}\label{lem:app-color-scaling}
Let \(\widehat\omega^\beta_{g,n}\) denote the pullback of \(\wplus_{g,n}\) under~\eqref{eq:app-color-scaling}, in all external variables. Then
\begin{equation}
 \beta^{\,d}\widehat\omega^\beta_{g,n}
      =\omega^\loc_{g,n}+O(\beta^2),\qquad d=2g-2+n.
 \label{eq:app-color-correlator-scaling}
\end{equation}
In fact, the left-hand side is a formal power series in \(\beta^2\).
\end{colorlemma}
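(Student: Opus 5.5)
We argue by induction on \(d=2g-2+n\), using the monochromatic recursion~\eqref{eq:app-color-recursion}. First we pull it back under~\eqref{eq:app-color-scaling}. The residue at \(z=1\) becomes a residue at \(w=0\), because \(z_\beta(w)=1+2\beta w+O(\beta^2)\) is a formal local coordinate at \(z=1\) for each fixed \(\beta\). Residues of meromorphic differentials are invariant under such a change of coordinate, and the involution \(\iota\) pulls back to \(w\mapsto-w\) by~\eqref{eq:app-color-coordinate}. The pulled-back recursion therefore reads \(\widehat\omega^\beta_{g,n}(w_1,J)=\Res_{w=0}(-K_\beta(w_1,w))[\cdots]\), with the children replaced by their pullbacks and \(B\) by \(B_\beta\).

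The base cases are \((0,3)\) and \((1,1)\). For \((0,3)\) we take~\eqref{eq:app-color-03} with \(\Ezero\mapsto\beta^2\Ezero\) and use \(dz_i/(1-z_i)^2=B(z_i,1)\)-type factors. Pulling back gives \(dz/(1-z)^2=z_\beta'dw/(z_\beta-1)^2=\beta^{-1}(2+O(\beta))dw/(4w^2)\). We need to check that the odd powers of \(\beta\) cancel. The cleanest way is to compute \((0,3)\) directly from the recursion with \(K_\beta\) and \(B_\beta\), which reduces it to the inductive step with only \(B\) children. The \((1,1)\) case is handled in the same way.

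For the inductive step, we count powers of \(\beta\). \(K_\beta\) contributes \(\beta^{-1}\mathcal U_\beta K_\loc\), and each stable child of complexity \(d_j\) contributes \(\beta^{-d_j}(\omega^\loc+O(\beta^2))\) by the induction hypothesis. The \(B_\beta\) factors are \(B_\loc+O(\beta^2)\) by~\eqref{eq:app-color-B-scaling}. The child complexities sum to \(d-1\), so the total prefactor is \(\beta^{-d}\). What remains is a power series in \(\beta^2\) whose coefficients are Laurent in \(w\), with finitely many \(Z_m\) at each order. Its \(\beta^0\) term is exactly the local recursion with \(K_\loc\) and \(B_\loc\), which gives \(\omega^\loc_{g,n}\). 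The genus-reducing term \(\widehat\omega_{g-1,n+1}(w,-w,J)\) is treated the same way, with complexity \(d-1\).

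The main obstacle is justifying that we may take the residue term by term in \(\beta^2\). At each order in \(\beta^2\), the integrand must have a pole of bounded order at \(w=0\). The \(\beta^2\)-corrections in \(\mathcal U_\beta\), in \(B_\beta\) and in the children are polynomial in \(w\) and \(w_1\), with \(\mathfrak r_\beta(w_1)^{-1}\) expanded in \(\beta^2w_1^2\). So multiplying by them does not increase the pole order at \(w=0\). Since there are only finitely many terms at each \(\beta\)-order, the residue commutes with the \(\beta^2\)-expansion. We also need that no odd powers of \(\beta\) ever appear. This follows from the symmetry \(z_{-\beta}(w)=z_\beta(w)^{-1}\) together with the anti-invariance~\eqref{eq:app-color-anti} applied in each variable. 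The sign flips under \(\beta\to-\beta\) cancel against \((-1)^d\) from the \(\beta^{-1}\) factors and the \(\iota\)-anti-invariance of \(\wplus_{g,n}\), which gives \(\beta^d\widehat\omega^\beta\) even in \(\beta\). We would check this parity bookkeeping first.
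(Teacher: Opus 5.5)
Your skeleton matches the paper's. You pull the monochromatic recursion back along $z_\beta$, use residue invariance to get a single residue at $w=0$ with involution $w\mapsto-w$, count one $\beta^{-1}$ per kernel with child complexities summing to $d-1$, and identify the $\beta^{-d}$ coefficient with the local recursion. With $B_\beta$ as the complexity-zero datum, $(0,3)$ and $(1,1)$ are just instances of this step. Your parity argument is also correct, since $(-\beta)^d\widehat\omega^{-\beta}_{g,n}=(-1)^{d+n}\beta^d\widehat\omega^{\beta}_{g,n}$ and $d+n=2g-2+2n$ is even.

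The gap is in the step you single out as the main obstacle. You justify exchanging the residue with the $\beta^2$-expansion by claiming that the $\beta^2$-corrections of the children are polynomial in $w$. This is false. Already
\[
 \beta\,\widehat\omega^\beta_{0,3}=\Ezero\prod_{i=1}^3\frac{dw_i}{w_i^2\,\mathfrak r_\beta(w_i)},
\]
whose $\beta^2$-coefficient is $-\tfrac12\Ezero\sum_i\prod_{j\neq i}w_j^{-2}$ (times $dw_1dw_2dw_3$). So inside $(1,2)$, the correction to $\widehat\omega_{0,3}(w,-w,w_2)$ still has a fourth-order pole in $w$. Nor is ``finitely many terms at each $\beta$-order'' the right criterion. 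The differential $dw/(w-\beta)=\sum_{k\geq0}\beta^kw^{-k-1}dw$ has one term per order and coefficientwise residue $1$. Yet over Laurent series in $\beta$ it is regular at $w=0$, with residue $0$.

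What the exchange~\eqref{eq:app-color-residue-exchange} needs is a pole order at $w=0$ bounded \emph{independently} of the $\beta$-order, i.e.\ an integrand in $w^{-\ell_{\max}}\mathbb A[[w,\beta^2]]$. The paper gets this from two facts your argument never uses.
\begin{itemize}
\item By Lemma~\ref{lem:app-color-basis}, each stable child is a finite combination of the $\Xi_k$. Hence it has a pole of fixed finite order at $z=1$ and no other pole.
\item One has $z_\beta(w)-1=2\beta w\,[\mathfrak r_\beta(w)+\beta w]$ with the bracket a unit. So a pole of order $\ell$ at $z=1$ pulls back to a pole of order at most $\ell$ in $w$ at every order in $\beta$.
\end{itemize}
Combined with the fixed double pole of $B_\beta(w,-w)$, the simple pole of $K_\beta$ in $w$, and the nonnegative powers of $w$ in $\mathcal U_\beta$, this gives the uniform bound.

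Alternatively, you could build such a $\beta$-uniform pole bound into your induction hypothesis and check that one residue step preserves it. Either way, something of this kind must replace the polynomiality claim, after which the rest of your argument goes through.
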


\begin{proof}
Residues are invariant under \(z=z_\beta(w)\), viewed as an invertible local coordinate over the field of Laurent series in \(\beta\), since \(z_\beta(0)=1\) and \(z_\beta'(0)=2\beta\neq0\). So~\eqref{eq:app-color-recursion} becomes the single-residue recursion in \(w\), with the exact pulled-back data \(B_\beta,K_\beta\) and the involution \(w\mapsto-w\), and the original recursive order of the residues is preserved.

Next, we justify extracting the coefficients in \(\beta\) before evaluating each residue. For this, we need a bound on the principal part (i.e., the negative powers of the local coordinate) that is uniform in the coefficient being extracted. The coordinate satisfies
\begin{equation}
 z_\beta(w)-1=2\beta w\,[\mathfrak r_\beta(w)+\beta w],
 \label{eq:app-color-coordinate-unit}
\end{equation}
where the last factor has constant term one, and is therefore invertible as a formal power series. Consequently, a fixed pole of order \(\ell\) at \(z=1\) pulls back to a Laurent series with pole order at most \(\ell\) in \(w\), independently of the order in \(\beta\). Lemma~\ref{lem:app-color-basis} supplies finite pole orders for every stable child, and when two child variables are set to \(w\) and \(-w\), their orders add but remain finite. The unstable child \(B_\beta(w,-w)\) has a fixed double pole, including the pullback of \(d(-w)\), while for the other Bergman factors the external variables are held as parameters. In addition,~\eqref{eq:app-color-K-exact} bounds the pole of the kernel in the residue variable by a fixed simple pole, and the factor \(\mathcal U_\beta\) contains only nonnegative powers of \(w\), so it preserves this bound.

We now induct on \(d\), with \(B_\beta=B_\loc+O(\beta^2)\) as the complexity-zero datum. Assuming the claimed normalized expansions for the children, the root integrand, multiplied by \(\beta^d\) and with the external differentials suppressed, belongs to \(w^{-\ell_{\max}}\mathbb A[[w,\beta^2]]\) for some finite \(\ell_{\max}\). Here \(\mathbb A\) is a coefficient algebra containing the formal parameters and the external variables, localized as necessary in those external variables, and its dependence on \(\Ezero\) remains polynomial. Since the bound on \(\ell_{\max}\) that we have just established is independent of the power of \(\beta\), taking the residue and extracting a parameter coefficient are commuting algebraic operations for such an integrand \(I_\beta(w)\,dw\),
\begin{equation}
 [\beta^{2j}]\Res_{w=0}I_\beta(w)\,dw
 = [\beta^{2j}w^{-1}]I_\beta(w)
 =\Res_{w=0}[\beta^{2j}]I_\beta(w)\,dw,
 \qquad j\geq0.
 \label{eq:app-color-residue-exchange}
\end{equation}
So the exchange is justified coefficient by coefficient in the formal series.

The genus-reducing child has complexity \(d-1\), and the two children in a product have total complexity \(d-1\). Since the root kernel contributes \(\beta^{-1}\), we get the singularity bound \(\beta^{-d}\). By~\eqref{eq:app-color-residue-exchange}, the coefficient of \(\beta^{-d}\) is the residue of the leading child coefficients and \(K_\loc\), which obey the local recursion, with the same explicit minus sign, differential pullbacks and ordered splittings. The recursive initial data agree as well, so uniqueness proves~\eqref{eq:app-color-correlator-scaling}. All corrections occur in even powers of \(\beta\), and the formal identity also covers specializations of the parameters for which the leading coefficient vanishes.
\end{proof}

\subsection{Extracting the homogeneous polynomial}
The basis forms of Lemma~\ref{lem:app-color-basis} have the leading pullback
\begin{equation}
 z_\beta^*\Xi_k=
 \beta^{-(2k+1)}
 \left[\frac{(2k+1)!}{2^{2k+1}}\frac{dw}{w^{2k+2}}
       +O(\beta^2)\right].
 \label{eq:app-color-xi-scaling}
\end{equation}
Indeed, the leading term is determined by~\eqref{eq:app-color-xi}, \(z_\beta-1=2\beta w+O(\beta^2)\) and \(dz_\beta=(2\beta+O(\beta^2))dw\). Moreover, the anti-invariance of \(\Xi_k\) and \(z_{-\beta}=1/z_\beta\) show that its pullback is odd in \(\beta\), so the bracket in~\eqref{eq:app-color-xi-scaling} contains only even powers.

Now consider a nonzero monomial in~\eqref{eq:app-color-poly}, which is a nonzero rational multiple of
\[
 \Ezero^j\prod_m Z_m^{j_m}\prod_i b_i^{2k_i},
 \qquad
 \wt=2\sum_i k_i+2\sum_m m j_m-2j.
\]
Up to that rational factor, its contribution to the scaled differential begins at
\begin{equation}
 \beta^{-(\wt+n)}\,\Ezero^j\prod_m Z_m^{j_m}
       \prod_i\frac{(2k_i+1)!}{2^{2k_i+1}}
                 \frac{dw_i}{w_i^{2k_i+2}} .
 \label{eq:app-color-monomial}
\end{equation}
For distinct tuples \((j,(j_m),\mathbf k)\), these leading expressions are linearly independent, since the parameter monomials are independent over \(\mathbb Q\) and the Laurent powers of the \(w_i\) distinguish \(\mathbf k\).

If \(\Nplus_{g,n}\) vanishes, the identity follows immediately from Lemma~\ref{lem:app-color-scaling}. Otherwise, let \(\wt_{\max}\) be its largest weight. By linear independence, the resulting term of order \(\beta^{-(\wt_{\max}+n)}\) is nonzero, while smaller weights contribute only less singular powers. Since the scaling lemma bounds the singularity by \(\beta^{-d}\), we get
\begin{equation}
 \wt_{\max}\leq d-n=2g-2.
 \label{eq:app-color-degree-bound}
\end{equation}
Consequently, the coefficient of \(\beta^{-d}\) comes only from the leading terms~\eqref{eq:app-color-monomial} of weight \(2g-2\), and the coordinate corrections, which begin at higher orders, leave this coefficient unchanged.

Writing \(\mathcal C^{\mathrm{top}}_{\mathbf k}\) for the part of \(\mathcal C_{g,n;\mathbf k}\) of parameter weight \(2g-2-2\sum_i k_i\), we have proved
\begin{equation}
 \omega^\loc_{g,n}=
 \sum_{\mathbf k}\mathcal C^{\mathrm{top}}_{\mathbf k}
       \prod_i \frac{(2k_i+1)!}{2^{2k_i+1}}
                    \frac{dw_i}{w_i^{2k_i+2}}.
 \label{eq:app-color-leading-form}
\end{equation}
The elementary Laplace identity
\begin{equation}
 \int_0^\infty b^{2k+1}\e^{-wb}\dd b
             =\frac{(2k+1)!}{w^{2k+2}},\qquad \Re w>0,
 \label{eq:app-color-laplace-monomial}
\end{equation}
turns~\eqref{eq:app-color-leading-form} into
\[
 V^q_{g,n}(\mathbf b)
   =\sum_{\mathbf k}\mathcal C^{\mathrm{top}}_{\mathbf k}
                  \prod_i\frac{b_i^{2k_i}}{2^{2k_i+1}}
   =2^{-n}[\Nplus_{g,n}]_{2g-2}(\mathbf b/2).
\]
This proves Proposition~\ref{prop:app-color-main}. \hfill\(\square\)

\paragraph{Two useful consequences.}
Since \(j\leq d\) and the \(Z_m\) have nonnegative weights, the bound~\eqref{eq:app-color-degree-bound} also gives
\[
 \deg_{\mathbf b}\Nplus_{g,n}\leq 2g-2+2d=6g-6+2n.
\]
Moreover, all statements are polynomial in \(\Ezero\), so setting \(\Ezero=0\) gives the hard-edge identity directly, without any expansion in \(1/\Ezero\), even when the leading coefficient vanishes at that specialization.

\subsection{External characters and internal colors}
\label{app:coloring-parity}

Let us now allow poles at both endpoints. We put \(\Xi_k^+=\Xi_k\) and let \(\Xi_k^-\) be its pullback under \(z\mapsto-z\), so that their expansions at the origin are
\begin{equation}
 \Xi_k^\pm(z)=\sum_{b\geq1}(\pm1)^b b^{2k+1}z^{b-1}dz.
 \label{eq:app-color-signed-basis}
\end{equation}
The only pole of \(\Xi_k^+\) is at \(+1\), and the only pole of \(\Xi_k^-\) is at \(-1\). The rationality and anti-invariance argument of Lemma~\ref{lem:app-color-basis}, now allowing both poles, gives a unique expansion of the full correlator in products of these signed basis forms. So \(N_{\mathcal I}\) in~\eqref{eq:parity-character} is the coefficient polynomial for poles at \(-1\) in the variables indexed by \(\mathcal I\), and at \(+1\) in the remaining variables, or equivalently,
\begin{equation}
 N_{\mathcal I}=2^{-n}\sum_{\boldsymbol\varsigma\in\{\pm1\}^n}
 \left(\prod_{i\in\mathcal I}\varsigma_i\right)
 N^q_{g,n}\big|_{\boldsymbol\varsigma},
 \label{eq:app-color-fourier}
\end{equation}
where each parity sector is regarded as a polynomial in the same formal boundary variables. In particular, \(N_\varnothing\) is the average of all the parity-sector polynomials.

This projection acts on the external variables after the residues have been evaluated, so the internal kernel labels are still summed over. The following mixed graph shows how this distinction survives the homogeneous projection.

\paragraph{A mixed graph at \((1,2)\).}
We write \(\omega=W\prod_i dz_i\), and let \(W^-_{1,1}\) be the one-loop contribution from the residue at \(-1\). Using~\eqref{eq:N11minus} and~\eqref{eq:app-color-signed-basis}, we find
\begin{align}
 W^-_{1,1}(z)
 &=-\frac{\mathcal C^-_{1,1;0}}{(1+z)^2}
   -\mathcal C^-_{1,1;1}\frac{z^2-4z+1}{(1+z)^4},
 \label{eq:app-color-W11minus}\\
 \mathcal C^-_{1,1;0}&=4(1+\Ezero)Z_1-\frac{2\Ezero}{3}-\frac{5}{12},
 \qquad \mathcal C^-_{1,1;1}=\frac{1+\Ezero}{6},\nonumber\\
 W^-_{1,1}(1)&=-(1+\Ezero)Z_1+\frac{3\Ezero}{16}+\frac18.
 \label{eq:app-color-W11value}
\end{align}
This function is regular at the lower branch point. Let \(\widehat K_{\rm glob}\) be the scalar kernel, defined by \(K_{\rm glob}(z_1,z)=\widehat K_{\rm glob}(z_1,z)\,dz_1/dz\). Then the contribution with a lower root, an upper one-loop child and both external poles at \(+1\) is
\begin{equation}
 W^{\rm mix}_{1,2}(z_1,z_2)=
 \Res_{z=1}\frac{\widehat K_{\rm glob}(z_1,z)}{z^2}
 \left[\frac{W^-_{1,1}(z^{-1})}{(z-z_2)^2}
       +\frac{W^-_{1,1}(z)}{(z^{-1}-z_2)^2}\right].
 \label{eq:app-color-mixed-residue}
\end{equation}
Here the factor \(z^{-2}\) incorporates \(d(z^{-1})=-dz/z^2\) and the explicit minus sign of the recursion. Near the residue, we have
\[
 \widehat K_{\rm glob}(z_1,z)
   =\frac{4\Ezero}{(z-1)(z_1-1)^2}+O(1),
\]
while the square bracket is analytic, and consequently
\begin{equation}
 W^{\rm mix}_{1,2}(z_1,z_2)=
 \frac{8\Ezero\,W^-_{1,1}(1)}{(1-z_1)^2(1-z_2)^2}.
 \label{eq:app-color-mixed-answer}
\end{equation}
At this topology there are exactly two recursive vertices. The other mixed term with a lower root contains \(\omega^-_{0,3}=-8(1+\Ezero)\prod_i dz_i/(1+z_i)^2\), whose remaining external pole is at \(-1\), so it is removed by the all-lower external projection. The terms with an upper root have their first external pole at \(-1\), and they are removed as well. Therefore,~\eqref{eq:app-color-mixed-answer} is the full difference,
\begin{equation}
 N_{\varnothing;1,2}-\Nplus_{1,2}
   =\Ezero+\frac32\Ezero^2-8\Ezero(1+\Ezero)Z_1.
 \label{eq:app-color-exact-difference}
\end{equation}
Taking the part of weight zero, rescaling the boundaries and using Proposition~\ref{prop:app-color-main}, we get
\begin{equation}
 \displaystyle
 \frac14[N_{\varnothing;1,2}]_0(b_1/2,b_2/2)
       -V^q_{1,2}(b_1,b_2)=-2\Ezero Z_1.
 \label{eq:app-color-failure12}
\end{equation}
Since \(\wt(\Ezero Z_1)=0\), the mixed graph contributes at the same weight as \(V^q_{1,2}\). Its correction is generically nonzero, and it vanishes at special values of the parameters, such as \(\Ezero=0\) or \(q=0\).

For comparison, the Fourier transformation of~\eqref{eq:app-N12} gives
\begin{equation}
 N_{\{1\};1,2}=N_{\{2\};1,2}
       =\frac{c^2-1}{2}=2\Ezero(1+\Ezero).
 \label{eq:app-color-mixed-character}
\end{equation}
These mixed \emph{external} characters describe a separate part of the answer, while the contribution of internal mixed colors to the all-lower external character is given by~\eqref{eq:app-color-exact-difference}.

\paragraph{The next one-boundary example.}
We define
\[
 \Delta_{g,n}(\mathbf b)=
 2^{-n}[N_{\varnothing;g,n}]_{2g-2}(\mathbf b/2)
      -V^q_{g,n}(\mathbf b).
\]
At \((2,1)\), we have \(N_\varnothing=(N_{\rm even}+N_{\rm odd})/2\), where \(N_{\rm even}\) and \(N_{\rm odd}\) are the two parity polynomials of~\eqref{eq:app-N21}. Substituting~\eqref{eq:app-N21}--\eqref{eq:app-N21-even-coeffs} and~\eqref{eq:app-V21}, and extracting the indicated weight, we obtain the exact identity
\begin{equation}
\begin{split}
 \Delta_{2,1}(b)=-\frac1{96}\Big[
 &21504\Ezero^2Z_1^3+384\Ezero^2Z_1^2b^2+15360\Ezero^2Z_1Z_2+\Ezero^2Z_1b^4\\
 &-8256\Ezero Z_1^2-24\Ezero Z_1b^2-3840\Ezero Z_2+24Z_1\Big].
\end{split}
 \label{eq:app-color-failure21}
\end{equation}
Every monomial here has weight two. In particular, \(\Delta_{2,1}|_{\Ezero=0}=-Z_1/4\), so the discrepancy persists even at the hard edge. The same extraction in the earlier examples gives
\begin{equation}
 \Delta_{0,3}=\Delta_{1,1}=\Delta_{0,4}=0.
 \label{eq:app-color-earlier}
\end{equation}
So, ordered by the recursion complexity \(2g-2+n\), equations~\eqref{eq:app-color-failure12} and~\eqref{eq:app-color-earlier} locate the first mixed-color correction at \((1,2)\), and the distinction between external parity and internal color persists after the homogeneous projection.

\subsection{The upper-edge counterpart}
\label{app:coloring-upper}

The same argument also proves the upper-edge assertion of Section~\ref{sec:edges}, with its different grading. Let us define \(\omega^-_{g,n}\) by taking the residue at \(-1\) at every recursive vertex, so that its coefficients have the form
\begin{equation}
 N^{(-\cdots-)}_{g,n}(\mathbf b)
       =(-1)^{b_1+\cdots+b_n}\widetilde N^{(-\cdots-)}_{g,n}(\mathbf b),
 \label{eq:app-color-upper-polynomial}
\end{equation}
where \(\widetilde N^{(-\cdots-)}_{g,n}\) is a polynomial in the boundary squares. We use the same parameter substitutions \(\Ezero\mapsto\beta^2\Ezero\) and \(Z_m\mapsto\beta^{-2m}Z_m\), but now set
\[
 z^-_\beta(\tilde w)=-\big(\sqrt{1+\beta^2\tilde w^2}+\beta \tilde w\big)^2.
\]
Then \(x(z^-_\beta(\tilde w))=-2-4\beta^2\tilde w^2\), and the exact pullback is
\begin{equation}
 (z^-_\beta)^*(y\,dx)=
 \beta^3\frac{\tilde w^2\sqrt{1+\beta^2\tilde w^2}}{1+\beta^2(\Ezero+\tilde w^2)}
 \exp\!\left[-\sum_{m\geq1}\frac{Z_m}{m}
             (4\tilde w)^{2m}(1+\beta^2\tilde w^2)^m\right]d\tilde w.
 \label{eq:app-color-upper-scaling}
\end{equation}
The Bergman kernel and its integrated primitive~\eqref{eq:app-color-integrated-B} are unchanged by the extra simultaneous minus sign, so dividing by the exact one-form in~\eqref{eq:app-color-upper-scaling} gives
\begin{equation}
 K^-_\beta(\tilde w_1,\tilde w)=\beta^{-3}
 \frac{[1+\beta^2(\Ezero+\tilde w^2)]\e^{\Phi_\beta(\tilde w)}}
 {2\tilde w(\tilde w_1^2-\tilde w^2)\mathfrak r_\beta(\tilde w_1)}\,\frac{d\tilde w_1}{d\tilde w}.
 \label{eq:app-color-upper-K-exact}
\end{equation}
In particular, \(K^-_\beta=\beta^{-3}(K_{\rm up}+O(\beta^2))\), where \(K_{\rm up}\) is the kernel of
\begin{equation}
 x_{\rm up}(\tilde w)=\frac{\tilde w^2}{2},\qquad
 y_{\rm up}(\tilde w)=\tilde w
 \exp\!\left[-\sum_{m\geq1}\frac{Z_m}{m}(4\tilde w)^{2m}\right],
 \qquad B_{\rm up}=\frac{d\tilde w_1\,d\tilde w_2}{(\tilde w_1-\tilde w_2)^2}.
 \label{eq:app-color-bosonic}
\end{equation}
The extra factor in~\eqref{eq:app-color-upper-K-exact} is an invertible series with nonnegative powers of \(\tilde w\) and \(\beta^2\), just as at the lower edge, so the same bounded-principal-part argument and the coefficient identity~\eqref{eq:app-color-residue-exchange} apply. Counting three powers of \(\beta^{-1}\) per kernel, the residue induction gives \(\beta^{3d}(z^-_\beta)^*\omega^-_{g,n} =\omega^{\rm up}_{g,n}+O(\beta^2)\), with the parameter substitutions understood. Moreover, \((z^-_\beta)^*\Xi_k^-=(z^+_\beta)^*\Xi_k^+\), where \(z^+_\beta=z_\beta\), so the boundary normalization is the same as before. The linear independence of the leading basis terms then implies \(\wt(\widetilde N^{(-\cdots-)}_{g,n})\leq3d-n=6g-6+2n\), and hence
\begin{equation}
 \displaystyle
 V^{q,\mathrm{up}}_{g,n}(\mathbf b)=
 2^{-n}[\widetilde N^{(-\cdots-)}_{g,n}]_{6g-6+2n}(\mathbf b/2).
 \label{eq:app-color-upper-identity}
\end{equation}
Here $V^{q,\mathrm{up}}_{g,n}$ is the upper-edge specialization defined in Section~\ref{sec:edges}. The formula uses the polynomial factor $\widetilde N^{(-\cdots-)}_{g,n}$, with the alternating character factored out as above. In the bosonic convention where \(\zeta_q(2m)\) multiplies \((2\tilde w)^{2m}\) in the exponent,~\eqref{eq:app-color-bosonic} corresponds to the substitution \(\zeta_q(2m)\mapsto4^mZ_m\) in~\eqref{eq:upper-substitution}. So the upper and lower identities follow from the same colored recursion, with kernel weights three and one, respectively.

\subsection{The total-degree bound for the full discrete volumes}
\label{app:global-degree}

We now include both endpoints in every step of the recursion. Let \(D=3g-3+n\), with \(n\geq1\) and \(2g-2+n>0\). The two-pole basis argument of Appendix~\ref{app:coloring-parity} then gives
\begin{equation}
 \omega_{g,n}
 =\sum_{\mathcal I\subseteq\{1,\ldots,n\}}\sum_{\mathbf k\geq0}
 \mathcal C^{\mathcal I}_{g,n;\mathbf k}
 \prod_{i\in\mathcal I}\Xi_{k_i}^-(z_i)\prod_{i\notin\mathcal I}\Xi_{k_i}^+(z_i),
 \label{eq:global-degree-basis}
\end{equation}
which is a finite sum at fixed topology. The factor \(b_i\) in each basis coefficient explains the boundary factors in~\eqref{eq:discrete-def}, and after dividing by them, the remaining powers are \(b_i^{2k_i}\). What we must bound is the \emph{sum} of the \(k_i\), not just each one separately.

For \(c>1\), both ramification points are regular. At fixed topology, only finitely many Taylor coefficients of the analytic product enter the recursion, so the regular local expansion applies even though the global curve is not algebraic, and its external descendant indices \(j_i\) satisfy \(\sum_i j_i\leq D\)~\cite[Proposition 4.1]{Eynard:2011intersection}. It is useful to recall the dimension count behind this bound. Consider a connected stable graph with \(e\) internal edges, and with vertex genera and valences \(g_v,n_v\). Then
\begin{equation}
 \sum_v(3g_v-3+n_v)=D-e.
 \label{eq:global-degree-dimension}
\end{equation}
Here the vertices are stable components in the intersection expansion, rather than the individual kernel insertions counted earlier. If we give each internal half-edge \(h\) a descendant index \(j_h\geq0\), and represent the background translation by extra markings with indices \(\ell\geq2\), then matching the degree of the insertions to the dimension at every vertex gives
\begin{equation}
 \sum_i j_i+\sum_{\text{internal half-edges }h}j_h
       +\sum_{\text{extra markings}}(\ell-1)=D-e.
 \label{eq:global-degree-budget}
\end{equation}
All the terms other than the external sum are nonnegative.

The standard differential with descendant index \(j\) has a pole of order \(2j+2\) at its assigned endpoint, and the basis reduction used above is triangular in the pole order, i.e., it replaces this differential by \(\Xi_k^\pm\) with \(k\leq j\). Hence every nonzero term in~\eqref{eq:global-degree-basis} obeys
\begin{equation}
 \sum_i k_i\leq D.
 \label{eq:global-degree-bound}
\end{equation}
Each fixed-parity polynomial therefore has total degree at most \(D\) in the boundary squares, as claimed in Section~\ref{sec:globalTR}.

The same count also bounds the deformation data. Near either endpoint, \(z-z^{-1}\) has a simple zero, so \(Z_m\) first changes the coefficient of degree \(2m\) beyond the leading term of \(y\), and it first enters an intersection coupling of degree \(m\). Since the leading term and \(B\) are independent of the \(Z_m\), no \(Z_m\) with \(m>D\) can occur.

Finally, the explicit recursion kernel~\eqref{eq:global-kernel} is linear in \(c\), and each stable amplitude uses \(2g-2+n\) kernels, so its coefficients are polynomials in \(c\) and finitely many \(Z_m\). The identities established for \(c>1\) therefore extend to \(c=1\), including the bounds on the degree and on the deformation data, even though the lower endpoint then becomes a hard edge. We also note that the permutation symmetry of \(\omega_{g,n}\) exchanges the lengths \emph{and} their parities, so within one fixed parity pattern it gives exactly the permutations that preserve that pattern.

\endgroup

%-------------------------------------------------
\section{More global discrete data}
\label{app:globaldata}
%-------------------------------------------------

In this appendix, we collect the global discrete volumes at $(1,2)$ and $(2,1)$. At $(g,n)=(1,2)$, let us put $|\mathbf b|^2=b_1^2+b_2^2$ and
\begin{equation}
 \mathcal X_{1,2}=|\mathbf b|^4+96Z_1|\mathbf b|^2+1344Z_1^2+960Z_2.
\end{equation}
Then, indicating the parities of $(b_1,b_2)$ in the arguments, we have
\begin{equation}
 \begin{aligned}
 N^q_{1,2}({\rm odd},{\rm odd})&=\frac16[(c^2+1)\mathcal X_{1,2}-12c^2|\mathbf b|^2-192c^2Z_1+192Z_1+20c^2-4],\\
 N^q_{1,2}({\rm odd},{\rm even})&=N^q_{1,2}({\rm even},{\rm odd})=-\frac{c}{3}[\mathcal X_{1,2}-6|\mathbf b|^2+5],\\
 N^q_{1,2}({\rm even},{\rm even})&=\frac16[(c^2+1)\mathcal X_{1,2}-12c^2|\mathbf b|^2-192c^2Z_1+192Z_1+32c^2-16].
 \end{aligned}
\label{eq:app-N12}
\end{equation}
These formulae are exact in $q$.

At $(g,n)=(2,1)$, the recursion takes the particularly simple one-point form
\begin{equation}
 W_{2,1}(z_1)=\sum_{\pm}\mathop{\rm Res}_{z=\pm1}
 \frac{\widehat K_{\rm glob}(z_1,z)}{z^2}
 \left[W_{1,2}(z,z^{-1})+W_{1,1}(z)W_{1,1}(z^{-1})\right].
\label{eq:app-W21-global-recursion}
\end{equation}
Expanding the resulting rational function around $z_1=0$ gives the two parity polynomials, which we write as
\begin{equation}
 N^q_{2,1}(b)=
 \begin{cases}
 -\dfrac{1}{17280}\displaystyle\sum_{j=0}^{4}{\cal A}_j b^{2j},
 & b\ \text{odd},\\[8pt]
 \dfrac{c}{17280}\displaystyle\sum_{j=0}^{4}{\cal B}_j b^{2j},
 & b\ \text{even}.
 \end{cases}
\label{eq:app-N21}
\end{equation}
The coefficients in the odd sector are
\begin{align}
 {\cal A}_4={}&5(1+3c^2),\\
 {\cal A}_3={}&12\left[(232+696c^2)Z_1+3-49c^2\right],\\
 {\cal A}_2={}&6\left[(68928+206784c^2)Z_1^2
 +(6704-15312c^2)Z_1\right.\nonumber\\
 &\left.\hspace{2.0cm} +(27840+83520c^2)Z_2-247+1015c^2\right],\\
 {\cal A}_1={}&4\Big[(4400640+13201920c^2)Z_1^3
 +(1126080-60480c^2)Z_1^2\nonumber\\
 &\hspace{0.8cm} +(5598720+16796160c^2)Z_1Z_2
 +(-25656+39672c^2)Z_1\nonumber\\
 &\hspace{0.8cm} +(463680-8640c^2)Z_2
 +(1612800+4838400c^2)Z_3+1861-4923c^2\Big],\\
 {\cal A}_0={}&9\Big[17305600Z_1^4+9123840Z_1^3+49029120Z_1^2Z_2
 +436608Z_1^2\nonumber\\
 &\hspace{0.8cm}+13424640Z_1Z_2+33259520Z_1Z_3+6624Z_1
 +11366400Z_2^2\nonumber\\
 &\hspace{0.8cm}+266880Z_2+4730880Z_3+12902400Z_4-667\nonumber\\
 &\hspace{0.8cm}+c^2\big(51916800Z_1^4+9205760Z_1^3
 +147087360Z_1^2Z_2-110976Z_1^2\nonumber\\
 &\hspace{2.1cm}+15144960Z_1Z_2+99778560Z_1Z_3-8352Z_1
 +34099200Z_2^2\nonumber\\
 &\hspace{2.1cm}-51840Z_2+5877760Z_3+38707200Z_4+1575\big)\Big].
\label{eq:app-N21-odd-coeffs}
\end{align}
For the even sector, the coefficients are
\begin{align}
 {\cal B}_4={}&5(c^2+3),\\
 {\cal B}_3={}&24\left[(348+116c^2)Z_1-10-13c^2\right],\\
 {\cal B}_2={}&48\left[(25848+8616c^2)Z_1^2
 +(300-1376c^2)Z_1\right.\nonumber\\
 &\left.\hspace{2.0cm} +(10440+3480c^2)Z_2-23+119c^2\right],\\
 {\cal B}_1={}&128\Big[(412560+137520c^2)Z_1^3
 +(51840-18540c^2)Z_1^2\nonumber\\
 &\hspace{0.8cm} +(524880+174960c^2)Z_1Z_2
 +(-2574+3012c^2)Z_1\nonumber\\
 &\hspace{0.8cm} +(21600-7380c^2)Z_2
 +(151200+50400c^2)Z_3+165-286c^2\Big],\\
 {\cal B}_0={}&18432\Big[25350Z_1^4+8930Z_1^3+71820Z_1^2Z_2+3Z_1^2
 +13530Z_1Z_2\nonumber\\
 &\hspace{0.8cm}+48720Z_1Z_3+30Z_1+16650Z_2^2+45Z_2+4900Z_3
 +18900Z_4-3\nonumber\\
 &\hspace{0.8cm}+c^2\big(8450Z_1^4+20Z_1^3+23940Z_1^2Z_2+156Z_1^2
 +420Z_1Z_2\nonumber\\
 &\hspace{2.1cm}+16240Z_1Z_3-36Z_1+5550Z_2^2+60Z_2+280Z_3
 +6300Z_4+4\big)\Big].
\label{eq:app-N21-even-coeffs}
\end{align}
As expected from the pole structure of the global curve, each sector is a quartic in $b^2$, with $q$-zeta values up to $Z_4=\zeta_{q^2}(8)$. As a check, if we set $c=1+2\Ezero$, resolve the constant parity character and take its homogeneous lower-edge part, then~\eqref{eq:app-N21} reproduces the previously obtained $(2,1)$ edge projection, including the mixed-color correction to the local volume.

%-------------------------------------------------
\section{Local volume data}
\label{app:localvolumes}
%-------------------------------------------------

For reference, we collect here the local volumes used in the checks of Sections~\ref{sec:localvol} and~\ref{sec:ODE}. With $|\mathbf b|^2=b_1^2+b_2^2$, we have
\begin{equation}
 V^q_{0,3}=\Ezero,
\end{equation}
\begin{equation}
 V^q_{1,1}(b)=-\frac18+\frac{\Ezero}{48}(b^2+96Z_1),
\end{equation}
\begin{equation}
 V^q_{0,4}(b_1,b_2,b_3,b_4)
 =-3\Ezero+\frac{\Ezero^2}{2}
 \left(\sum_{i=1}^{4}b_i^2+96Z_1\right),
\end{equation}
\begin{equation}
 \begin{aligned}
 V^q_{1,2}(b_1,b_2)={}&\frac18-\frac{\Ezero}{8}(|\mathbf b|^2+112Z_1)\\
 &+\frac{\Ezero^2}{192}
 \left(|\mathbf b|^4+384Z_1|\mathbf b|^2+21504Z_1^2+15360Z_2\right).
 \end{aligned}
\end{equation}
At $(2,1)$,
\begin{equation}
\begin{aligned}
V^q_{2,1}(b)=\frac{1}{2211840}\Big\{&-
\left(25920b^2+2488320Z_1\right)\\
&+\Ezero\left(15408b^4+7603200Z_1b^2+493240320Z_1^2+236666880Z_2\right)\\
&-\Ezero^2\Big(696b^6+827136Z_1b^4
 +(211230720Z_1^2+89579520Z_2)b^2\\
&\hspace{2.05cm}+9968025600Z_1^3+14120386560Z_1Z_2+4789370880Z_3\Big)\\
&+\Ezero^3\Big(5b^8+11136Z_1b^6+(6617088Z_1^2+2672640Z_2)b^4\\
&\hspace{2.05cm}+(1126563840Z_1^3+1433272320Z_1Z_2+412876800Z_3)b^2\\
&\hspace{2.05cm}+39872102400Z_1^4+112963092480Z_1^2Z_2
 +26188185600Z_2^2\\
&\hspace{2.05cm}+76629934080Z_1Z_3+29727129600Z_4\Big)\Big\}.
\end{aligned}
\label{eq:app-V21}
\end{equation}

In the gapless theory, we have
\begin{equation}
 V^q_{0,n}(\mathbf b;0)=0,
 \qquad
 V^q_{1,n}(\mathbf b;0)=\frac{(-1)^n(n-1)!}{8},
\end{equation}
\begin{equation}
 V^q_{2,1}(b;0)=-\frac{3}{256}(b^2+96Z_1),
\end{equation}
\begin{equation}
 V^q_{2,2}(b_1,b_2;0)=\frac{9}{256}(b_1^2+b_2^2+128Z_1),
 \qquad
 V^q_{2,3}(\mathbf b;0)=-\frac{9}{64}\left(\sum_{i=1}^{3}b_i^2+160Z_1\right),
\end{equation}
\begin{equation}
 V^q_{3,1}(b;0)=-\frac{3}{8192}
 \left(5b^4+2688Z_1b^2+181248Z_1^2+76800Z_2\right).
\end{equation}
In the gapless sector, these expressions make the homogeneity and the delayed onset of the $q$-dependence manifest.

%-------------------------------------------------
%-------------------------------------------------
\section{String-equation data and checks}
\label{app:ODEdata}
%-------------------------------------------------

In this appendix, we derive the data used in Section~\ref{sec:ODE} from the lower local density, following~\cite{Johnson:2023ofr,Ahmed:2025extended,AhmedNaskar:2026resolvent}. As in Section~\ref{sec:ODE}, $\Ezero$ and $\Ehat$ are the unscaled threshold and energy, $\xi$ is the auxiliary coordinate, and $\widetilde\Gamma_q^{\rm str}$ is the signed coefficient of the string equation, while the continuum threshold is $E_0$, with $\Ezero=\epsilon^2E_0$ and $\epsilon=1-q$. When we differentiate $u_0(\xi)$ or the analytic part $G_0(u)=\sum_{k\geq1}t_ku^k$ of the string function, we hold $\Ezero$ and $q$ fixed.

\subsection{The density coefficients}
\label{subsec:ODE-density-coefficients}

For $0<q<1$, and initially for $\Ezero>0$, the local density~\eqref{eq:local-density} reads
\begin{equation}
 \rho^q_0(\Ehat)=\frac{\sqrt{\Ehat-\Ezero}}{2\pi\hbar\Ehat}\,
 \mathcal F_q(\Ehat-\Ezero),\qquad
 \mathcal F_q(v)=\prod_{n\geq1}
 \left(1+\frac{16q^{2n}v}{(1-q^{2n})^2}\right)
 =\sum_{n\geq0}\mathfrak f_n v^n.
 \label{eq:app-ODE-density}
\end{equation}
Since the sum of the absolute values of the coefficients of $v$ in the product factors converges, the product defines an entire function of $v$. Near the origin, it has the logarithmic representation
\begin{equation}
 \log\mathcal F_q(v)
 =-\sum_{m\geq1}\frac{(-16)^mZ_m}{m}v^m
 =16Z_1v-128Z_2v^2+\frac{4096}{3}Z_3v^3-16384Z_4v^4+\cdots.
 \label{eq:app-ODE-log-density}
\end{equation}
When we evaluate at $v=-\Ezero$ below, we use the product or its entire Taylor series, and the logarithmic representation only within its disk of convergence. The same coefficients enter the intersection formula through $\mathcal F_q(-w^2)$ in~\eqref{eq:it-spectral-factor}.

For compactness, let us define
\begin{equation}
 \begin{aligned}
 \mathfrak p_2&=Z_1^2-Z_2,\\
 \mathfrak p_3&=Z_1^3-3Z_1Z_2+2Z_3,\\
 \mathfrak p_4&=Z_1^4-6Z_1^2Z_2+3Z_2^2+8Z_1Z_3-6Z_4.
 \end{aligned}
 \label{eq:app-ODE-P-polynomials}
\end{equation}
In terms of these, exponentiating~\eqref{eq:app-ODE-log-density} gives
\begin{equation}
 \mathfrak f_0=1,\qquad \mathfrak f_1=16Z_1,\qquad
 \mathfrak f_2=128\mathfrak p_2,\qquad
 \mathfrak f_3=\frac{2048}{3}\mathfrak p_3,\qquad
 \mathfrak f_4=\frac{8192}{3}\mathfrak p_4.
 \label{eq:app-ODE-f-list}
\end{equation}
More generally, differentiating $\mathcal F_q$ gives the triangular recurrence
\begin{equation}
 n\mathfrak f_n=\sum_{m=1}^n(-1)^{m+1}16^mZ_m\mathfrak f_{n-m},
 \qquad n\geq1.
 \label{eq:app-ODE-f-recurrence}
\end{equation}

If we expand $\Ehat^{-1}$ in powers of $\Ehat-\Ezero$, for $|\Ehat-\Ezero|<\Ezero$, we get the threshold series
\begin{equation}
 \rho^q_0(\Ehat)=\frac1{\pi\hbar}\sum_{\ell\geq1}\varpi_\ell(\Ehat-\Ezero)^{\ell-1/2},
 \qquad
 \varpi_\ell=\frac12\sum_{j=0}^{\ell-1}
       \frac{(-1)^{\ell-1-j}\mathfrak f_j}{\Ezero^{\ell-j}}.
 \label{eq:app-ODE-canonical}
\end{equation}
In particular,
\begin{equation}
 \varpi_1=\frac1{2\Ezero},\qquad
 \varpi_2=\frac{8Z_1}{\Ezero}-\frac1{2\Ezero^2},\qquad
 \varpi_3=\frac{64\mathfrak p_2}{\Ezero}-\frac{8Z_1}{\Ezero^2}+\frac1{2\Ezero^3}.
 \label{eq:app-ODE-C123}
\end{equation}
Here the factor $1/2$ in~\eqref{eq:app-ODE-canonical} is fixed by the prefactor of the original density.

\subsection{Abel inversion and the complete inverse-string function}
\label{subsec:ODE-Abel}

For a decreasing tree-level potential, the WKB density is given by
\begin{equation}
 \rho^q_0(\Ehat)=\frac{1}{2\pi\hbar}
 \int_{\xi_\ast}^{\mu}\frac{d\xi}{\sqrt{\Ehat-u_0(\xi)}},
 \qquad u_0(\xi_\ast)=\Ehat,\qquad u_0(\mu)=\Ezero.
\label{eq:WKB-density}
\end{equation}
Here $\xi_\ast$ is the turning point. The signed square root of~\eqref{eq:string-equation} gives the tree-level equation, and we fix the origin of $\xi$ by
\begin{equation}
 G_0(u_0(\xi))+\xi=\frac{\widetilde\Gamma^{\rm str}_q}{\sqrt{u_0(\xi)}},
 \qquad G_0(u)=\sum_{k\geq1}t_ku^k,\qquad G_0(0)=0.
 \label{eq:app-ODE-tree}
\end{equation}
Let us also introduce
\begin{equation}
 \mathcal G(u)=G_0(u)-\widetilde\Gamma^{\rm str}_q u^{-1/2},
 \qquad \xi(u)=-\mathcal G(u).
 \label{eq:app-ODE-complete-G}
\end{equation}
On a decreasing branch, changing the integration variable in~\eqref{eq:WKB-density}, we find
\begin{equation}
 \rho^q_0(\Ehat)=\frac1{2\pi\hbar}
 \int_{\Ezero}^{\Ehat}\frac{\mathcal G'(u)}{\sqrt{\Ehat-u}}\,du.
 \label{eq:app-ODE-Abel}
\end{equation}
So the full density determines $\mathcal G$ directly, and its analytic component $G_0$ is obtained after separating the pole term.

In order to obtain the dictionary between the coefficients, we write $u=\Ezero+v$ and expand
\begin{equation*}
 \mathcal G'(\Ezero+v)=\sum_{\ell\geq1}
 \frac{\mathcal G^{(\ell)}(\Ezero)}{(\ell-1)!}\,v^{\ell-1}.
\end{equation*}
The required integral follows from the substitution $v=(\Ehat-\Ezero)t$,
\begin{equation}
 \begin{aligned}
 \int_0^{\Ehat-\Ezero}\frac{v^{\ell-1}}{\sqrt{\Ehat-\Ezero-v}}\,dv
 &=(\Ehat-\Ezero)^{\ell-1/2}\int_0^1t^{\ell-1}(1-t)^{-1/2}\,dt\\
 &=\frac{2^\ell(\ell-1)!}{(2\ell-1)!!}\,(\Ehat-\Ezero)^{\ell-1/2}.
 \end{aligned}
 \label{eq:app-ODE-beta-integral}
\end{equation}
Comparing with~\eqref{eq:app-ODE-canonical}, we find
\begin{equation}
 \mathcal G_\ell\equiv\mathcal G^{(\ell)}(\Ezero)
 =\frac{(2\ell-1)!!}{2^{\ell-1}}\varpi_\ell
 =\frac{(2\ell-1)!!}{2^\ell}
    \sum_{j=0}^{\ell-1}\frac{(-1)^{\ell-1-j}\mathfrak f_j}{\Ezero^{\ell-j}}.
 \label{eq:app-ODE-Ap}
\end{equation}
Only $\mathfrak f_0,\ldots,\mathfrak f_{\ell-1}$ enter a given $\mathcal G_\ell$, and the first three are
\begin{equation}
 \mathcal G_1=\frac1{2\Ezero},\qquad
 \mathcal G_2=\frac{12Z_1}{\Ezero}-\frac3{4\Ezero^2},\qquad
 \mathcal G_3=\frac{240\mathfrak p_2}{\Ezero}-\frac{30Z_1}{\Ezero^2}+\frac{15}{8\Ezero^3}.
 \label{eq:app-ODE-A123}
\end{equation}
These coefficients fix the derivatives in~\eqref{eq:app-ODE-u12-explicit}--\eqref{eq:app-ODE-u3-explicit}. In order to reconstruct the times themselves, we next separate the nonanalytic pole term.

\subsection{The pole coefficient and the analytic times}
\label{subsec:ODE-times}

The pole part of $\mathcal G'(u)$ is $\widetilde\Gamma^{\rm str}_q/(2u^{3/2})$, and its contribution to~\eqref{eq:app-ODE-Abel} is elementary, since
\begin{equation}
 \frac{d}{du}\sqrt{\frac{\Ehat-u}{u}}
 =-\frac{\Ehat}{2u^{3/2}\sqrt{\Ehat-u}},\qquad
 \int_{\Ezero}^{\Ehat}\frac{du}{u^{3/2}\sqrt{\Ehat-u}}
 =\frac{2\sqrt{\Ehat-\Ezero}}{\Ehat\sqrt{\Ezero}}.
 \label{eq:app-ODE-pole-integral}
\end{equation}
The pole term therefore contributes
\begin{equation}
 \rho_\Gamma(\Ehat)=\frac{\widetilde\Gamma^{\rm str}_q}{2\pi\hbar\sqrt{\Ezero}}
                \frac{\sqrt{\Ehat-\Ezero}}{\Ehat}.
 \label{eq:app-ODE-pole-density}
\end{equation}
We isolate the analytic part $G_0$ by choosing the pole coefficient such that the numerator of $\rho^q_0-\rho_\Gamma$ vanishes at $\Ehat=0$, which gives
\begin{equation}
 \widetilde\Gamma^{\rm str}_q=\sqrt{\Ezero}\,\mathcal F_q(-\Ezero)
 =\sqrt{\Ezero}\prod_{n\geq1}
       \left(1-\frac{16q^{2n}\Ezero}{(1-q^{2n})^2}\right).
 \label{eq:app-ODE-Gamma}
\end{equation}
Equivalently, on the sheet with $\sqrt{\Ehat-\Ezero}=i\sqrt{\Ezero}$ at zero energy,
\begin{equation}
 \mathop{\rm Res}_{\Ehat=0}\rho^q_0(\Ehat)=\frac{i\widetilde\Gamma^{\rm str}_q}{2\pi\hbar}.
 \label{eq:app-ODE-pole-residue}
\end{equation}
Equation~\eqref{eq:app-ODE-pole-residue} fixes the signed coupling of the lower local string equation. The exact finite-$q$ microscopic BPS relation~\eqref{eq:BPS-residue} uses the global coupling~\eqref{eq:Gamma-global}, with the multiplet weight $\mathcal W_r$ restored, in the charge range specified in Section~\ref{sec:BPS}. The continuum-normalized local and global couplings share the correlated limit~\eqref{eq:Gamma-limit}; their relation to the unscaled string-equation convention is given in Appendix~\ref{subsec:ODE-continuum-data}.

The remaining density is
\begin{equation}
 \rho_G(\Ehat)=\frac{\sqrt{\Ehat-\Ezero}}{2\pi\hbar}\Upsilon_q(\Ehat-\Ezero),
 \qquad
 \Upsilon_q(v)=\frac{\mathcal F_q(v)-\mathcal F_q(-\Ezero)}{v+\Ezero}.
 \label{eq:app-ODE-subtracted-density}
\end{equation}
The quotient $\Upsilon_q$ is entire, since its apparent pole is removable. Using
\begin{equation*}
 \frac{v^n-(-\Ezero)^n}{v+\Ezero}
 =\sum_{j=0}^{n-1}v^{n-1-j}(-\Ezero)^j
\end{equation*}
we find
\begin{equation}
 \Upsilon_q(v)=\sum_{\ell\geq1}v^{\ell-1}\sum_{n\geq \ell}\mathfrak f_n(-\Ezero)^{n-\ell}.
 \label{eq:app-ODE-divided-difference}
\end{equation}
Applying~\eqref{eq:app-ODE-beta-integral} to the subtracted density now determines the analytic derivatives,
\begin{equation}
 G_0^{(\ell)}(\Ezero)=\frac{(2\ell-1)!!}{2^\ell}
               \sum_{n\geq \ell}\mathfrak f_n(-\Ezero)^{n-\ell},
 \qquad \ell\geq1.
 \label{eq:app-ODE-G0-derivatives}
\end{equation}
Once the nonanalytic term in $\mathcal G$ has been separated, the function $G_0$ is analytic at the origin, so we can Taylor expand its derivatives from $u=\Ezero$ back to $u=0$,
\begin{equation}
 \begin{aligned}
 t_k=\frac{G_0^{(k)}(0)}{k!}
 &=\frac1{k!}\sum_{\ell\geq k}
      \frac{(-\Ezero)^{\ell-k}}{(\ell-k)!}\,G_0^{(\ell)}(\Ezero)\\
 &=\frac1{k!}\sum_{n\geq k}\mathfrak f_n(-\Ezero)^{n-k}
      \sum_{\ell=k}^n\frac{(2\ell-1)!!}{2^\ell(\ell-k)!}.
 \end{aligned}
 \label{eq:app-ODE-times-double-sum}
\end{equation}
The inner sum is
\begin{equation}
 \sum_{\ell=k}^n\frac{(2\ell-1)!!}{2^\ell(\ell-k)!}
 =\frac{(2n+1)!!}{2^n(2k+1)(n-k)!}.
 \label{eq:app-ODE-finite-sum}
\end{equation}
This identity can be proven by induction, since at $n=k$ both sides equal $(2k-1)!!/2^k$, and subtracting the right-hand side at $n-1$ from that at $n$ gives the last summand. The exact multicritical times are therefore
\begin{equation}
 t_k(\Ezero,q)=\frac1{(2k+1)k!}
 \sum_{n\geq k}\frac{(2n+1)!!}{2^n(n-k)!}
                  \mathfrak f_n(-\Ezero)^{n-k},\qquad k\geq1.
 \label{eq:app-ODE-times}
\end{equation}
For fixed $k$, the factorial ratio in this series grows only polynomially in $n$, so the entire Taylor coefficients of $\mathcal F_q$ give locally uniform convergence. This justifies the preceding rearrangements, and the resulting $G_0$ is entire in its argument, as is also explicit in~\eqref{eq:app-ODE-G0-integral} below.

For small gap, the first few times are
\begin{equation}
 \begin{aligned}
 t_1&=8Z_1-160\Ezero \mathfrak p_2+\frac{4480}{3}\Ezero^2\mathfrak p_3
                      -8960\Ezero^3\mathfrak p_4+O(\Ezero^4),\\
 t_2&=48\mathfrak p_2-896\Ezero \mathfrak p_3+8064\Ezero^2\mathfrak p_4+O(\Ezero^3),\\
 t_3&=\frac{640}{3}\mathfrak p_3-3840\Ezero \mathfrak p_4+O(\Ezero^2),\\
 t_4&=\frac{2240}{3}\mathfrak p_4+O(\Ezero).
 \end{aligned}
 \label{eq:app-ODE-times-expansion}
\end{equation}
At nonzero $\Ezero$, all times $t_k$ with $k\geq\ell$ can contribute to $G_0^{(\ell)}(\Ezero)$. The endpoint calculation therefore uses the full tail of the series, and~\eqref{eq:app-ODE-G0-derivatives} resums these contributions exactly.

\subsection{The Fermi level and the implicit solution}
\label{subsec:ODE-mu-solution}

At the endpoint, where $u_0(\mu)=\Ezero$, the tree equation gives the Fermi level
\begin{equation}
 \mu=\frac{\widetilde\Gamma^{\rm str}_q}{\sqrt{\Ezero}}-G_0(\Ezero)
    =\mathcal F_q(-\Ezero)-\sum_{k\geq1}t_k\Ezero^k.
 \label{eq:app-ODE-mu-endpoint}
\end{equation}
At fixed $n$, substituting~\eqref{eq:app-ODE-times} into $G_0(\Ezero)$ gives the coefficient
\begin{equation*}
 \frac{(2n+1)!!}{2^nn!}\,\mathfrak f_n(-\Ezero)^n
 \sum_{k=1}^n\frac{(-1)^k}{2k+1}\binom nk.
\end{equation*}
The finite sum can be evaluated by integrating the binomial theorem,
\begin{equation}
 \sum_{k=0}^n\frac{(-1)^k}{2k+1}\binom nk
 =\int_0^1(1-t^2)^n\,dt
 =\frac{2^n n!}{(2n+1)!!}.
 \label{eq:app-ODE-binomial-integral}
\end{equation}
It follows that
\begin{equation}
 G_0(\Ezero)=\sum_{n\geq1}\left(1-\frac{(2n+1)!!}{2^nn!}\right)\mathfrak f_n(-\Ezero)^n,
 \qquad
 \mu=\sum_{n\geq0}\frac{(2n+1)!!}{2^nn!}\,\mathfrak f_n(-\Ezero)^n.
 \label{eq:app-ODE-mu-series}
\end{equation}
This is the extension of~\eqref{eq:app-ODE-times} to $k=0$, so that $\mu=t_0$. Note that the origin in this calculation is fixed by the condition $G_0(0)=0$, since the density only determines $\xi(u)-\mu$, and the origin of $\xi$ is set by this additional convention.

The sum in~\eqref{eq:app-ODE-mu-series} is $\mathcal T_q(\Ezero)$ from~\eqref{eq:string-moment-data}, and together with the times and the pole coefficient, it gives the specialization~\eqref{eq:string-specialization}. The derivatives in that specialization act on the background parameter at fixed $q$, whereas the derivatives with respect to $u$ in $G_0(u)$ hold the background fixed. Here $\mu=t_0$ denotes the endpoint value, and the string equation keeps the form $\R=\sum_{k\geq1}t_kR_k[u]+\xi$.

There is also an integral representation, which follows from the arcsine moments,
\begin{equation}
 \mathcal T_q(\Ezero)=\frac1\pi\int_0^1
 \frac{\mathcal F_q(-\Ezero t)-2\Ezero t\mathcal F_q'(-\Ezero t)}{\sqrt{t(1-t)}}\,dt.
 \label{eq:app-ODE-M-integral}
\end{equation}
Indeed, $\pi^{-1}\int_0^1t^n/\sqrt{t(1-t)}\,dt=4^{-n}\binom{2n}{n}$, and the numerator contributes the additional factor $2n+1$. In particular,
\begin{equation}
 \begin{aligned}
 \mu&=1-24Z_1\Ezero+240\mathfrak p_2\Ezero^2-\frac{4480}{3}\mathfrak p_3\Ezero^3
                                      +6720\mathfrak p_4\Ezero^4+O(\Ezero^5),\\
 \frac{\widetilde\Gamma^{\rm str}_q}{\sqrt{\Ezero}}
 &=1-16Z_1\Ezero+128\mathfrak p_2\Ezero^2-\frac{2048}{3}\mathfrak p_3\Ezero^3
                               +\frac{8192}{3}\mathfrak p_4\Ezero^4+O(\Ezero^5).
 \end{aligned}
 \label{eq:app-ODE-mu-Gamma-expansions}
\end{equation}
Their difference gives the analytic contribution $G_0(\Ezero)$.

The times can also be resummed before inverting the tree equation,
\begin{equation}
 G_0(u)=\int_0^1\big[\mathcal T_q(\Ezero-ut^2)-\mathcal T_q(\Ezero)\big]\,dt.
 \label{eq:app-ODE-G0-integral}
\end{equation}
Expanding the integrand in $u$ and using $\int_0^1t^{2k}\,dt=(2k+1)^{-1}$ reproduces~\eqref{eq:string-specialization}, so an exact implicit specification of $u_0(\xi)$ is
\begin{equation}
 \xi=\frac{\widetilde\Gamma^{\rm str}_q}{\sqrt{u_0}}
   -\int_0^1\big[\mathcal T_q(\Ezero-u_0t^2)-\mathcal T_q(\Ezero)\big]\,dt,
 \qquad u_0(\mu)=\Ezero.
 \label{eq:app-ODE-implicit-solution}
\end{equation}
Moreover, the identity~\eqref{eq:app-ODE-binomial-integral} implies
\begin{equation*}
 \int_0^{\sqrt{\Ezero}}\mathcal T_q(\Ezero-v^2)\,dv
       =\sqrt{\Ezero}\,\mathcal F_q(-\Ezero)=\widetilde\Gamma^{\rm str}_q.
\end{equation*}
So the same inverse relation also has the useful endpoint-centered form
\begin{equation}
 \xi=\mu-\frac1{\sqrt{u_0}}
            \int_{\sqrt{\Ezero}}^{\sqrt{u_0}}
                  \mathcal T_q(\Ezero-v^2)\,dv.
 \label{eq:app-ODE-centered-solution}
\end{equation}
At $u_0=\Ezero$ the integral vanishes, which makes the endpoint manifest.

We can also obtain the inverse relation directly from the Abel transform. To do so, we apply a second Abel integral to~\eqref{eq:app-ODE-Abel} and interchange the two integrals, using
\begin{equation*}
 \int_v^u\frac{d\Ehat}{\sqrt{(\Ehat-v)(u-\Ehat)}}=\pi.
\end{equation*}
This gives
\begin{equation}
 \xi(u)=\mu-\frac1\pi\int_{\Ezero}^u
 \frac{\sqrt{\Ehat-\Ezero}\,\mathcal F_q(\Ehat-\Ezero)}{\Ehat\sqrt{u-\Ehat}}\,d\Ehat,
 \qquad u\geq \Ezero.
 \label{eq:app-ODE-direct-inversion}
\end{equation}
The two representations specify the same local branch, and the nonzero derivative $\mathcal G'(\Ezero)=1/(2\Ezero)$ ensures that the analytic inverse near $\xi=\mu$ is unique.

\subsection{Explicit endpoint derivatives}
\label{subsec:ODE-endpoint-derivation}

The resummed times make the cancellation of the pole explicit. From~\eqref{eq:app-ODE-G0-derivatives}, we have
\begin{equation}
 \begin{aligned}
 G_0'(\Ezero)&=\frac{1-\mathcal F_q(-\Ezero)}{2\Ezero},\\
 G_0''(\Ezero)&=\frac{12Z_1}{\Ezero}+\frac{3(\mathcal F_q(-\Ezero)-1)}{4\Ezero^2},\\
 G_0'''(\Ezero)&=\frac{240\mathfrak p_2}{\Ezero}-\frac{30Z_1}{\Ezero^2}
                                      +\frac{15(1-\mathcal F_q(-\Ezero))}{8\Ezero^3}.
 \end{aligned}
 \label{eq:app-ODE-G0-resummed}
\end{equation}
Since $\widetilde\Gamma^{\rm str}_q=\sqrt{\Ezero}\,\mathcal F_q(-\Ezero)$, adding the derivatives of $-\widetilde\Gamma^{\rm str}_q u^{-1/2}$, we get
\begin{equation}
 \begin{aligned}
 \mathcal G'(\Ezero)&=G_0'(\Ezero)+\frac{\mathcal F_q(-\Ezero)}{2\Ezero},\\
 \mathcal G''(\Ezero)&=G_0''(\Ezero)-\frac{3\mathcal F_q(-\Ezero)}{4\Ezero^2},\\
 \mathcal G'''(\Ezero)&=G_0'''(\Ezero)+\frac{15\mathcal F_q(-\Ezero)}{8\Ezero^3}.
 \end{aligned}
 \label{eq:app-ODE-pole-cancellation}
\end{equation}
All the explicit $\mathcal F_q(-\Ezero)$ terms cancel, and we recover~\eqref{eq:app-ODE-A123}. So, although the separate times and the pole coefficient contain infinite series, they combine in such a way that each endpoint derivative depends on only finitely many density coefficients.

We write $u_\ell=u_0^{(\ell)}(\mu)$ and $\mathcal G_\ell=\mathcal G^{(\ell)}(\Ezero)$. Differentiating $\mathcal G(u_0(\xi))+\xi=0$ once, twice and three times, we get
\begin{equation}
 \begin{aligned}
 \mathcal G_1u_1+1&=0,\\
 \mathcal G_2u_1^2+\mathcal G_1u_2&=0,\\
 \mathcal G_3u_1^3+3\mathcal G_2u_1u_2+\mathcal G_1u_3&=0.
 \end{aligned}
 \label{eq:app-ODE-chain-rule}
\end{equation}
Solving these in order, we find
\begin{equation}
 u_1=-\frac1{\mathcal G_1},\qquad
 u_2=-\frac{\mathcal G_2}{\mathcal G_1^3},\qquad
 u_3=\frac{\mathcal G_1\mathcal G_3-3\mathcal G_2^2}{\mathcal G_1^5}.
 \label{eq:app-ODE-inverse123}
\end{equation}
The first two substitutions are immediate,
\begin{equation}
 \begin{aligned}
 u_0'(\mu)&=-\left(\frac1{2\Ezero}\right)^{-1}=-2\Ezero,\\
 u_0''(\mu)&=-8\Ezero^3
          \left(\frac{12Z_1}{\Ezero}-\frac3{4\Ezero^2}\right)
             =6\Ezero-96Z_1\Ezero^2.
 \end{aligned}
 \label{eq:app-ODE-u12-explicit}
\end{equation}
For the third derivative, the two contributions to the numerator are
\begin{equation}
 \begin{aligned}
 \mathcal G_1\mathcal G_3&=\frac{120(Z_1^2-Z_2)}{\Ezero^2}
                    -\frac{15Z_1}{\Ezero^3}+\frac{15}{16\Ezero^4},\\
 3\mathcal G_2^2&=\frac{432Z_1^2}{\Ezero^2}
                    -\frac{54Z_1}{\Ezero^3}+\frac{27}{16\Ezero^4}.
 \end{aligned}
 \label{eq:app-ODE-third-numerator}
\end{equation}
Their difference is $-(312Z_1^2+120Z_2)/\Ezero^2+39Z_1/\Ezero^3-3/(4\Ezero^4)$, and multiplying by $\mathcal G_1^{-5}=32\Ezero^5$ gives
\begin{equation}
 \begin{aligned}
 u_0'''(\mu)&=-24\Ezero+1248Z_1\Ezero^2-(9984Z_1^2+3840Z_2)\Ezero^3\\
 &=-24\Ezero\big[1-52Z_1\Ezero+(416Z_1^2+160Z_2)\Ezero^2\big].
 \end{aligned}
 \label{eq:app-ODE-u3-explicit}
\end{equation}

Every higher derivative is obtained by applying $d/d\xi=-[\mathcal G'(u)]^{-1}d/du$ \emph{before} evaluating at $u=\Ezero$. At the next two orders, this gives
\begin{equation}
 \begin{aligned}
 u_0^{(4)}(\mu)&=\frac{-15\mathcal G_2^3+10\mathcal G_1\mathcal G_2\mathcal G_3-\mathcal G_1^2\mathcal G_4}{\mathcal G_1^7},\\
 u_0^{(5)}(\mu)&=\frac{-105\mathcal G_2^4+105\mathcal G_1\mathcal G_2^2\mathcal G_3
                     -10\mathcal G_1^2\mathcal G_3^2-15\mathcal G_1^2\mathcal G_2\mathcal G_4+\mathcal G_1^3\mathcal G_5}{\mathcal G_1^9}.
 \end{aligned}
 \label{eq:app-ODE-inverse45}
\end{equation}
The extra Abel coefficients are
\begin{equation}
 \begin{aligned}
 \mathcal G_4&=\frac{105}{16}\left(
       \frac{\mathfrak f_3}{\Ezero}-\frac{\mathfrak f_2}{\Ezero^2}
       +\frac{\mathfrak f_1}{\Ezero^3}-\frac1{\Ezero^4}\right),\\
 \mathcal G_5&=\frac{945}{32}\left(
       \frac{\mathfrak f_4}{\Ezero}-\frac{\mathfrak f_3}{\Ezero^2}
       +\frac{\mathfrak f_2}{\Ezero^3}-\frac{\mathfrak f_1}{\Ezero^4}
       +\frac1{\Ezero^5}\right).
 \end{aligned}
 \label{eq:app-ODE-A45}
\end{equation}
Substituting~\eqref{eq:app-ODE-f-list} into~\eqref{eq:app-ODE-inverse45}, we obtain the endpoint data used in the genus-two check,
\begin{equation}
 \begin{aligned}
 u_0^{(4)}(\mu)=-40\Ezero\Big[&-3+336Z_1\Ezero
 -(7584Z_1^2+2208Z_2)\Ezero^2\\
 &+(40448Z_1^3+35328Z_1Z_2+7168Z_3)\Ezero^3\Big].
 \end{aligned}
 \label{eq:app-u4}
\end{equation}
\begin{equation}
 \begin{aligned}
 u_0^{(5)}(\mu)=-720\Ezero\Big[&1-200Z_1\Ezero
 +(8912Z_1^2+2096Z_2)\Ezero^2\\
 &-(124160Z_1^3+87296Z_1Z_2+14336Z_3)\Ezero^3\\
 &+(496640Z_1^4+698368Z_1^2Z_2+229376Z_1Z_3
       +80896Z_2^2+43008Z_4)\Ezero^4\Big].
 \end{aligned}
 \label{eq:app-u5}
\end{equation}
Here $u_0^{(4)}$ and $u_0^{(5)}$ denote the fourth and fifth derivatives of the tree-level potential, so the superscripts are derivative orders, while powers of $\hbar$ organize the quantum expansion. Inserting these results into the universal genus-two resolvent primitive $\widehat Q_2$ (defined by $\widehat R_2=d\widehat Q_2/d\xi$ in Appendix~\ref{subsec:ODE-resolvent}, and given explicitly in~\cite{Ahmed:2025extended}), we obtain~\eqref{eq:app-V21}.

For completeness, we also give the first terms of the local solution,
\begin{equation}
 \begin{aligned}
 u_0(\xi)={}&\Ezero-2\Ezero(\xi-\mu)+(3\Ezero-48Z_1\Ezero^2)(\xi-\mu)^2\\
 &-4\Ezero\big[1-52Z_1\Ezero+(416Z_1^2+160Z_2)\Ezero^2\big](\xi-\mu)^3\\
 &+\frac{u_0^{(4)}(\mu)}{24}(\xi-\mu)^4
   +\frac{u_0^{(5)}(\mu)}{120}(\xi-\mu)^5+O\big((\xi-\mu)^6\big).
 \end{aligned}
 \label{eq:app-ODE-u-series}
\end{equation}
An all-order expression is given by Lagrange inversion,
\begin{equation}
 u_0(\xi)=\Ezero+\sum_{n\geq1}\frac{(\mu-\xi)^n}{n!}
 \left.\frac{d^{n-1}}{dv^{n-1}}
 \left[\frac{v}{\mathcal G(\Ezero+v)-\mathcal G(\Ezero)}\right]^n
 \right|_{v=0}.
 \label{eq:app-ODE-Lagrange}
\end{equation}
The quotient is regular at $v=0$ because $\mathcal G_1\neq0$, so the density specifies both the implicit solution and every coefficient of its endpoint expansion.

\subsection{Checks and the gapless branch}
\label{subsec:ODE-checks-gapless}

The moment representation gives a useful check that is independent of the threshold expansion. Indeed, the specialization~\eqref{eq:string-specialization} implies
\begin{equation}
 (2u\partial_u+1)G_0(u)+\mu=\mathcal T_q(\Ezero-u).
 \label{eq:app-ODE-M-identity}
\end{equation}
The same differential operator annihilates the $u^{-1/2}$ term, so with $\xi(u)=\widetilde\Gamma^{\rm str}_q/\sqrt u-G_0(u)$, it follows that
\begin{equation}
 2u\xi'(u)+\xi(u)=\mu-\mathcal T_q(\Ezero-u).
 \label{eq:app-ODE-x-identity}
\end{equation}
At $u=\Ezero$, using $\xi(\Ezero)=\mu$ and $\mathcal T_q(0)=1$, this gives $\xi'(\Ezero)=-1/(2\Ezero)$, and the next two differentiated identities are
\begin{equation*}
 2u\xi''+3\xi'=\mathcal T_q'(\Ezero-u),\qquad
 2u\xi'''+5\xi''=-\mathcal T_q''(\Ezero-u).
\end{equation*}
Since $\mathcal T_q'(0)=-24Z_1$ and $\mathcal T_q''(0)=480\mathfrak p_2$, they reproduce~\eqref{eq:app-ODE-A123}, with $\xi=-\mathcal G$.

At $q=0$, all the $Z_m$ vanish and $\mathcal F_q=\mathcal T_q=1$, so the exact data become $t_{k\geq1}=0$, $\mu=1$ and $\widetilde\Gamma^{\rm str}_q=\sqrt{\Ezero}$. On the branch with positive $\xi$, we then have
\begin{equation}
 u_0(\xi)=\frac{\Ezero}{\xi^2},\qquad
 u_0^{(\ell)}(1)=(-1)^\ell(\ell+1)!\Ezero.
 \label{eq:app-ODE-qzero}
\end{equation}
This directly checks the terms linear in the gap in all five derivatives.

At zero gap and fixed $q$, the parameter formulas instead give
\begin{equation}
 \widetilde\Gamma^{\rm str}_q=0,\qquad \mu=1,\qquad
 t_k(0,q)=\frac{(2k-1)!!}{2^k k!}\mathfrak f_k
         =\frac1{4^k}\binom{2k}{k}\mathfrak f_k.
 \label{eq:app-ODE-gapless-data}
\end{equation}
The classical string equation then becomes $u_0(G_0(u_0)+\xi)^2=0$, and the branch relevant at the Fermi surface is $u_0(\xi)=0$ for $0<\xi\leq\mu$. Its constant segment must be included in the density,
\begin{equation}
 \rho^q_0(\Ehat)=\frac1{2\pi\hbar}
 \left[\frac\mu{\sqrt{\Ehat}}+\int_0^{\Ehat}\frac{G_0'(u)}{\sqrt{\Ehat-u}}\,du\right]
 =\frac{\mathcal F_q(\Ehat)}{2\pi\hbar\sqrt{\Ehat}}.
 \label{eq:app-ODE-gapless-density}
\end{equation}
The beta integral verifies this equality term by term with~\eqref{eq:app-ODE-gapless-data}, and this direct calculation replaces the singular intermediate quantities $\mathcal G_\ell$ when $\Ezero=0$. The vanishing of the leading classical coupling is compatible with $\widetilde\Gamma_q^{\rm str}=\hbar\Gamma_q$ for any fixed $\Gamma_q$. The gapless volume sector considered here uses $\Gamma_q=0$.

\subsection{Normalization and the undeformed specialization}
\label{subsec:ODE-continuum-data}

We first relate the normalizations. The unscaled coefficient of the string equation and the continuum-normalized local pole coupling obey
\begin{equation}
 \widetilde\Gamma_q^{\rm str}(\Ezero)
 =2\pi(1-q)\,\widetilde\Gamma_q^{\rm loc}(\Ezero).
 \label{eq:ODE-Gamma-normalization}
\end{equation}
The normalization of~\eqref{eq:app-ODE-density} is the one that we used for the endpoint derivatives, and its continuum disk density is $2\pi$ times the convention underlying the undeformed times in~\cite{Johnson:2023ofr,Ahmed:2025extended,Johnson:2026boundary}. When passing between these conventions, all the string data must be changed together.

As before, let $\epsilon=1-q$ and let $E_0$ be a fixed continuum threshold, so that the unscaled threshold is $\Ezero=\epsilon^2E_0$, and define
\begin{equation}
 \overline{\mathcal F}_q(v)=\mathcal F_q(\epsilon^2v),\qquad
 \overline{\mathcal T}_q(E_0)
       =\mathcal T_q(\epsilon^2E_0).
 \label{eq:app-ODE-scaled-generators}
\end{equation}
The density in the continuum convention of~\cite{Johnson:2026boundary} is
\begin{equation}
 \rho^{\rm J}_{0,q}(E_c;E_0)
 =\frac\epsilon{2\pi}
     \rho^q_0(\epsilon^2E_c;\epsilon^2E_0,q)
 =\frac1{4\pi^2\hbar}
   \frac{\sqrt{E_c-E_0}}{E_c}
   \overline{\mathcal F}_q(E_c-E_0).
 \label{eq:app-ODE-J-density}
\end{equation}
The rescaling of the energy is accompanied by a change in the normalization of the density. At tree level, $u=\epsilon^2u^{\rm J}$ and $\xi=2\pi \xi^{\rm J}$ give
\begin{equation}
 \begin{aligned}
 t_k^{\rm J}(E_0,q)
       &=\frac{\epsilon^{2k}}{2\pi}t_k(\epsilon^2E_0,q),\\
 \mu^{\rm J}(E_0,q)
       &=\frac1{2\pi}\mu(\epsilon^2E_0,q),\\
 \widetilde\Gamma^{\rm J}(E_0,q)
       &=\frac1{2\pi\epsilon}\widetilde\Gamma^{\rm str}_q(\epsilon^2E_0).
 \end{aligned}
 \label{eq:app-ODE-normalization-map}
\end{equation}
These are relations between tree-level specializations, and a change of variables in the full quantum differential equation also rescales its derivative-counting parameter. In particular, the coupling defined in~\eqref{eq:Gamma-top} obeys
\begin{equation}
 \widetilde\Gamma_q^{\rm loc}(\Ezero)
       =\frac{\widetilde\Gamma^{\rm str}_q(\Ezero)}{2\pi(1-q)},\qquad
 \widetilde\Gamma^{\rm J}(E_0,q)
       =\widetilde\Gamma_q^{\rm loc}(\epsilon^2E_0).
 \label{eq:app-ODE-pole-conventions}
\end{equation}
This is the difference in normalization between Sections~\ref{sec:ODE} and~\ref{sec:BPS}.

The product bound~\eqref{eq:syk-product-bound} gives the compact-uniform convergence of the rescaled entire factor,
\begin{equation}
 \overline{\mathcal F}_q(v)\longrightarrow
 \prod_{n\geq1}\left(1+\frac{4v}{n^2}\right)
 =\frac{\sinh(2\pi\sqrt v)}{2\pi\sqrt v},\qquad
 \epsilon^{2n}\mathfrak f_n\longrightarrow\frac{(4\pi^2)^n}{(2n+1)!}.
 \label{eq:app-ODE-classical-F}
\end{equation}
Here the value at $v=0$ is defined by continuity. Inserting these coefficients into the moment transform, we get the Bessel function $J_0$ of the first kind,
\begin{equation}
 \overline{\mathcal T}_q(E_0)
 \longrightarrow\sum_{n\geq0}\frac{(-\pi^2E_0)^n}{(n!)^2}
 =J_0(2\pi\sqrt{E_0}).
 \label{eq:app-ODE-classical-M}
\end{equation}
The interchange of the limit and the moment transform, including its derivatives, follows from~\eqref{eq:app-ODE-M-integral} and the compact-uniform convergence of the product. Differentiating the Bessel series gives
\begin{equation*}
 \frac{d^k}{dE_0^k}J_0(2\pi\sqrt{E_0})
 =(-1)^k\pi^kE_0^{-k/2}
    J_k(2\pi\sqrt{E_0}).
\end{equation*}
So the rescaled string-equation data and their limits are
\begin{equation}
 \begin{aligned}
 t_k^{\rm J}&=\frac{(-1)^k}{2\pi(2k+1)k!}
        \frac{d^k\overline{\mathcal T}_q(E_0)}{dE_0^k}
 \ \longrightarrow\
 \frac{\pi^{k-1}J_k(2\pi\sqrt{E_0})}
      {2(2k+1)k!E_0^{k/2}},\\[3pt]
 \mu^{\rm J}=t_0^{\rm J}&=\frac{\overline{\mathcal T}_q(E_0)}{2\pi}
 \ \longrightarrow\ \frac{J_0(2\pi\sqrt{E_0})}{2\pi},\\[3pt]
 \widetilde\Gamma^{\rm J}&=\frac{\sqrt{E_0}}{2\pi}
                 \overline{\mathcal F}_q(-E_0)
 \ \longrightarrow\ \frac{\sin(2\pi\sqrt{E_0})}{4\pi^2}.
 \end{aligned}
 \label{eq:app-ODE-J-specialization}
\end{equation}
All the limits in this display are $q\to1^-$. The limiting pole coupling is the Turiaci--Witten result~\cite{Turiaci:2023jfa}, and the complete specialization of the string equation agrees with~\cite[Section V.1]{Johnson:2026boundary}. The Bessel quotients have regular limits at $E_0=0$.

Finally, the endpoint derivatives transform consistently,
\begin{equation}
 (u_0^{\rm J})^{(\ell)}(\mu^{\rm J})
 =\frac{(2\pi)^\ell}{\epsilon^2}
       u_0^{(\ell)}(\mu;\epsilon^2E_0,q).
 \label{eq:app-ODE-J-derivatives}
\end{equation}
For example, $\epsilon^2Z_1\to\pi^2/24$ and $\epsilon^4Z_2\to\pi^4/1440$ give
\begin{equation}
 \begin{aligned}
 (u_0^{\rm J})'(\mu^{\rm J})&=-4\pi E_0,\\
 (u_0^{\rm J})''(\mu^{\rm J})&\longrightarrow
       24\pi^2E_0-16\pi^4E_0^2,\\
 (u_0^{\rm J})'''(\mu^{\rm J})&\longrightarrow
       -192\pi^3E_0+416\pi^5E_0^2
                              -160\pi^7E_0^3.
 \end{aligned}
 \label{eq:app-ODE-J-derivative-check}
\end{equation}
These are the endpoint values of the undeformed string-equation calculation~\cite{Ahmed:2025extended}. The agreement requires rescaling the density, the times and the pole coefficient together, which also preserves the normalization of both the derivatives and the volumes.

%-------------------------------------------------
\subsection{Resolvent and one-boundary amplitudes}\label{subsec:ODE-resolvent}

For the auxiliary Schr\"odinger operator
\begin{equation}
 {\cal H}=-\hbar^2\partial_\xi^2+u(\xi),
\end{equation}
we use the signed, $\hbar$-rescaled diagonal resolvent
\begin{equation*}
 \widehat R(\xi,\Ehat)
 =-\hbar\left\langle\xi\left|(\mathcal H-\Ehat)^{-1}\right|\xi\right\rangle
 =\hbar\left\langle\xi\left|(\Ehat-\mathcal H)^{-1}\right|\xi\right\rangle.
\end{equation*}
With position states normalized by $\langle\xi|\xi'\rangle=\delta(\xi-\xi')$, this convention fixes the negative leading term below and gives the Gel'fand--Dikii equation~\cite{Gelfand:1975rn}
\begin{equation}
 4(u-\Ehat)\widehat R^2-2\hbar^2\widehat R\widehat R''
 +\hbar^2(\widehat R')^2=1.
\label{eq:GD}
\end{equation}
Writing
\begin{equation}
 \widehat R(\xi,\Ehat)=\sum_{g\geq0}\hbar^{2g}\widehat R_g(\xi,\Ehat),
 \qquad
 \widehat R_0=-\frac{1}{2\sqrt{u_0-\Ehat}},
\end{equation}
we expand~\eqref{eq:GD} and use the string equation to eliminate the higher corrections to $u$. For $g\geq1$, this gives
\begin{equation}
 \widehat R_g=\frac{d}{d\xi}\widehat Q_g.
\end{equation}
The double-scaled prescription keeps the endpoint value at $\xi=\mu$~\cite{Ahmed:2025extended,AhmedNaskar:2026resolvent}. At genus one, the resolvent primitive is
\begin{equation}
 \widehat R_1=\frac{d}{d\xi}\widehat Q_1,
 \qquad
 \widehat Q_1=
 \frac{u_0'}{32(u_0-\Ehat)^{5/2}}-\frac{u_0''}{48u_0'(u_0-\Ehat)^{3/2}}.
\label{eq:Q1}
\end{equation}
The endpoint prescription then gives the one-boundary correlator,
\begin{equation}
 W^{\rm loc}_{g,1}(w)=-2w\widehat Q_g(\mu,\Ehat),
 \qquad \Ehat=\Ezero-w^2.
\label{eq:WQ}
\end{equation}
In particular, the genus-one endpoint data give
\begin{equation}
 W^{\rm loc}_{1,1}(w)=\frac{\Ezero}{8w^4}+\frac{16Z_1\Ezero-1}{8w^2}.
\label{eq:ODE-W11-check}
\end{equation}
The inverse Laplace transform is particularly simple, because $u_0(\mu)-\Ehat=w^2$. In our convention, we have
\begin{equation}
 w^{-2k}\quad\longleftrightarrow\quad
 \frac{b^{2k-2}}{(2k-1)!}
\label{eq:laplace-dictionary}
\end{equation}
after removing the overall factor of $b$ in the volume transform. Using~\eqref{eq:app-ODE-u12-explicit} gives~\eqref{eq:ODE-W11-check}, and hence~\eqref{eq:V11-local}, and substituting the five endpoint derivatives into the universal $\widehat Q_2$ likewise reproduces the full $(2,1)$ volume in Appendix~\ref{app:localvolumes}.

\subsection{Loop insertion and several boundaries}\label{sec:ODE-adding-boundaries}

Further boundaries are generated by loop insertion before evaluating at $\xi=\mu$~\cite{Lowenstein:2024fji,Lowenstein:2024gvz,Johnson:2026universal,Johnson:2026boundary,AhmedNaskar:2026resolvent}. In KdV notation, we have $t_0=\xi$, with the background endpoint value $t_0=\mu$.

To make this explicit, let us write the free energy and the potential as
\begin{equation}
 F=\sum_{g=0}^{\infty}\hbar^{2g-2}F_g+\cdots,
 \qquad
 u(\xi)=2\hbar^2\frac{\partial^2F}{\partial \xi^2}.
\label{eq:ODE-free-energy}
\end{equation}
On the decreasing real branch, $u_0^{\prime}<0$, so we choose the real genus-one free energy below. With the loop operator $\delta_{\Ehat}^{(u_0)}$ defined in~\eqref{eq:ODE-loop-action}, we have
\begin{equation}
 F_1=-\frac{1}{24}\log(-u_0'),
 \qquad
 \widehat Q_1(\xi,\Ehat)=\delta_{\Ehat}^{(u_0)}F_1,
\label{eq:ODE-F1-Q1}
\end{equation}
which reproduces~\eqref{eq:Q1}. The choice of logarithm differs from $\log u_0'$ by a constant phase, so the derivatives and stable amplitudes are unchanged.

We should be careful with the normalization of the time derivatives. With $R_k[u]=u^k+O(\hbar^2)$, the KdV flows are~\cite[Section II]{Johnson:2026boundary}
\begin{equation}
 (k+1)\frac{\partial u}{\partial t_k}=R'_{k+1}[u],
 \qquad \frac{\partial u_0}{\partial t_k}=u_0^k u_0'.
 \label{eq:ODE-normalized-flows}
\end{equation}
The second relation also follows by differentiating~\eqref{eq:app-ODE-tree} at fixed $\widetilde\Gamma_q^{\rm str}$. In these conventions, the energy-space loop operator is
\begin{equation}
 \delta_{\Ehat}=\sum_{k\geq0}\frac{(-1)^k(2k+1)!!}{2^{k+1}k!}\,
 \frac{1}{(-\Ehat)^{k+3/2}}\,\frac{\partial}{\partial t_k}.
\label{eq:ODE-loop-times}
\end{equation}
We write $\delta_{\Ehat}^{(u_0)}$ for its action on functions of the classical potential $u_0$, where the dispersionless flow in~\eqref{eq:ODE-normalized-flows} applies. Its coefficients are the Taylor coefficients of $\tfrac12(1+v)^{-3/2}$, so summing with $v=u_0/(-\Ehat)$, we find
\begin{equation}
 \delta_{\Ehat}^{(u_0)}u_0(\xi)=
 \frac{1}{2}\frac{u_0'(\xi)}{[u_0(\xi)-\Ehat]^{3/2}}.
\label{eq:ODE-loop-action}
\end{equation}
The operator commutes with $\xi$-derivatives, so that
\begin{equation}
 \delta_{\Ehat}^{(u_0)}u_0^{(m)}(\xi)
 =\frac{d^m}{d\xi^m}
 \left\{\frac{u_0'(\xi)}{2[u_0(\xi)-\Ehat]^{3/2}}\right\}.
\label{eq:ODE-loop-prolongation}
\end{equation}
Consequently, it acts by the ordinary chain rule on every differential polynomial in $u_0$, and the connected energy-space correlators are generated by repeated insertions,
\begin{equation}
 \widetilde W_{g,n}(\Ehat_1,\ldots,\Ehat_n)=
 \left.(-1)^{n-1}
 \delta_{\Ehat_n}^{(u_0)}\cdots\delta_{\Ehat_1}^{(u_0)}
 F_g[u_0',u_0'',\ldots]\right|_{\xi=\mu},
 \qquad g\geq1,
\label{eq:ODE-Wtilde}
\end{equation}
and the correlators in the local uniformizing variables $w_i^2=\Ezero-\Ehat_i$ are
\begin{equation}
 W^{\rm loc}_{g,n}(w_1,\ldots,w_n)
 =\prod_{i=1}^{n}(-2w_i)\,
 \widetilde W_{g,n}(\Ehat_1,\ldots,\Ehat_n).
\label{eq:ODE-Wgn}
\end{equation}
The insertions commute, and the stable sphere amplitudes follow from the same boundary-addition prescription applied to the genus-zero base data.

For the present curve, with $u_k\equiv u_0^{(k)}(\mu)$, this gives
\begin{equation}
 W^{\rm loc}_{0,3}(w_1,w_2,w_3)
 =-\frac{u_1}{2w_1^2w_2^2w_3^2},
\label{eq:ODE-W03}
\end{equation}
\begin{equation}
 \begin{split}
 W^{\rm loc}_{0,4}(w_1,w_2,w_3,w_4)={}&
 -\frac{u_2}{2w_1^2w_2^2w_3^2w_4^2}\\
 &+\frac{3u_1^2}{4}
 \frac{\displaystyle\sum_{i=1}^{4}\prod_{j\neq i}w_j^2}
 {w_1^4w_2^4w_3^4w_4^4},
 \end{split}
\label{eq:ODE-W04}
\end{equation}
and
\begin{equation}
\begin{split}
 W^{\rm loc}_{1,2}(w_1,w_2)=\frac{1}{w_1^2w_2^2}\Bigg[&
 \frac{u_3}{24u_1}-\frac{u_2^2}{24u_1^2}
 -\frac{u_2}{8}\left(\frac{1}{w_1^2}+\frac{1}{w_2^2}\right)\\
 &+\frac{5u_1^2}{32}\left(\frac{1}{w_1^4}+\frac{1}{w_2^4}\right)
 +\frac{3u_1^2}{32}\frac{1}{w_1^2w_2^2}\Bigg].
\end{split}
\label{eq:ODE-W12}
\end{equation}
Substituting~\eqref{eq:app-ODE-u12-explicit}--\eqref{eq:app-ODE-u3-explicit} and applying~\eqref{eq:laplace-dictionary}, we reproduce the local $V^q_{0,3}$, $V^q_{0,4}$ and $V^q_{1,2}$ in Appendix~\ref{app:localvolumes}. Together with the genus-two one-boundary result, these are independent checks of the lower local recursion.

The hard-edge string equation gives the finite $\Ezero\to0$ limits of these ratios and correlators, again in agreement with the recursion.

%-------------------------------------------------
\section{Theta-function form of the pole couplings}
\label{app:theta}
%-------------------------------------------------

Both couplings in Section~\ref{sec:BPS} are values of the function $\mathfrak g_q$ in~\eqref{eq:syk-coupling-product}. Using~\eqref{eq:syk-coupling-theta}, with $q=\e^{-\lambda}$ and $\tilde q=\e^{-\pi^2/\lambda}$, we can write
\begin{equation}
 \mathfrak g_q(v)=
 \frac{\sqrt{\pi/\lambda}\,q^{-v^2-1/4}}
 {8\pi(1-q)(q^2;q^2)_\infty^3}
 \vartheta_1(\pi v,\tilde q),
\end{equation}
and
\begin{equation}
 \begin{aligned}
 \widetilde\Gamma_q^{\rm glob}(\Ezero)&=\mathfrak g_q(|r|),
 & |r|&=\frac{2}{\lambda}\operatorname{arsinh}\sqrt{\Ezero},\\
 \widetilde\Gamma_q^{\rm loc}(\Ezero)&=\mathfrak g_q\big(\lambda^{-1}\operatorname{arsinh}(2\sqrt{\Ezero})\big).
 \end{aligned}
\end{equation}
The first line belongs to the global curve and includes the factor $1+\Ezero$ in~\eqref{eq:Gamma-global}, while the second belongs to the lower local curve~\eqref{eq:Gamma-top}. We derive these expressions from the product identity~\eqref{eq:syk-coupling-theta} in Appendix~\ref{app:SYK-continuum}. Their arguments differ at finite gap, but they agree in the correlated continuum limit, which gives~\eqref{eq:Gamma-limit}. Note that only the global coupling enters the microscopic BPS relation~\eqref{eq:BPS-residue} directly.

%-------------------------------------------------
\Needspace{12\baselineskip}
\section{Details of the intersection formula}
\label{app:intersection-proofs}

In this appendix, we prove the intersection formula~\eqref{eq:it-volume-class} and the boundary-addition identity~\eqref{eq:it-boundary-addition}, using the notation of Section~\ref{sec:intersection}.

\subsection{From the local curve to the intersection class}
\label{app:intersection-translation}

At nonzero gap, the one-branch intersection theorem applies~\cite{Eynard:2011intersection,DuninBarkowski:2012givental}. Since $x_{\rm loc}=w^2/2$ and $B$ has no regular part, the normalized coefficients of $y_{\rm loc}$ give the translation insertion~\cite{ChekhovNorbury:2019}
\begin{equation}
 \mathcal Y(v)=-\sum_{m\geq1}a_m(\Ezero,q)v^{m+1}.
 \label{eq:it-descendant-translation}
\end{equation}
The translation adds auxiliary markings carrying $\mathcal Y(\psi)$, and forgetting them converts descendants into $\kappa$-classes~\cite{MulaseSafnuk:2006}, through
\begin{equation}
 (\mathsf f_\ell)_*\prod_{j=1}^\ell\psi_{n+j}^{m_j+1}
 =
 \sum_{\mathfrak s\in\mathfrak S_\ell}\;
 \prod_{\substack{\text{cycles }(j_1\cdots j_l)\\ \text{of }\mathfrak s}}
 \kappa_{m_{j_1}+\cdots+m_{j_l}},\qquad m_j\geq1,
 \label{eq:it-cycle-pushforward}
\end{equation}
where $\mathsf f_\ell$ forgets the last $\ell$ markings~\cite{KaufmannManinZagier:1996,Norbury:2020enumerative}. Summing over the cycles, we get
\begin{equation}
 \sum_{\ell\geq0}\frac1{\ell!}(\mathsf f_\ell)_*
       \prod_{j=1}^\ell \mathcal Y(\psi_{n+j})
 =
 \exp\!\left[
  \sum_{m\geq1}
   [v^m]\!\left\{-\log\!\left(1-\frac{\mathcal Y(v)}v\right)\right\}\kappa_m
 \right]
 =
 \e^{\sum_{m\geq1}\sigma_m\kappa_m}.
 \label{eq:it-translation-proof}
\end{equation}
Here the $\ell=0$ term is $1$, and $\mathcal Y(v)=O(v^2)$ makes the sum finite in each degree.

Scaling the leading coefficient of $y$ from $1$ to $1/\Ezero$ multiplies the stable correlator by $\Ezero^{2g-2+n}$, so the translation theorem gives
\begin{equation}
 \omega^{\rm loc}_{g,n}(\mathbf w)=
 \sum_{\mathbf k\geq0}
 \left(\int_{\overline{\mathcal M}_{g,n}}
      \Omega^q_{g,n}(\Ezero)\prod_i\psi_i^{k_i}\right)
 \prod_i\frac{(2k_i+1)!!\,dw_i}{w_i^{2k_i+2}}.
 \label{eq:it-descendant-correlator}
\end{equation}
The boundary normalization then follows from the elementary Laplace transform
\[
 \int_0^\infty b\,\e^{-bw}\frac{b^{2k}}{2^k k!}\,db
 =\frac{(2k+1)!!}{w^{2k+2}},
\]
which proves~\eqref{eq:it-volume-class} in the convention~\eqref{eq:local-volume-def}, for all stable marked topologies.

\subsection{Adding and forgetting a marked point}
\label{app:intersection-boundary}

Let us take $\Ezero\ne0$ and a stable $(g,n)$, and let $\mathsf f:\overline{\mathcal M}_{g,n+1}\to\overline{\mathcal M}_{g,n}$ forget the last marking. Let $\mathcal D_i=\mathcal D_{i,n+1}$ be the divisor on which that marking and marking $i$ lie on a three-pointed rational tail. On $\overline{\mathcal M}_{g,n+1}$, with $\psi=\psi_{n+1}$, we have the standard relations
\begin{equation}
 \begin{gathered}
 \kappa_m^{(n+1)}=\mathsf f^*\kappa_m^{(n)}+\psi^m,\qquad
 \psi_i^{(n+1)}=\mathsf f^*\psi_i^{(n)}+\mathcal D_i,\\
 \mathcal D_i\psi=0,\qquad \mathcal D_i^2=-\mathcal D_i\mathsf f^*\psi_i,\qquad
 \mathcal D_i\mathcal D_j=0\quad(i\neq j).
 \end{gathered}
 \label{eq:it-forgetful-relations}
\end{equation}
Consequently, $\Omega^q_{g,n+1}=\Ezero\mathsf f^*\Omega^q_{g,n}\,\e^{\sum_{m\geq1}\sigma_m\psi^m}$. The term without a divisor pushes forward using $\mathsf f_*\psi^{m+1}=\kappa_m$, which gives the second line of~\eqref{eq:it-boundary-addition}. For a divisor term, we use
\begin{equation}
 \e^{\frac12b_i^2(\mathsf f^*\psi_i+\mathcal D_i)}
 =\e^{\frac12b_i^2\mathsf f^*\psi_i}
   +\mathcal D_i\,\frac{\e^{\frac12b_i^2\mathsf f^*\psi_i}-1}{\mathsf f^*\psi_i}.
 \label{eq:it-divisor-resummation}
\end{equation}
The quotient is a power series, equal to $\int_0^{b_i^2/2}\e^{v\mathsf f^*\psi_i}\,dv$. Since $\psi$ vanishes on the tail, the new-boundary insertion restricts to $1$, and $\mathsf f_*\mathcal D_i=1$ produces the first line. In particular, the factor $\Ezero$ is fixed by $\Omega^q_{g,n+1}|_{\mathcal D_i}=\Ezero\Omega^q_{g,n}$.

%-------------------------------------------------
\begingroup
% Keep each reference on a single page.
\raggedbottom \interlinepenalty=10000
\bibliographystyle{JHEP}
\bibliography{main}
\endgroup
\end{document}